\newcommand{\vertiii}[1]{{\left\vert\kern-0.25ex\left\vert\kern-0.25ex\left\vert #1 
		\right\vert\kern-0.25ex\right\vert\kern-0.25ex\right\vert}}
\newcommand{\C}{\mathbb{C}}
\newcommand{\R}{\mathbb{R}}
\newcommand{\hil}{\mathcal H}
\newcommand{\hi}{\mathcal{H}}
\newcommand{\al}{\mathcal{A}}
\newcommand{\M}{\mathcal{M}}
\theoremstyle{plain}
\newtheorem{thm}{Theorem}
\newtheorem*{thm*}{Theorem}
\newtheorem{cor}[thm]{Corollary}
\newtheorem*{cor*}{Corollary}
\newtheorem{lem}[thm]{Lemma} 
\newtheorem*{lem2a*}{Lemma 2A} 
\newtheorem*{lem2b*}{Lemma 2B} 
\newtheorem*{lem2c*}{Lemma 2C} 
\newtheorem{prop}[thm]{Proposition}
\newtheorem*{claim*}{Claim} 
\theoremstyle{definition}
\newtheorem{defn}[thm]{Definition}
\newtheorem*{defn*}{Definition}
\theoremstyle{remark}
\newtheorem{remark}[thm]{Remark}
\title{Modular Nuclearity and  Entanglement Measures}
\author{{Lorenzo Panebianco}${}^{1}$\footnote{{\tt panebianco@mat.uniroma1.it} , ${}^\triangle$ {\tt wegener@mat.uniroma2.it} }, { Benedikt Wegener}${}^{2,3,\triangle}$.
}
\date{\small{
		${}^{1}$Dipartimento di Matematica,  Università di Roma “La Sapienza” \\
		Piazzale Aldo Moro 5, 00185–Roma, Italy \\
		${}^{2}$ Marie Sklodowska-Curie fellow of the Istituto Nazionale di Alta Matematica\\
		${}^{3}$ Dipartimento di Matematica, Universit\`a di Roma “Tor Vergata” \\
		Via della Ricerca Scientifica 1, I-00133 Roma, Italy}\\
}
\begin{document}

	\maketitle

	\begin{abstract}
		
		In the framework of Algebraic Quantum Field Theory,  several operator algebraic notions of entanglement entropy can be associated to any couple of causally disjoint and distant spacetime regions $\mathcal{S}_A$  and $\mathcal{S}_B$. One of them, known as canonical entanglement entropy, is defined as the von Neumann entropy on some canonical intermediate type I factor. In this work  we show the canonical entanglement entropy of the vacuum state to be finite on a wide family of conformal nets including the $U(1)$-current model and the $SU(n)$-loop group models.  More in general, such a finiteness property is expected to  rely on some nuclearity condition of the system.  To support this conjecture, we show that the mutual information is finite in any local QFT verifying a  modular $p$-nuclearity condition for some $0 < p <1$.  A similar result is proved for another entanglement entropy introduced in this work.  We conclude with some personal considerations on $1+1$-dimensional integrable models with factorizing S-matrices and remarks for future works in this direction. 
	\end{abstract}
	
	{\bf Keywords:} Entanglement, relative entropy, nuclearity
	
	{\bf MSC2020:} primary 81T05, 81T40
	
	
	\section{Introduction} \label{sec:1}

	In classical information theory, a standard notion is that of Shannon entropy, a quantity that measures  the amount of information carried by a given state of a system. Its quantum mechanical analogue was defined by and named after von Neumann. In this context, the von Neumann entropy is connected to the quantum phenomenon of entanglement. Entanglement occurs on a composite system when a state is not the product of two independent states on the subsystems. Entanglement has been investigated profoundly as a means of probing the foundations of quantum mechanics (as in the EPR paradox and Bell's inequalities) as well as a resource for quantum information theory. On type I factors, the  von Neumann entropy of a state $\varphi$ with density matrix $\rho$ is
	\[
	S(\varphi)=-\text{tr}\, \rho\log \rho \,.
	\]

	However, in the axiomatic approach to Quantum Field Theory (QFT) utilizing Haag-Kastler nets one can show under very general conditions that the algebras of observables are type III von Neumann algebras \cite{fredenhagen1985modular}, where no density matrices exist and a different entropy-type functional is required. The role of entanglement in  QFT is more recent and increasingly important \cite{longo2021neumann}. It appears in relations with several primary research topics in theoretical physics as area theorems \cite{hollands2018entanglement}, c-theorems \cite{casini2007c} and quantum null energy inequalities \cite{morinelli2021modular, panebianco2021loop}. \\ 
	
	In the study of entropy in local QFT, nuclearity conditions play an important role \cite{hollands2018entanglement,narnhofer1994entropy,otani2018toward}. These conditions are predicated on the compactness criterion proposed by Haag and Swieca \cite{haag1965does}. This criterion is an abstraction of the insight that a QFT model must have a bound on the number of its local  degrees of freedom to show regular thermodynamical behaviour. Based on similar heuristic arguments, Buchholz and Wichmann strengthened this assumption and suggested the first nuclearity condition \cite{buchholz1986causal}, which nowadays is known as \emph{energy nuclearity condition}. Explicitly, let  $\mathcal{O} \mapsto \mathcal{A}(\mathcal{O}) \subseteq B(\hil)$ be some local Haag-Kastler net describing a local QFT on the Minkowski space. Denote by $\Omega$ the vacuum vector and by $\omega$  the corresponding vacuum state. One says that the energy nuclearity condition holds if  
	\begin{equation} \label{eq:enc0}
		\Theta_{\beta, \mathcal{O}} \colon \mathcal{A}(\mathcal{O}) \to \hil \,, \quad 	\Theta_{\beta, \mathcal{O}} (a) = e^{- \beta H} a \Omega \,,
	\end{equation}
	is nuclear for any bounded region $\mathcal{O}$ and any inverse temperature $\beta >0$, where $H = P_0$ denotes the Hamiltonian with respect to the time direction $x_0$.  The nuclear norm of  \eqref{eq:enc0} can be interpreted as the partition function of the restricted system at some fixed inverse temperature, hence it is natural to expect  such nuclear norms to measure  the entropy of the state. At a later time, a related nuclearity condition has been found by use of  modular theory \cite{buchholz1990nuclear, buchholz1990nuclear2}. According to this second nuclearity condition, one considers an inclusion $\mathcal{O} \subset \widetilde{\mathcal{O}}$ of spacetime regions and requires the map
	\begin{equation}\label{eq:mnc0}
		\Xi \colon \mathcal{A}(\mathcal{O}) \to \hil \,, \quad \Xi(x) = \Delta^{1/4} x \Omega \,,
	\end{equation}
	to be nuclear, with $\Delta$ the modular operator of the bigger local algebra $\mathcal{A}(  \widetilde{\mathcal{O}})$ with respect to $\Omega$. In this case one speaks of {\em modular nuclearity condition}. If the map \eqref{eq:mnc0} is $p$-nuclear then one will say that the {\em modular $p$-nuclearity condition} is satisfied. If modular $p$-nuclearity holds for some $ 0 < p \leq 1$ then the modular nuclearity condition is satisfied, and if so then it is well known that the {\em split property} holds, namely there is an intermediate type I factor $ \mathcal{A}(  {\mathcal{O}})  \subset \mathcal{F} \subset \mathcal{A}(  \widetilde{\mathcal{O}})$ \cite{buchholz1990nuclear, buchholz1990nuclear2}.  \\
	
	The split property is an expression of the statistical independence of two spacelike separated regions. It is considerably stronger than (Einstein) causality, which is included in the Haag-Kastler axioms. While causality expresses the independence of measurement apparatuses located in spacelike separated regions, the split property additionally takes the stage of preparation into account. It signifies that no choice of a state prepared in one region can prevent the preparation of any state in the other region. Among all intermediate type I factors, a canonical one can be chosen by using standard representation arguments \cite{doplicher1984standard}. When the split property holds, one natural entanglement measure to consider is the von Neumann entropy on the canonical intermediate type I-factor  \cite{longo2021neumann}, namely what we here call {\em canonical entanglement entropy}. As $\omega$ is pure on $B(\hil) = \mathcal{F} \vee \mathcal{F}'$, the canonical entanglement entropy is given by  
	\[
	E_C(\omega)  = S_\mathcal{F} (\omega) = S_{\mathcal{F}'}(\omega) \,.
	\]
	In this work we extend some results of \cite{longo2021neumann} and we show this entanglement entropy to be finite on a wide family of conformal nets including the $U(1)$-current. In order to do so, we first notice where the proof of \cite{longo2021neumann} can be extended and we then construct a vacuum preserving conditional expectation between canonical intermediate type I factors. \\
	
	More in general, it is a common belief that this  entanglement entropy can be bounded from above  by assuming  some nuclearity property of the system. A result of this type is  strongly suggested by  previous works \cite{longo2021neumann, narnhofer1994entropy, otani2018toward} and shows interesting applications in AdS/CFT contexts as  pointed out in \cite{dutta2021canonical}, where it appeared  as splitting entropy and it was used to conjecture a bound on the reflected entropy. This type of considerations can be extended to a wide family of entanglement measures defined on a generic  bipartite system $A \otimes B$ like the  {\em mutual information} \cite{hollands2018entanglement}
	\[
	E_I(\omega) = S(\omega \Vert \omega_A \otimes \omega_{B}) \,,
	\]
	which quantifies how much knowledge on the second system one can gain from measuring the first one. In this work we prove a relation between the mutual information and the modular $p$-nuclearity condition. \\

	The paper is organised as follows.  In \autoref{sec:2}, \autoref{sec:3} and  \autoref{sec:4}  we collect various notions about entropy, entanglement and modular nuclearity.  In \autoref{sec:5} we apply the theory of standard representations to construct a cpu map between canonical intermediate type I factors, with a focus on twisted-local nets on the circle. We then use this construction to extend \cite{longo2021neumann}. In \autoref{sec:6} we study the connection between modular $p$-nuclearity conditions and certain entanglement measures. In particular, we show that the modular $p$-nuclearity with $0 < p < 1$ implies the finiteness of the mutual information and of one more entanglement measure inspired by \cite{otani2018toward}. We also add remarks concerning area laws  \cite{hollands2018entanglement}. Finally, in \autoref{sec:7} we apply these considerations to a family of  $1+1$-dimensional integrable models with factorizing S-matrices \cite{lechner2008construction} and investigate the asymptotic behaviour of different  entanglement measures as the distance between two causally disjoint wedges diverges. In \autoref{sec:8} we present a few remarks that might be useful for future research in this direction.

	\section{Quantum entropy basics} \label{sec:2}
	
	Let $\mathcal{M}$ be a von Neumann algebra in standard form on some Hilbert space $\hi$ and let $\varphi$, $\psi$ be two normal positive linear functionals on $\mathcal{M}$ represented by the  vectors $\xi$ and $ \eta$. Denote by $s(\varphi)=[\M' \xi]$ and $s(\psi) = [\M' \eta]$ the central supports of $\varphi$ and $\psi$ respectively.   We define the {\em Tomita relative operator}
	\[
	S_{\xi, \eta}(x \eta + \zeta) = s(\psi)x^* \xi \,, \quad x \in \M \,, \; \zeta \in [\M \eta]^\perp \,.
	\]
	This densely defined conjugate-linear operator is closable. Its closure will be equally denoted and its polar decomposition is given by
	\[
	S_{\xi, \eta} = J_{\xi, \eta} \Delta_{\xi, \eta}^{1/2} \,,
	\]
	where
	\[
	\text{supp} \, \Delta_{\xi, \eta} = s(\varphi)s'(\psi) \,, \quad J^*_{\xi, \eta}J_{\xi, \eta}= s(\varphi)s'(\psi)\,, \quad J_{\xi, \eta}J^*_{\xi, \eta}= s'(\varphi)s(\psi)\,.
	\]
	In the case $\xi = \eta$  we will write $S_\xi = S_{\xi, \xi}$, and similarly $J_{\xi} = J_{\xi, \xi}$ and $\Delta_{\xi} = \Delta_{\xi, \xi}$. If $\xi$ and $\eta$ are both in the natural cone $\mathcal{P}^\natural$, then we also have the polar decomposition
	\[
	S_{\xi, \eta} = J\Delta_{\xi, \eta}^{1/2} \,,
	\] 
	with $J$ the modular conjugation of the standard form.  We recall that if $\varphi$ is faithful then $\xi$ is cyclic, and if so then $\mathcal{P}^\natural = \mathcal{P}^\natural_\xi$ where   
	\begin{equation} \label{eq:natural}
	\mathcal{P}^\natural_\xi =  \overline{ \big\{ \Delta^{1/4}_\xi  x^*x \xi \,, \, x \in \mathcal{M} \big\} } =  \overline{ \big\{ xJx^*J \xi \,, \, x \in \mathcal{M} \big\} }  \,.
	\end{equation}
Finally, by using primes to denote the modular operators of the commutant, we have the identities
	\[
	J_{\xi, \eta} \Delta_{\xi, \eta}^{1/2}  J_{\xi, \eta} = \Delta_{\eta, \xi}^{-1/2}  \,, \quad J'_{\xi, \eta} = J_{\eta, \xi} \,, \quad (\Delta_{\eta, \xi}')^z = \Delta_{ \xi, \eta}^{-z} \,.
	\]
	The {\em relative entropy} between $\varphi$ and $\psi$ is defined by
	\begin{equation} \label{eq:re}
		S(\varphi \Vert \psi)  = -(\xi|\log \Delta_{\eta, \xi} \xi) 
	\end{equation}
	if  $s(\varphi) \leq s(\psi)$, otherwise $S(\varphi \Vert \psi) = + \infty$ by definition. Equation \eqref{eq:re} does not depend on the choice of the representing vectors.  If $\M$ is not in standard form, then equation \eqref{eq:re} holds if the relative modular operator is replaced with a spatial derivative \cite{ohya2004quantum}. \\
	
	The  scalar product \eqref{eq:re} has to be intended by applying the spectral theorem to the relative modular operator $\Delta_{\eta, \xi}$, namely we have
	\begin{equation} \label{eq:re2}
		S(\varphi \Vert \psi)  = - \int_0^1 \log \lambda \, d(\xi | E_{\eta, \xi}(\lambda) \xi ) - \int_1^\infty  \log \lambda \, d(\xi | E_{\eta, \xi}(\lambda) \xi ) \,,
	\end{equation}	
	where the second integral is always finite by the estimate $\log \lambda  \leq \lambda$. In particular, $S(\varphi \Vert \psi)  $ is finite if and only if the first integral appearing in \eqref{eq:re2} is finite. By this remark it follows that \cite{ohya2004quantum}
	\begin{equation} \label{eq:re3}
		S(\varphi \Vert \psi) = i \frac{d}{dt}\varphi((D \psi \colon D \varphi)_t)  \bigg\vert_{t=0} = - i \frac{d}{dt}\varphi((D \varphi \colon D \psi )_t )   \bigg\vert_{t=0}  \,,
	\end{equation}
	where $(D \varphi \colon D \psi )_t = (D \psi \colon D \varphi)_t^*$ is the Connes cocycle. Identity \eqref{eq:re3} can be proved by using the dominated convergence theorem if $S(\varphi \Vert \psi) $ is finite and by the Fatou's lemma if  $S(\varphi \Vert \psi) = + \infty$ as shown in \cite{ciolli2020information}. Let us now recall some properties of the relative entropy \cite{ohya2004quantum}.

	\begin{itemize}
		\item[(i)] $S(\varphi \Vert \psi)  \geq \varphi(I)  (\log \varphi(I) - \log \psi(I))$,  and $S(\lambda \varphi \Vert \mu \psi) = \lambda S(\varphi \Vert \psi) - \lambda \varphi(I)\log(\mu / \lambda)$ for any $\lambda, \mu \geq 0$. Moreover, $S(\varphi \Vert \psi) \geq \Vert \varphi - \psi \Vert^2/2$, so that  $S(\varphi \Vert \psi)  = 0$ if and only if $\varphi = \psi$. 
		
		\item[(ii)] $S(\varphi \Vert \psi) $ is lower semi-continuous in the $\sigma(\M_*, \M)$-topology. 
		
		\item[(iii)] $S(\varphi \Vert \psi) $ is convex in both its variables. By (i) this is equivalent to the subadditivity of $S(\varphi \Vert \psi) $ in both its variables.
		
		\item[(iv)] $S(\varphi \Vert \psi) $ is superadditive in its first argument. Furthermore, $S(\varphi \Vert \psi) \leq S(\varphi' \Vert \psi')$ if $\psi \geq \psi'$ and $\varphi \geq \varphi'$ with $\Vert  \varphi \Vert = \Vert \varphi' \Vert$.
		
		\item[(v)] If $\alpha \colon \M_1 \to \M_2$ is a Schwarz mapping such that $\varphi_2 \cdot \alpha \leq \varphi_1$ and $\psi_2 \cdot \alpha \leq \psi_1$, then $S_{\M_1}(\varphi_1 \Vert \psi_1) \leq S_{\M_2}(\varphi_2 \Vert \psi_2)$. In particular, $S(\varphi \Vert \psi)$ is monotone increasing with respect to inclusions of von Neumann algebras. 
		
		\item[(vi)] Let  $(\M_i)_{i }$ be  an increasing net of von Neumann subalgebras of $\M$ with the property  $(\cup_i \M_i)'' = \M$. Then the increasing net $ S_{\M_i}(\varphi \Vert \psi)  $ converges to $S(\varphi \Vert \psi) $. 
		
		\item[(vii)] Let $\varepsilon \colon \M \to \mathcal{N}$ be a faithful normal  conditional expectation. If $\varphi$ and $\psi$ are normal states on $\M$ and $\mathcal{N}$ respectively, then $S_{\M}(\varphi \Vert \psi \cdot \varepsilon) = S_{\mathcal{N}}(\varphi \Vert \psi)  + S_{\M}(\varphi \Vert \varphi \cdot \varepsilon) $. 
		
		\item[(viii)] Let $\varphi$ be a normal state on the spatial tensor product  $\M_1 \otimes \M_2$ with partials $\varphi_i = \varphi \vert_{\M_i}$. Consider then  normal states $\psi_i$  on $\M_i$. As a corollary of (vi), we have $S(\varphi \Vert \psi_1 \otimes \psi_2) = S(\varphi_1 \Vert \psi_1 ) + S(\varphi_2  \Vert \psi_2 ) + S(\varphi \Vert \varphi_1 \otimes \varphi_2)$. 
	\end{itemize} 
	
	By using the universal representation, the relative entropy can be defined on a generic  $C^*$-algebra. If  we replace the strong closure with the norm closure in (r5) and the $\sigma(\M_*, \M)$-topology with the weak topology in (r1),  then properties from (r0) to (r5) still hold in the $C^*$-algebraic setting \cite{ohya2004quantum}. As shown in \cite{ohya2004quantum}, if $\psi$ is a positive normal functional of $\M$ and  $t \in \R$, then the sublevel 
	\begin{equation} \label{eq:sublevel}
		\mathcal{K}(\psi, t) = \{ \varphi \in \M^*_+ \colon S( \varphi \Vert \psi ) \leq t \}
	\end{equation}
	consists of normal functionals and it is a convex compact set with respect to the $\sigma(\M_*, \M)$-topology. The following lemma is original.
	
	\begin{lem}
		$\mathcal{K}(\psi, t) $ is sequentially $\sigma(\M_*, \M)$-compact and its set of extremal points is  
		\begin{equation} \label{eq:ep}
			\mathcal{E}(\psi, t) = \{ \varphi \in \mathcal{M}^*_+ \colon S( \varphi \Vert \psi) = t \}\,.
		\end{equation}
		Moreover, after a restriction to $\mathcal{M}_{s(\psi)}$, the union  $\mathcal{K}(\psi) = \bigcup_{t} \mathcal{K}(\psi, t)  $  is  norm dense in the set of  normal positive functionals of the reduced algebra  $\mathcal{M}_{s(\psi)}$.
	\end{lem}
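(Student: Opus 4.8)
The plan is to prove the three assertions separately: the sequential compactness, the identification of the extreme points, and the norm density. First I would dispose of sequential compactness by a soft functional-analytic argument. Since $\M$ is the dual of its predual, $\M = (\M_*)^*$, the topology $\sigma(\M_*,\M)$ is nothing but the weak topology of the Banach space $\M_*$. The set $\mathcal{K}(\psi,t)$ is already known to be $\sigma(\M_*,\M)$-compact (as recalled just before the statement, it is a convex compact sublevel set), hence it is weakly compact in $\M_*$; the Eberlein--\v{S}mulian theorem then yields that it is weakly sequentially compact, which is exactly the first claim.

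For the extreme-point identity I would exploit that $\varphi\mapsto S(\varphi\Vert\psi)$ is convex by property (iii) and, on its domain of finiteness, strictly convex — a consequence of the strict convexity of $t\mapsto t\log t$ read through the integral representation \eqref{eq:re2}. If $S(\varphi\Vert\psi)=t$ and $\varphi=\lambda\varphi_1+(1-\lambda)\varphi_2$ with $\varphi_i\in\mathcal{K}(\psi,t)$ and $0<\lambda<1$, then convexity forces $t\le\lambda S(\varphi_1\Vert\psi)+(1-\lambda)S(\varphi_2\Vert\psi)\le t$; the inequality is therefore saturated, and strict convexity gives $\varphi_1=\varphi_2$, so $\varphi$ is extreme. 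Conversely, if $S(\varphi\Vert\psi)<t$ and $\varphi\neq 0$, I would split $\varphi=\tfrac12\big((1+\delta)\varphi+(1-\delta)\varphi\big)$: by the scaling relation in (i) one has $S((1\pm\delta)\varphi\Vert\psi)=(1\pm\delta)S(\varphi\Vert\psi)+(1\pm\delta)\varphi(I)\log(1\pm\delta)$, which is continuous in $\delta$ and equals $S(\varphi\Vert\psi)<t$ at $\delta=0$, so both scaled functionals lie in $\mathcal{K}(\psi,t)$ for small $\delta>0$ while the splitting is nontrivial. Hence no point with $S<t$ is extreme, apart from the trivial functional $\varphi=0$ at which the scaling degenerates and which I would treat as a separate elementary case. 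This pins the extreme points to the level set $\{S=t\}$.

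For the density statement I would first pass to the reduced algebra $\M_{s(\psi)}$, on which $\psi$ is faithful, and then prove that the functionals of finite relative entropy are norm dense in $(\M_{s(\psi)})^*_+$. Given a normal positive $\omega$, I would represent it by its non-commutative density $\rho=d\omega/d\psi$ relative to $\psi$ — rigorously via Connes' spatial derivative, or equivalently in the Haagerup $L^1$-space — a positive operator with $\rho\in L^1(\psi)$ and $\int\rho\,d\psi=\omega(I)$. Truncating spectrally, $\rho_n=\rho\,E_{[0,n]}(\rho)$, let $\omega_n$ be the associated functional. Then $\|\omega-\omega_n\|=\int_{\{\rho>n\}}\rho\,d\psi\to 0$ by integrability of $\rho$, while $S(\omega_n\Vert\psi)<\infty$ because the bounded density makes the integrand $\rho_n\log\rho_n$ bounded (above by $n\log n$ and below by $-e^{-1}$), hence $\psi$-integrable, so that the corresponding integral in \eqref{eq:re2} converges. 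Each $\omega_n$ therefore lies in some $\mathcal{K}(\psi,t_n)$, giving the asserted norm density of $\mathcal{K}(\psi)$.

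The first part is essentially a citation once the topology is correctly identified, so the genuine work lies in the other two. The main obstacle is twofold: establishing strict convexity of the relative entropy in its first argument in the general von Neumann algebraic setting — in effect the equality case of the operator Jensen inequality for $t\log t$, which I expect to be the subtlest point of the extreme-point argument — and making the density truncation rigorous, where one must set up $d\omega/d\psi$ through non-commutative integration and control the norm convergence and the entropy integral simultaneously.
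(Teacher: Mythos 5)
Your compactness argument is the paper's: weak compactness of the sublevel set plus the Eberlein--Smulian theorem. For the extreme points your skeleton also matches, and the point you single out as subtlest --- strict convexity of $\varphi \mapsto S(\varphi \Vert \psi)$ on its finiteness domain in a general von Neumann algebra --- is exactly what the paper's citation supplies: Donald's identity (\cite{ohya2004quantum}, Proposition 5.23) says that for a convex decomposition $\varphi = \sum_i \lambda_i \varphi_i$ one has $\sum_i \lambda_i S(\varphi_i \Vert \psi) = S(\varphi \Vert \psi) + \sum_i \lambda_i S(\varphi_i \Vert \varphi)$, so the convexity defect equals $\sum_i \lambda_i S(\varphi_i \Vert \varphi)$, which is nonnegative and vanishes only if every $\varphi_i = \varphi$ (using property (i), with a small extra step to handle non-normalized functionals). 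No equality case of an operator Jensen inequality is needed; the identity does the work. Your scaling perturbation for the converse direction is fine and is in fact more explicit than the paper's one-line proof. Your separate case $\varphi = 0$ is sharper than you give it credit for: in the cone $\M^*_+$ the zero functional admits no nontrivial convex splitting, so it \emph{is} an extreme point of $\mathcal{K}(\psi, t)$ for every $t \geq 0$, and \eqref{eq:ep} is literally correct only away from $0$ (or for $t = 0$); neither the paper's statement nor its proof flags this.

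The genuine gap is in your density argument, where the paper takes a different and much shorter route: it cites \cite{bratteli2012operator1}, Theorem 2.3.19, by which, for faithful $\psi$, the functionals $\varphi \leq \alpha \psi$ for some $\alpha > 0$ are norm dense in $\M_*^+$; such $\varphi$ have finite entropy (from $\Delta_{\psi, \varphi} \geq \alpha^{-1} \Delta_\varphi$ and operator monotonicity of the logarithm one gets $S(\varphi \Vert \psi) \leq \varphi(1) \log \alpha$), and the nonfaithful case reduces to $\M_{s(\psi)}$ via $S_\M(\varphi \Vert \psi) = S_{\M_{s(\psi)}}(\varphi \Vert \psi)$, exactly as you set up. Your spectral truncation of $\rho = d\omega/d\psi$, as written, only makes sense in the semifinite case: on a type III algebra the Haagerup densities $h_\omega$ and $h_\psi$ do not commute, there is no positive operator $\rho$ with $\omega = \int \rho \, d\psi$ whose truncation $\rho E_{[0,n]}(\rho)$ manifestly yields a positive normal functional with $\Vert \omega - \omega_n \Vert = \int_{\{\rho > n\}} \rho \, d\psi$, and the integral \eqref{eq:re2} is a spectral integral of the \emph{relative modular operator}, not of $\rho_n \log \rho_n$, so boundedness of a truncated density does not feed into it directly --- what you actually need is the domination $\omega_n \leq n \psi$ together with the modular estimate above. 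A repair in your spirit: represent $\omega$ by a vector $\eta$ in the natural cone, approximate $\eta$ in norm by $a' \xi$ with $a' \in \M'$ (the vector $\xi$ representing the faithful $\psi$ is cyclic for $\M'$), and note that $(a'\xi | x a'\xi) = \Vert a' x^{1/2} \xi \Vert^2 \leq \Vert a' \Vert^2 \psi(x)$ for $x \geq 0$ --- which is essentially the proof of the theorem the paper cites.
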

	\begin{proof}
		The first two claims follow from the Eberlein-Smulian theorem and  Donald's identity (\cite{ohya2004quantum}, Proposition 5.23). The last point holds if $\psi$ is faithful, since in this case the set of positive normal functionals $\varphi$ such that $\varphi  \leq \alpha \psi$ for some $\alpha >0$ is norm dense in $\mathcal{M}_*^+$ (\cite{bratteli2012operator1}, Theorem 2.3.19). The general case follows by noticing that  $S_{\mathcal{M}}(\varphi \Vert \psi) = S_{\mathcal{M}_{s(\psi)}}(\varphi \Vert \psi)$ if $s(\varphi) \leq s(\psi)$, which is a necessary condition for $S_{\mathcal{M}}(\varphi \Vert \psi)$ to be finite. 
	\end{proof}

	
	\begin{defn}
		If $\varphi$ is a state on a $C^*$-algebra $A$, then the {\em von Neumann entropy} of $\varphi$ is defined by
		\[
		S_A(\varphi) = \sup\Big\{  \sum_i \lambda_i S(\varphi_i \Vert \varphi) \colon \sum_i \lambda_i \varphi_i  = \varphi \Big\} \,,
		\]
		where the supremum is over all decompositions of $\varphi$ into finite (or equivalently countable) convex combinations of other states. If $A$ is clear, we will simply write $S_A(\varphi) = S(\varphi)$. 
	\end{defn}
	Some properties of $S(\varphi)$ are immediate from those of the relative entropy: $S(\varphi)$ is nonnegative, vanishes if and only if $\varphi$ is a pure state and it is weakly lower semicontinuous. On type I factors, the von Neumann entropy of a normal state $\varphi$ with density matrix $\rho$ is given by  $S(\varphi) = - \text{tr} \, \rho \log \rho$. We now list a few properties of the von Neumann entropy \cite{ohya2004quantum}. The notation $\eta(t) = -t \log t$ is standard in information theory. 
	
	\begin{itemize}
		\item[(s0)] (concavity) Given states $\varphi$ and $\omega$, then $\lambda S(\varphi) + (1-\lambda)S(\omega) \leq S(\lambda \varphi + (1-\lambda)\omega) \leq \lambda S(\varphi) + (1-\lambda) S(\omega)+ \eta(\lambda) + \eta(1-\lambda)$ for each $0 < \lambda < 1$.
		\item[(s1)] (strong subadditivity) On a three-fold-product $B(\hil_1) \otimes B(\hil_2) \otimes B(\hil_3)$, a normal state $\omega_{123}$ with marginal states $\omega_{ij}$ satisfies $S(\omega_{123}) + S(\omega_2) \leq S(\omega_{12}) + S(\omega_{23})$. 
		\item[(s2)]  $S(\psi) = \inf \big\{-\sum_i \eta(\lambda_i) \big\}$,  where the infimum is taken over all the possible decompositions into pure states. 
		\item[(s3)] (tensor product) On the projective tensor product $A \otimes B$, we have the identity  $S(\varphi_1 \otimes \varphi_2) = S(\varphi_1) + S(\varphi_2)$. 
	\end{itemize}

	\begin{defn}
		Consider an inclusion of $C^*$-algebras $ A \subseteq B$ and a state $\varphi$ on $B$. The {\em entropy of $\varphi$  with respect to $A$} is 
		\begin{equation} \label{eq:ce}
			H_\varphi^B(A) = \sup \Big\{ \sum_i \lambda_i S_A(\varphi_i  \Vert \varphi )  \colon \varphi = \sum_i \lambda_i \varphi_i \Big\} \,,
		\end{equation}
		where the supremum is over all finite (countable) convex decompositions $\varphi = \sum_i \lambda_i \varphi_i$ on $B$. If the bigger $C^*$-algebra $B$ is clear, we will briefly use the notation $H_\varphi^B(A) = H_\varphi(A) $.
	\end{defn}
	The entropy of a subalgebra is actually a particular case of what is known as {\em conditional entropy}. We list a few of its properties \cite{connes1987dynamical, ohya2004quantum}. 
	\begin{itemize}
		\item[(c0)] (monotonicity) $H_\varphi^{{B_2} }(A_1) \leq H_\varphi^{B_1}({A_2} )$ if $A_1 \subseteq {A_2} \subseteq B_1 \subseteq {B_2} $.
		
		\item[(c1)] (semicontinuity)  $\varphi \mapsto H_\varphi^B(A) $ is weakly lower semicontinuous.
		
		\item[(c2)] (martingale property) $ \lim_i H_\varphi^B(A_i) = H_\varphi^B(A) $ if $(A_i)_i$ is an increasing net of $C^*$-subalgebras of $B$ with union norm dense in $A$.
		
		\item[(c3)] (concavity) $\lambda H^B_{\varphi_1} (A) + (1 - \lambda) H^B_{\varphi_2} (A) \leq H^B_{\varphi} (A) \leq  \lambda H^B_{\varphi_1} (A) + (1 - \lambda) H^B_{\varphi_2} (A)  + \eta(\lambda) + \eta(1-\lambda)  $ for $\varphi = \lambda \varphi_1 + (1- \lambda) \varphi_2$ on $B$ and $\lambda$ in $(0,1)$. 
	\end{itemize}

	In (c2), the union $\cup_i A_i$ can be strongly dense if all the $C^*$-algebras are replaced with von Neumann algebras and the state $\varphi$ is normal. We point out that the concavity of $H_\varphi(A)$ mentioned in (c3) certainly holds whenever $A$ is AF (\cite{ohya2004quantum}, Theorem 5.29 and Proposition 10.6), but the general case is a bit unclear to the authors \cite{connes1987dynamical}. What is clear  instead,	is the  following   original simple lemma which says whenever the  inequality (c0) reduces to  an equality in the case $A_1 = {A_2}$.

	\begin{lem} \label{lem:inequality}
		Consider the $C^*$-algebras inclusions $A_1 \subseteq B_1 \subseteq {B_2}$ and $A_1 \subseteq {A_2} \subseteq {B_2}$. Let  $\varphi$ be a state on ${B_2}$. If there is a $\varphi$-preserving conditional expectation $\varepsilon \colon {B_2}  \to {B_1} $, then 
		\[
		H_\varphi^{B_1}(A_1) \leq H_\varphi^{{B_2}}({A_2}) \,.
		\]
	\end{lem}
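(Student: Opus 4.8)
The plan is to exploit the conditional expectation $\varepsilon \colon B_2 \to B_1$ as a device for \emph{lifting} convex decompositions: from any decomposition of $\varphi$ on $B_1$ admissible in the supremum defining $H_\varphi^{B_1}(A_1)$ I produce a decomposition of $\varphi$ on $B_2$ admissible for $H_\varphi^{B_2}(A_2)$ whose associated sum is at least as large. Taking suprema on both sides then gives the inequality directly.

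Concretely, I would start with a decomposition $\varphi = \sum_i \lambda_i \varphi_i$ into states on $B_1$ and set $\tilde\varphi_i = \varphi_i \circ \varepsilon$, states on $B_2$. Since $\varepsilon$ is unital and positive, each $\tilde\varphi_i$ is indeed a state, and because $\varepsilon$ is $\varphi$-preserving one has $\sum_i \lambda_i \tilde\varphi_i = \big(\sum_i \lambda_i \varphi_i\big)\circ \varepsilon = \varphi \circ \varepsilon = \varphi$ on $B_2$. Hence $\{\tilde\varphi_i\}$ is a legitimate decomposition entering the supremum for $H_\varphi^{B_2}(A_2)$.

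The next step is to compare summands. Because $A_1 \subseteq B_1$ and $\varepsilon$ is a conditional expectation onto $B_1$, it restricts to the identity on $A_1$; therefore $\tilde\varphi_i\vert_{A_1} = \varphi_i \vert_{A_1}$, and since the relative entropy on $A_1$ depends only on the restrictions to $A_1$ we get $S_{A_1}(\tilde\varphi_i \Vert \varphi) = S_{A_1}(\varphi_i \Vert \varphi)$. Then monotonicity of the relative entropy under the inclusion $A_1 \subseteq A_2$ (property (v)) yields $S_{A_1}(\tilde\varphi_i \Vert \varphi) \leq S_{A_2}(\tilde\varphi_i \Vert \varphi)$, so that $S_{A_1}(\varphi_i \Vert \varphi) \leq S_{A_2}(\tilde\varphi_i \Vert \varphi)$ for every $i$. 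Multiplying by $\lambda_i$ and summing gives $\sum_i \lambda_i S_{A_1}(\varphi_i \Vert \varphi) \leq \sum_i \lambda_i S_{A_2}(\tilde\varphi_i \Vert \varphi) \leq H_\varphi^{B_2}(A_2)$, the last bound because $\{\tilde\varphi_i\}$ is admissible. Taking the supremum over all admissible decompositions on $B_1$ produces $H_\varphi^{B_1}(A_1) \leq H_\varphi^{B_2}(A_2)$.

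I expect the crux to be keeping the two inclusion chains straight rather than any analytic difficulty: the inclusion $A_1 \subseteq B_1$ is what makes $\varepsilon$ fix $A_1$, so that the lift leaves the $A_1$-relative entropy \emph{unchanged}, while the separate inclusion $A_1 \subseteq A_2$ is what powers the monotonic \emph{enlargement}. The $\varphi$-preserving property of $\varepsilon$ is used solely to ensure the lifted family $\{\tilde\varphi_i\}$ still decomposes $\varphi$; no hypothesis relating $A_2$ to $B_1$ is needed. The only subtlety worth checking carefully is that all functionals involved are genuine states and that the relative entropies are the restricted ones $S_{A_1}$, $S_{A_2}$, so that the substitution $\tilde\varphi_i\vert_{A_1}=\varphi_i\vert_{A_1}$ applies verbatim.
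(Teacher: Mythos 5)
Your proposal is correct and follows essentially the same route as the paper's proof: both lift a decomposition $\varphi = \sum_i \lambda_i \varphi_i$ on $B_1$ to the decomposition $\varphi = \sum_i \lambda_i (\varphi_i \cdot \varepsilon)$ on $B_2$ (using that $\varepsilon$ is $\varphi$-preserving) and then conclude via the key inequality $S_{A_1}(\varphi_i \Vert \varphi) \leq S_{A_2}(\varphi_i \cdot \varepsilon \Vert \varphi)$, which the paper states directly and you justify by noting that $\varepsilon$ fixes $A_1 \subseteq B_1$ together with monotonicity of the relative entropy under $A_1 \subseteq A_2$.
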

	
	\begin{proof}
		We can follow Proposition 6.7 of \cite{ohya2004quantum}. Indeed, if  $\psi$ is a state of $B_1$ then $\psi \cdot \varepsilon $ is a state of ${B_2}$. Therefore, if  $\varphi = \sum_i \lambda_i \varphi_i $ on $B_1$ for some states  $\varphi_i$ of  $B_1$ then $\varphi = \sum_i \lambda_i \varphi_i \cdot \varepsilon $  is a decomposition of $\varphi$ into states of ${B_2}$. The rest follows from $S_{A_1}(\varphi_i \Vert \varphi) \leq S_{{A_2}}(\varphi_i \cdot \varepsilon  \Vert \varphi) $. 
	\end{proof}

	In general, the existence of some $\varphi$-preserving conditional expectation between von Neumann algebras is not always true. A well known necessary and sufficient condition has been provided by Takesaki (\cite{takesaki2013theory}, Theorem IX.4.2). However, if it exists then it satisfies the following structure theorem, 	which is a partial generalization of  Proposition 3.1.4 and Proposition 3.1.5 of \cite{jones1983index}  to the non-tracial case. 
	
	\begin{thm} \label{thm:jones}
		Let $\mathcal{M}$ be a von Neumann algebra in standard form on  $\hil$. Let $\mathcal{N} $ be a von Neumann subalgebra of $\M$, $\varphi$ a normal faithful state of $\M$ and  $\xi$  the unique vector in the natural cone representing $\varphi$. If there exists a $\varphi$-preserving conditional expectation $\varepsilon \colon \M\to \mathcal{N}$, then there is a unique  projection $e$ in $\mathcal{N}'$ such that

		(i)  $ex \xi = \varepsilon(x)\xi$ and $exe = \varepsilon(x)e$ for $x$ in $\M$. 
		
		Furthermore, 
		
		(ii) $\mathcal{N}e = e (\mathcal{M} \vee e)e$, 
		
		(iii) $\mathcal{N}' = \mathcal{M}' \vee e$,

		(iv) there is a $*$-isomorphism $\phi \colon \mathcal{N}e \to \mathcal{N}$ such that $\varepsilon (x)= \phi (exe)$ for $x$ in $\M$. 
	\end{thm}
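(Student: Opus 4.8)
The plan is to realize $e$ as the Jones projection associated with $\varepsilon$, adapting the tracial construction of \cite{jones1983index} to the present non-tracial situation by replacing the trace with the combination of the module property of $\varepsilon$ and the invariance $\varphi \circ \varepsilon = \varphi$. Since $\varphi$ is faithful, $\xi$ is cyclic and separating for $\mathcal{M}$. I would set $e = [\mathcal{N}\xi]$, the orthogonal projection onto the closed subspace $\overline{\mathcal{N}\xi}$. Because $\mathcal{N}$ is a $*$-algebra, $\overline{\mathcal{N}\xi}$ reduces $\mathcal{N}$ (its orthogonal complement is invariant under $\mathcal{N}^*=\mathcal{N}$), so $e \in \mathcal{N}'$, and $e\xi = \xi$ since $\xi = 1\cdot\xi \in \mathcal{N}\xi$. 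To obtain (i), observe that $\varepsilon(x)\xi \in \mathcal{N}\xi \subseteq e\hil$, so it suffices to check $x\xi - \varepsilon(x)\xi \perp n\xi$ for every $n \in \mathcal{N}$. This is the heart of the matter and follows from the computation
\[
(x\xi \mid n\xi) = \varphi(n^*x) = \varphi(\varepsilon(n^*x)) = \varphi(n^*\varepsilon(x)) = (\varepsilon(x)\xi \mid n\xi) \,,
\]
where I use the bimodule identity $\varepsilon(n^*x) = n^*\varepsilon(x)$ and $\varphi \circ \varepsilon = \varphi$. Hence $ex\xi = \varepsilon(x)\xi$. The second identity $exe = \varepsilon(x)e$ I would verify on the total set $\mathcal{N}\xi$: for $n \in \mathcal{N}$ one has $exe(n\xi) = ex(n\xi) = e(xn)\xi = \varepsilon(xn)\xi = \varepsilon(x)n\xi = \varepsilon(x)e(n\xi)$, using $\varepsilon(xn) = \varepsilon(x)n$ and the fact that both operators annihilate $(e\hil)^\perp$.

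Uniqueness of $e$ is immediate, since any projection $e' \in \mathcal{N}'$ with $e'x\xi = \varepsilon(x)\xi$ agrees with $e$ on the dense set $\mathcal{M}\xi$, forcing $e'=e$. From $exe = \varepsilon(x)e$ I extract at once the key relation $e\mathcal{M}e = \varepsilon(\mathcal{M})e = \mathcal{N}e$, the last equality because $\varepsilon$ is surjective. For (iii) I would first prove $\mathcal{N} = \mathcal{M} \cap \{e\}'$: the inclusion $\subseteq$ holds as $e \in \mathcal{N}'$, and conversely if $x \in \mathcal{M}$ commutes with $e$ then $xe = exe = \varepsilon(x)e$, so applying this to $\xi$ gives $x\xi = \varepsilon(x)\xi$, whence $x = \varepsilon(x) \in \mathcal{N}$ because $\xi$ is separating. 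Taking commutants and using $(\mathcal{M}\cap\{e\}')' = \mathcal{M}'\vee\{e\}''$ then yields $\mathcal{N}' = \mathcal{M}'\vee e$. For (iv), the reduction map $n \mapsto ne$ is a normal $*$-homomorphism of $\mathcal{N}$ into $\mathcal{N}e$ (here $e \in \mathcal{N}'$ makes it multiplicative); it is injective since $ne = 0$ forces $n\xi = ne\xi = 0$, hence $n = 0$ by separability of $\xi$. Its inverse $\phi(ne) = n$ is the desired $*$-isomorphism onto $\mathcal{N}$, and $\phi(exe) = \phi(\varepsilon(x)e) = \varepsilon(x)$.

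It remains to prove (ii). As $\mathcal{N}e = e\mathcal{M}e \subseteq e(\mathcal{M}\vee e)e$, only the reverse inclusion needs work. I would argue on the $*$-algebra $\mathcal{P}_0$ generated by $\mathcal{M}$ and $e$, whose elements are linear combinations of alternating words, and establish the collapse
\[
e\, x_0\, e\, x_1\, e \cdots e\, x_k\, e = \varepsilon(x_0)\varepsilon(x_1)\cdots\varepsilon(x_k)\,e \in \mathcal{N}e \,,
\]
obtained by repeatedly invoking $exe = \varepsilon(x)e$ and moving the resulting $\mathcal{N}$-factors, which commute with $e$, to the left. This gives $e\mathcal{P}_0 e = \mathcal{N}e$. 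Since $T \mapsto eTe$ is weakly continuous and $\mathcal{M}\vee e$ is the weak closure of $\mathcal{P}_0$, I conclude $e(\mathcal{M}\vee e)e \subseteq \overline{\mathcal{N}e}^{\,w} = \mathcal{N}e$, the last equality because $\mathcal{N}e$ is already a von Neumann algebra on $e\hil$ (as $e \in \mathcal{N}'$). I expect the main obstacle to lie precisely here: justifying that compression commutes with the weak closure and that $\mathcal{N}e$ requires no further closing. Conceptually, however, the decisive point of the whole argument is the orthogonality computation in (i), where the module property of $\varepsilon$ together with the invariance of $\varphi$ takes over the role played by the trace in the original Jones construction.
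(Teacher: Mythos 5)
Your proposal is correct and takes essentially the same route as the paper: you realize $e$ as the orthogonal projection onto $[\mathcal{N}\xi]$, establish $ex\xi=\varepsilon(x)\xi$ through the $\varphi$-invariance and bimodule property of $\varepsilon$, and derive (ii)--(iv) from the compression identity $exe=\varepsilon(x)e$, exactly the non-tracial adaptation of Jones' Propositions 3.1.4--3.1.5 that the paper sketches. The only cosmetic differences are that you prove (iii) directly from $\mathcal{N}=\mathcal{M}\cap\{e\}'$ by taking commutants (the paper deduces it from (ii)), and that you supply the density, weak-continuity and weak-closedness details (in particular that $\mathcal{N}e$ is already weakly closed, since $n\mapsto ne$ is a normal $*$-homomorphism) which the paper's terse proof leaves implicit.
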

	\begin{proof}
(i) Recalling that $\xi$ is standard for $\mathcal{M}$, one can first show  the map $x \xi \mapsto \varepsilon(x) \xi$ to be an orthogonal projection onto $[\mathcal{N} \xi]$ since $\varepsilon$ is a $\varphi$-preserving conditional expectation. It is then easy to notice that $e$ belongs to $\mathcal{N}'$, and the identity $exe = \varepsilon(x)e$  follows. The uniqueness follows by construction. (ii) This relation is a corollary of the first point.  (iii) This follows from   $\mathcal{N}e = e (\mathcal{M} \vee e)e$, which is a consequence of (i) as well. (iv) We show the map $y  \mapsto ye$ for $y$ in $\mathcal{N}$ to be injective and hence an isomorphism. But if $ye=0$ then $ey^*=0$, hence $\varphi(y^*y)=0$ which implies $y=0$ by faithfulness. 
	\end{proof}
	
	\begin{remark}
If we denote by $s(\varphi)$ the central support of $\varphi$ and we assume $\mathcal{N}$ to be a von Neumann subalgebra of $s(\varphi) \M s(\varphi)$, then the previous theorem still holds in the nonfaithful case. 
	\end{remark}
	
	\begin{cor} \label{cor:natural}
		Let $\M$, $\mathcal{N}$, $\varphi$, $\xi$ and $\varepsilon$ be as in the previous theorem. If $\mathcal{P}^\natural_\M $ and $\mathcal{P}^\natural_{\mathcal{N}} $ are the natural cones of $\M$ and $\mathcal{N}$ respectively, then
		\[
		e \, \mathcal{P}^\natural_\M  = \mathcal{P}^\natural_{\mathcal{N}} \subseteq \mathcal{P}^\natural_\M \,.
		\] 
		 In particular, the elements of  $\mathcal{P}^\natural_{\mathcal{N}} $ correspond to the $\varepsilon$-invariant normal positive functionals of $\M$.
	\end{cor}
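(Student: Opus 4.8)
The plan is to compute both sides explicitly through the $\Delta^{1/4}$-form of the natural cone in \eqref{eq:natural} and to exploit the projection $e$ furnished by \autoref{thm:jones}. First I would record that, by the construction in the proof of \autoref{thm:jones}(i), $e$ is the orthogonal projection onto $[\mathcal{N}\xi]$, so that $\mathcal{N}$ acts in standard form on $e\hi$ with $\xi$ cyclic and separating: cyclicity is the definition of $e$, and separation follows from faithfulness of $\varphi|_{\mathcal{N}}$ (if $y\xi=0$ with $y\in\mathcal{N}$ then $\varphi(y^*y)=\|y\xi\|^2=0$, so $y=0$). Consequently $\mathcal{P}^\natural_{\mathcal{N}}$ is the natural cone of $(\mathcal{N},\xi)$ sitting inside $e\hi$, and I denote by $\Delta$, $\Delta_{\mathcal{N}}$ the modular operators of $(\M,\xi)$ and $(\mathcal{N},\xi)$.

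The crucial step, which I expect to be the main obstacle, is to relate these two modular data. Since a $\varphi$-preserving conditional expectation exists, Takesaki's theorem gives $\sigma^\varphi_t(\mathcal{N})=\mathcal{N}$ and identifies the modular group of $(\mathcal{N},\varphi|_{\mathcal{N}})$ with $\sigma^\varphi_t|_{\mathcal{N}}$. For $y\in\mathcal{N}$ one then has $\Delta^{it}y\xi=\sigma^\varphi_t(y)\xi\in\mathcal{N}\xi\subseteq e\hi$, so $\Delta^{it}$ preserves $e\hi$; hence $e$ commutes with $\Delta^{it}$ for all $t$, and therefore with $\Delta$ and with $\Delta^{1/4}$. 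Comparing $\Delta^{it}y\xi=\sigma^\varphi_t(y)\xi=\Delta_{\mathcal{N}}^{it}y\xi$ on the dense set $\mathcal{N}\xi$ of $e\hi$ shows $\Delta^{it}|_{e\hi}=\Delta_{\mathcal{N}}^{it}$, whence $\Delta^{1/4}|_{e\hi}=\Delta_{\mathcal{N}}^{1/4}$.

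With this compatibility the cone identities become short. For the inclusion, a generator $\Delta_{\mathcal{N}}^{1/4}y^*y\xi$ of $\mathcal{P}^\natural_{\mathcal{N}}$ equals $\Delta^{1/4}y^*y\xi$ (as $y^*y\xi\in e\hi$), which is of the defining form for $\mathcal{P}^\natural_\M$ with $x=y\in\mathcal{N}\subseteq\M$; so $\mathcal{P}^\natural_{\mathcal{N}}\subseteq\mathcal{P}^\natural_\M$, and since $\mathcal{P}^\natural_{\mathcal{N}}\subseteq e\hi$ this already gives $\mathcal{P}^\natural_{\mathcal{N}}=e\mathcal{P}^\natural_{\mathcal{N}}\subseteq e\mathcal{P}^\natural_\M$. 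For the reverse inclusion I would take a generator $\Delta^{1/4}x^*x\xi$ of $\mathcal{P}^\natural_\M$ and combine the commutation of $e$ with $\Delta^{1/4}$ and \autoref{thm:jones}(i) to write $e\Delta^{1/4}x^*x\xi=\Delta^{1/4}ex^*x\xi=\Delta^{1/4}\varepsilon(x^*x)\xi=\Delta^{1/4}w^*w\xi$, where $w=\varepsilon(x^*x)^{1/2}\in\mathcal{N}$ since $\varepsilon$ is positive onto $\mathcal{N}$; this vector lies in $\mathcal{P}^\natural_{\mathcal{N}}$. Passing to closures (using continuity of $e$) yields $e\mathcal{P}^\natural_\M\subseteq\mathcal{P}^\natural_{\mathcal{N}}$, hence the asserted equality $e\,\mathcal{P}^\natural_\M=\mathcal{P}^\natural_{\mathcal{N}}$.

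Finally, for the correspondence with $\varepsilon$-invariant functionals I would use the bijection $\zeta\mapsto\omega_\zeta$, $\omega_\zeta(x)=(\zeta\,|\,x\zeta)$, between $\mathcal{P}^\natural_\M$ and the normal positive functionals of $\M$. If $\zeta\in\mathcal{P}^\natural_{\mathcal{N}}$ then $\zeta=e\zeta$, and using $\varepsilon(x)e=exe$ from \autoref{thm:jones}(i) one computes $\omega_\zeta(\varepsilon(x))=(\zeta\,|\,\varepsilon(x)e\zeta)=(\zeta\,|\,exe\zeta)=(e\zeta\,|\,x\zeta)=\omega_\zeta(x)$, so $\omega_\zeta$ is $\varepsilon$-invariant. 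Conversely, given $\zeta\in\mathcal{P}^\natural_\M$ with $\omega_\zeta\circ\varepsilon=\omega_\zeta$, let $\zeta_0\in\mathcal{P}^\natural_{\mathcal{N}}$ be the unique vector representing the normal positive functional $\omega_\zeta|_{\mathcal{N}}$; by the forward direction $\omega_{\zeta_0}$ is $\varepsilon$-invariant and agrees with $\omega_\zeta$ on $\mathcal{N}$, so $\omega_{\zeta_0}(x)=\omega_{\zeta_0}(\varepsilon(x))=\omega_\zeta(\varepsilon(x))=\omega_\zeta(x)$ for every $x\in\M$. Since $\zeta,\zeta_0\in\mathcal{P}^\natural_\M$ represent the same functional, uniqueness of the representing vector in the natural cone forces $\zeta=\zeta_0\in\mathcal{P}^\natural_{\mathcal{N}}$, which establishes the claimed identification of $\mathcal{P}^\natural_{\mathcal{N}}$ with the $\varepsilon$-invariant normal positive functionals of $\M$.
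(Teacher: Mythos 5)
Your proof is correct and follows the same overall skeleton as the paper's: commute $e$ with the modular data of $(\M,\xi)$, identify its restriction to $e\hi$ with the modular data of $(\mathcal{N},\xi)$, read off the cone identity from the first expression in \eqref{eq:natural}, and get the functional correspondence from uniqueness of representing vectors in the natural cone. The one step you handle genuinely differently is the modular compatibility itself. The paper obtains it in one line from the Tomita operator: $\varepsilon(x^*)=\varepsilon(x)^*$ gives $eS=Se$ for $S=J\Delta^{1/2}$, whence $e\Delta=\Delta e$ and $eJ=Je$ by uniqueness of the polar decomposition, with $e\Delta$ and $eJ$ the modular data of $\mathcal{N}e$. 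You instead invoke the "only if" half of Takesaki's theorem to get $\sigma^\varphi_t(\mathcal{N})=\mathcal{N}$ and $\sigma^{\varphi\vert_\mathcal{N}}_t=\sigma^\varphi_t\vert_\mathcal{N}$, and deduce $e\Delta^{it}=\Delta^{it}e$ and $\Delta^{it}\vert_{e\hi}=\Delta^{it}_\mathcal{N}$ from the action on the dense set $\mathcal{N}\xi$ (note your invariance argument needs both signs of $t$, which you have since it holds for all $t$). Your route has the advantage of working only with the bounded unitaries $\Delta^{it}$, sidestepping the domain and closure bookkeeping implicit in $eS=Se$ for the closed, conjugate-linear, unbounded $S$; the cost is an appeal to the harder Takesaki theorem (already cited in the paper, so legitimate), and you lose the byproduct $eJ=Je$, which the paper's argument delivers for free and reuses in the proof of Theorem \ref{thm:conditional}. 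Finally, you spell out in full both directions of the correspondence between $\mathcal{P}^\natural_{\mathcal{N}}$ and the $\varepsilon$-invariant normal positive functionals — the computation $\omega_\zeta(\varepsilon(x))=(\zeta\,\vert\,exe\,\zeta)=\omega_\zeta(x)$ via Theorem \ref{thm:jones}(i), and the uniqueness argument forcing $\zeta=\zeta_0$ — where the paper compresses all of this to "uniqueness of the representative vector"; your version is exactly the intended expansion of that remark.
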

\begin{proof}
	Let $S = J \Delta^{1/2}$ be the modular operator of $\M$ with respect to $\xi$. By the previous theorem we have $\varepsilon(x^*) = \varepsilon(x)^*$ and hence $eS = Se$, which implies that $e \Delta = \Delta e$ is the modular operator of $\mathcal{N}e$. Similarly, $eJ=Je$ is the modular conjugation of $\mathcal{N}e$. The thesis follows from the first identity of equation \eqref{eq:natural}, where the last claim is a consequence of the uniqueness of the representative vector.
\end{proof}


\section{Entanglement Measures} \label{sec:3}

In this section we discuss entanglement in a general setting and we review some quantitative  measures of entanglement and their properties \cite{hollands2018entanglement}. \\

Let $A$, $B$ be a couple of commuting von Neumann algebras on some Hilbert space $\hil$. We shall say that the pair $(A, B)$ is {\em split} if there exists a von Neumann algebra isomorphism $\phi \colon A \vee B \to A \otimes B$ such that $\phi(ab) = a \otimes b$. We will refer to split pairs also as {\em bipartite systems}.  In quantum field theory, bipartite systems are associated to causally disjoint regions.  Clearly the spatial tensor product $A \otimes B$ has a natural structure of bipartite system since $ A \cong A \otimes \mathbbm{1}$ and  $B \cong \mathbbm{1} \otimes B$.  If $ A \vee B$ is $\sigma$-finite, then the pair $(A, B) $ is split if and only if for any given normal states $\varphi_A$ on $A$ and $\varphi_B$ on $B$ there exists a normal state $\varphi$ on $A \vee B$ such that  $\varphi (ab) = \varphi_A(a) \varphi_B(b)$ \cite{longo2020lectures}. \\

 A state $\omega$ on $A \otimes B$ is said to be {\em separable} if there are positive normal functionals $\varphi_j$ on $A$ and $\psi_j$ on $B$ such that $\omega = \sum_j \varphi_j \otimes \psi_j$, where the sum is assumed to be norm convergent. Separable states are normal and satisfy the following original lemma inspired by \cite{narnhofer2002entanglement}. 
 
 \begin{lem} \label{lem:entropy}
 	Given two von Neumann algebras $A$ and $B$, consider a state  $\omega$  on the bipartite system $A \otimes B$. If $\omega$ is separable, then
 	\[
 	S_A(\omega) = H_\omega^{A \otimes B}(A) \,.
 	\]
 \end{lem}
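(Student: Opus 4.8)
The plan is to establish the two inequalities $H_\omega^{A\otimes B}(A)\le S_A(\omega)$ and $S_A(\omega)\le H_\omega^{A\otimes B}(A)$ separately, reading throughout $S_A(\omega)$ as the von Neumann entropy $S_A(\omega|_A)$ of the restriction of $\omega$ to $A\cong A\otimes\mathbbm{1}$. The first inequality needs no separability and is a direct instance of the monotonicity (c0): with $A_1=A_2=A$, $B_1=A$ and $B_2=A\otimes B$ one gets $H_\omega^{A\otimes B}(A)\le H_{\omega|_A}^A(A)$, and the right-hand side is by definition $S_A(\omega|_A)$. Concretely, every convex decomposition $\omega=\sum_i\lambda_i\omega_i$ on $A\otimes B$ restricts to a decomposition $\omega|_A=\sum_i\lambda_i(\omega_i|_A)$ on $A$ with the very same summands $S_A(\omega_i|_A\Vert\omega|_A)$, so the supremum defining $H$ cannot exceed the one defining $S_A$.

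For the reverse inequality I would use separability to manufacture decompositions of $\omega$ on $A\otimes B$ whose $A$-marginals are arbitrarily fine. Normalising the separable decomposition, write $\omega=\sum_j p_j\,\tilde\varphi_j\otimes\tilde\psi_j$ with $p_j>0$, $\sum_j p_j=1$ and $\tilde\varphi_j,\tilde\psi_j$ states, so that $\omega|_A=\sum_j p_j\tilde\varphi_j$. The key point is that any refinement of this particular decomposition lifts: if $\tilde\varphi_j=\sum_k q_{jk}\,\tilde\varphi_{jk}$ on $A$, then $\{\,p_jq_{jk},\ \tilde\varphi_{jk}\otimes\tilde\psi_j\,\}$ is a genuine (countable) convex decomposition of $\omega$ on $A\otimes B$, because $\sum_{j,k}p_jq_{jk}\,\tilde\varphi_{jk}\otimes\tilde\psi_j=\sum_j p_j\tilde\varphi_j\otimes\tilde\psi_j=\omega$, and its $A$-marginals are exactly the $\tilde\varphi_{jk}$. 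Inserting this family into the definition of $H_\omega^{A\otimes B}(A)$ yields the lower bound $H_\omega^{A\otimes B}(A)\ge\sum_{j,k}p_jq_{jk}\,S_A(\tilde\varphi_{jk}\Vert\omega|_A)$ for every choice of the inner decompositions.

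It then remains to show that the supremum of this bound over all refinements reaches the full entropy, i.e. $S_A(\omega|_A)\le\sup_{\{q_{jk}\}}\sum_{j,k}p_jq_{jk}\,S_A(\tilde\varphi_{jk}\Vert\omega|_A)$. Applying Donald's identity blockwise gives $\sum_k q_{jk}S_A(\tilde\varphi_{jk}\Vert\omega|_A)=S_A(\tilde\varphi_j\Vert\omega|_A)+\sum_k q_{jk}S_A(\tilde\varphi_{jk}\Vert\tilde\varphi_j)$, and optimising the inner decompositions independently for each $j$ (all summands being non-negative) identifies the supremum with $\sum_j p_j S_A(\tilde\varphi_j\Vert\omega|_A)+\sum_j p_j S_A(\tilde\varphi_j)$. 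The whole statement therefore reduces to the decomposition identity $S_A(\omega|_A)=\sum_j p_j S_A(\tilde\varphi_j)+\sum_j p_j S_A(\tilde\varphi_j\Vert\omega|_A)$.

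This last identity is where I expect the real difficulty. One inequality is automatic: refinements of $\{p_j\tilde\varphi_j\}$ form a subfamily of all decompositions of $\omega|_A$, so their supremum cannot exceed $S_A(\omega|_A)$, whence $\sum_j p_j S_A(\tilde\varphi_j)+\sum_j p_j S_A(\tilde\varphi_j\Vert\omega|_A)\le S_A(\omega|_A)$. The substantive point is the opposite inequality, namely that restricting to refinements of a single fixed decomposition does not lower the supremum, even though two arbitrary decompositions of $\omega|_A$ need not admit a common refinement. I would extract this from the structure theory of the von Neumann entropy functional in \cite{ohya2004quantum}, paying particular attention to the case where some of the entropies involved are infinite, in which the identity is to be understood in $[0,+\infty]$. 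Granting it, the lower bound on $H_\omega^{A\otimes B}(A)$ attains $S_A(\omega|_A)$, and together with the first inequality this gives $S_A(\omega)=H_\omega^{A\otimes B}(A)$.
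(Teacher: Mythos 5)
Your first inequality and the lifting mechanism are fine: decompositions of $\omega$ on $A\otimes B$ restrict to decompositions of $\omega|_A$, and refinements $\tilde\varphi_j=\sum_k q_{jk}\tilde\varphi_{jk}$ of the separable decomposition do lift to genuine decompositions $\{p_jq_{jk},\,\tilde\varphi_{jk}\otimes\tilde\psi_j\}$ on $A\otimes B$. The blockwise use of Donald's identity is also correct. But the argument then stands or falls on the decomposition identity $S_A(\omega|_A)=\sum_j p_j S_A(\tilde\varphi_j)+\sum_j p_j S_A(\tilde\varphi_j\Vert\omega|_A)$, and this is a genuine gap, not a routine citation: you yourself flag it and defer it to unspecified ``structure theory'' in \cite{ohya2004quantum}. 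The identity is standard when $A$ is type I, where it follows from a density-matrix computation ($\sum_j p_j\,\mathrm{tr}(\rho_j\log\rho)=\mathrm{tr}(\rho\log\rho)$ by linearity of the trace), but the lemma is stated for \emph{arbitrary} von Neumann algebras, where no such computation is available. The hard direction is exactly the one you isolate — that restricting to refinements of one fixed decomposition does not lower the supremum — and the natural attack fails for the reason you half-notice: decompositions of $\omega|_A$ correspond to partitions of unity $\{h_i'\}$ in the commutant of the GNS representation, two such partitions need not commute, and there is in general no common refinement. So the set of decompositions is not directed under refinement, and nothing in the general theory of $S_A$ lets you compare an arbitrary decomposition with refinements of your fixed separable one. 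In type I the identity is rescued by the trace, not by refinement-directedness; beyond type I (where typically both sides of the lemma are $+\infty$ and the countable-decomposition edge cases with $\sum_j\eta(p_j)=\infty$ matter) you have no proof, and neither does the sketch supply one.

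The paper closes this gap by a different mechanism that makes the identity unnecessary. From the separable decomposition $\omega=\sum_j\phi_j\otimes\psi_j$ it builds an $\omega$-preserving completely positive projection $\varepsilon\colon A\otimes B\to pAp$ (with $p$ the support of $\omega|_A$), using the equivalence $\pi\cong\oplus_j\pi_j$ of the GNS representation of $\omega|_A$ with the direct sum of the GNS representations of $\psi_j(1)\phi_j$, and the maps $\varepsilon_j(a\otimes b)=\pi_j(a)\psi_j(b)/\psi_j(1)$. Lemma \ref{lem:inequality} then lifts \emph{every} decomposition of $\omega|_A$ on $pAp$ — not just refinements of the given separable one — to a decomposition of $\omega$ on $A\otimes B$ by composing with $\varepsilon$, and monotonicity of the relative entropy gives $S_{pAp}(\omega)\leq H_\omega^{A\otimes B}(A)$; together with $S_A(\omega)=S_{pAp}(\omega)$ (Proposition 6.8 of \cite{ohya2004quantum}) this yields the nontrivial inequality at full generality. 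Morally your lifting is the special case of this where the decomposition being lifted happens to refine the separable one; the conditional expectation is precisely the extra idea that removes that restriction. To salvage your route you would either have to prove the decomposition identity for general von Neumann algebras (at least in the cases where $S_A(\omega|_A)<\infty$, plus a separate argument that $S_A(\omega|_A)=\infty$ forces $H_\omega^{A\otimes B}(A)=\infty$), or restrict the lemma to type I algebras, which is weaker than what is stated and than what the paper needs.
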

 
 \begin{proof}  If $\omega = \sum_j \phi_j \otimes \psi_j$ and $\pi$ is the GNS representation of $A$ associated to the marginal state $\omega_A = \omega \vert_A$, then we have an equivalence of GNS representations $\pi \cong  \oplus_j \pi_j$, where $\pi_j$  is the GNS representation of $A$ given by $\psi_j(1) \phi_j$. We can define a cpu map $\varepsilon_j \colon A \otimes B \to \pi_j (A)$ by $\varepsilon_j(a \otimes b) = \pi_j(a) \psi_j(b)/\psi_j(1)$, and this lead us to define a conditional expectation $\varepsilon \colon A \otimes B \to pAp$ by $\varepsilon = \pi^{-1} \cdot \oplus_j \varepsilon_j$, where $p$ is the support projection of $\omega_A$. The claim follows from the identity $S_A(\omega) = S_{pAp}(\omega)  $ (see the proof of Proposition 6.8 of \cite{ohya2004quantum}) and Lemma \ref{lem:inequality}.
 \end{proof}

 A normal state which is not separable is called {\em entangled}.	Therefore, an entanglement measure for a bipartite system should be a state functional that vanishes on separable states. 



\begin{defn}
	The {\em relative entanglement entropy} of a normal state $\omega$ on a bipartite system $A \otimes B$ is given by
	\begin{equation} \label{eq:ree}
		E_R(\omega) = \inf \big\{ S(\omega \Vert \sigma)  \colon \sigma \text{ is a separable state}\big\}\,.
	\end{equation}
	The {\em mutual information} $E_I(\omega)$ is given by 
	\[
	E_I(\omega) = S(\omega \Vert \omega_A \otimes \omega_B) \,.
	\]
	where $\omega_A= \omega \vert_A$ and similarly for $B$. 
\end{defn}

Clearly $E_R(\omega) \leq E_I (\omega)$. As an example, let us consider  a bipartite system given by  $A=B(\hil)$ and $B=B(\hil')$, with $\hil$ and $\hil'$ finite dimensional Hilbert spaces. The mutual information is given by
\begin{equation} \label{eq:mi}
	E_I(\omega) = S(\omega_A) + S(\omega_B) - S (\omega) \,.
\end{equation}
We point out that, without any finiteness assumption, on hyperfinite type I factors we can only write $	E_I(\omega) + S (\omega)  = S(\omega_A) + S(\omega_B) $. The mutual information is non-negative and independent of the order of $A$ and $B$ \cite{hollands2018entanglement}. For separable states $\omega = \sum_j \lambda_j  \varphi_j \otimes \psi_j$, with $\varphi_j$ and $\psi_j$ normal states, we also have \cite{hollands2018entanglement}
\begin{equation} \label{eq:e1}
	E_I(\omega) \leq \sum_j \eta(\lambda_j) \,,
\end{equation}
with $\eta(t) = -t \ln t$ the information function. Moreover, we can use \eqref{eq:mi} to deduce the following “concavity” property of the mutual information.

\begin{lem}\label{lem:concave-mutual-info}
	Let $A$ and $B$ be AF factors and $\omega=\sum_j \lambda_j \omega_j$ be a convex decomposition of a normal state $\omega$ on $A\otimes B$, with $\omega_j $ normal states. Then 
	\begin{align*}
		\sum_j \lambda_j E_I(\omega_j) - \sum_j \eta(\lambda_j) \leq E_I (\omega) \leq \sum_j \lambda_j E_I(\omega_j) +2 \sum_j \eta(\lambda_j).
	\end{align*}
\end{lem}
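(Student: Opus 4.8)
The plan is to reduce everything to the additive formula \eqref{eq:mi}, writing $E_I(\mu) = S(\mu_A) + S(\mu_B) - S(\mu)$ for $\mu = \omega$ and for each component $\mu = \omega_j$, and then to control the three resulting entropy differences by the concavity estimate (s0). Since restriction to a subsystem is linear, the given decomposition passes to the marginals, namely $\omega_A = \sum_j \lambda_j (\omega_j)_A$ and $\omega_B = \sum_j \lambda_j (\omega_j)_B$, so that all three functionals $\omega$, $\omega_A$, $\omega_B$ are presented as the \emph{same} convex combination $\sum_j \lambda_j(\,\cdot\,)$ of the corresponding components.

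First I would upgrade the two-term bound (s0) to the countable convex decomposition at hand: for any normal decomposition $\mu = \sum_j \lambda_j \mu_j$ one has
\[
0 \leq S(\mu) - \sum_j \lambda_j S(\mu_j) \leq \sum_j \eta(\lambda_j),
\]
which follows by induction on the number of terms from (s0) and then by passing to the limit using the weak lower semicontinuity of $S$; the AF hypothesis on $A$ and $B$ is what guarantees that these concavity bounds are actually available (compare the remark following (c3)). Applying this to the three decompositions above, I introduce the entropy defects $\Delta_A = S(\omega_A) - \sum_j \lambda_j S((\omega_j)_A)$, $\Delta_B = S(\omega_B) - \sum_j \lambda_j S((\omega_j)_B)$ and $\Delta = S(\omega) - \sum_j \lambda_j S(\omega_j)$, each of which lies in the interval $[0, \sum_j \eta(\lambda_j)]$.

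The core of the argument is then purely algebraic. Multiplying \eqref{eq:mi} for $\omega_j$ by $\lambda_j$, summing over $j$, and subtracting from \eqref{eq:mi} for $\omega$, the marginal terms reorganize into the three defects and one obtains
\[
E_I(\omega) - \sum_j \lambda_j E_I(\omega_j) = \Delta_A + \Delta_B - \Delta.
\]
The lower bound in the statement now follows from $\Delta_A, \Delta_B \geq 0$ together with $\Delta \leq \sum_j \eta(\lambda_j)$, and the upper bound from $\Delta_A, \Delta_B \leq \sum_j \eta(\lambda_j)$ together with $\Delta \geq 0$.

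The step requiring the most care is finiteness: the rearrangement leading to \eqref{eq:mi} is an $\infty - \infty$ manipulation whenever the von Neumann entropies involved diverge. To make the subtraction legitimate I would work from the start with the additive identity $E_I(\mu) + S(\mu) = S(\mu_A) + S(\mu_B)$ (valid on hyperfinite type I factors, as noted after \eqref{eq:mi}) rather than with \eqref{eq:mi} itself, and organize the computation so that only the differences $\Delta_A, \Delta_B, \Delta$ — each finite, being bounded by $\sum_j \eta(\lambda_j)$ — ever appear, the divergent bulk entropies cancelling against one another. A secondary point I would check is that \eqref{eq:mi} (or its additive version) genuinely holds on AF factors and not merely on finite-dimensional type I ones; this I would obtain by approximating $A$ and $B$ by finite-dimensional subalgebras and invoking the martingale property (c2)/(vi).
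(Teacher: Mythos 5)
Your algebraic core --- the identity $E_I(\omega) - \sum_j \lambda_j E_I(\omega_j) = \Delta_A + \Delta_B - \Delta$ with each defect in $[0, \sum_j \eta(\lambda_j)]$ --- is exactly the standard finite-dimensional argument (it is the content of the reference \cite{shirokov2017tight} that the paper cites), and it is sound whenever all the von Neumann entropies in \eqref{eq:mi} are finite. The genuine gap is your treatment of the infinite-entropy case. The lemma is stated for AF \emph{factors}, which includes the hyperfinite type II and type III factors; there every normal state has $S(\mu) = +\infty$ (and even on infinite-dimensional type I factors the entropies can diverge). In that situation your defects $\Delta_A, \Delta_B, \Delta$ are expressions of the form $\infty - \infty$: they are not ``finite, being bounded by $\sum_j \eta(\lambda_j)$'' --- they are undefined, and no reorganization of the computation can make them appear without first subtracting two infinite quantities. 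Your proposed ``secondary point'' cannot be repaired either: \eqref{eq:mi} in subtracted form simply does not hold on non-type-I AF factors, and the paper explicitly warns that even on hyperfinite type I factors one only has the additive version $E_I(\omega) + S(\omega) = S(\omega_A) + S(\omega_B)$, which on type II/III degenerates to the uninformative $\infty = \infty$. So the martingale property cannot establish \eqref{eq:mi} on AF factors, because the identity is false (or vacuous) there.

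The repair --- and the paper's actual route --- is to apply the approximation to the \emph{inequalities} rather than to the entropy identity. One proves the two bounds on finite-dimensional bipartite systems $\mathfrak{A}_i \otimes \mathfrak{B}_i$ (your defect computation is a correct proof at that level, where all entropies are finite), then chooses increasing nets of finite-dimensional subfactors weakly dense in $A$ and $B$. Restriction commutes with taking marginals, so the mutual information of a state restricted to $\mathfrak{A}_i \otimes \mathfrak{B}_i$ is the relative entropy $S_{\mathfrak{A}_i \otimes \mathfrak{B}_i}(\omega \Vert \omega_A \otimes \omega_B)$ with respect to the restricted product state. Property (vi) of the relative entropy then gives monotone convergence $E_I^{(i)}(\omega) \nearrow E_I(\omega)$ and $E_I^{(i)}(\omega_j) \nearrow E_I(\omega_j)$, and both inequalities pass to the limit: the lower bound because $\sum_j \lambda_j E_I^{(i)}(\omega_j) \to \sum_j \lambda_j E_I(\omega_j)$ by monotone convergence of the (nonnegative, increasing) terms, the upper bound because $E_I^{(i)}(\omega) \leq \sum_j \lambda_j E_I(\omega_j) + 2\sum_j \eta(\lambda_j)$ holds for every $i$ by monotonicity. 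Unlike the bulk entropies, mutual informations are always well defined as relative entropies, so no $\infty - \infty$ manipulation ever occurs.
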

\begin{proof}
	This result is known for type $I$-factors, e.g. \cite{shirokov2017tight}. If $A$ and $B$ are generic hyperfinite factors, then there is a family of bipartite systems of finite-dimensional (and hence type $I$) factors that is weakly dense in $A\otimes B$. On each of these finite-dimensional systems, the statement holds by the previous argument. We conclude the statement using the approximation property (vi) of the relative entropy.
\end{proof}

 It is an easy remark to notice that, always by assuming $A$ and $B$ to be finite dimensional type I factors, if $\omega = \sum_j \lambda_j \omega_j$ is a convex  decomposition of a state $\omega$ in states $\omega_j$, then
\begin{center}
	$\sum_j \lambda_j E_I(\omega_j) - \sum_j \eta(\lambda_j) \leq E_I (\omega) \leq \sum_j \lambda_j E_I(\omega_j) +2 \sum_j \eta(\lambda_j) \,.$
\end{center}
By monotonicity of the relative entropy, the same inequalities hold if $\omega$ is normal and $A$ and $B$ are both hyperfinite type I  factors. Furthermore, if $\omega$ is pure then $E_I(\omega) = 2S(\omega_A) = 2S(\omega_B)$ (Proposition 6.5. of \cite{ohya2004quantum}) while the relative entanglement entropy between $A$ and $B$ is \cite{vedral1998entanglement}
\[
E_R(\omega) = S(\omega_A) = S(\omega_B) \,.
\]

\begin{defn} \label{def:sep}
	A cp map $\mathcal{F} \colon A_1 \otimes B_1 \to A_2 \otimes B_2$ between two bipartite systems will be called {\em local} if it is of the form
	\[
	\mathcal{F}(a \otimes b) = \mathcal{F}_A(a )  \otimes \mathcal{F}_B( b) \,,
	\]
	where $\mathcal{F}_A$ and $\mathcal{F}_B$ are normal cp maps. More generally, a {\em separable operation} is by definition a family of normal, local cp maps $\mathcal{F}_j$ such that $\sum_j \mathcal{F}_j(1) = 1$. We think of such an operation as mapping a state $\omega$ with probability $p_j = \omega(\mathcal{F}_j(1))$ to $\mathcal{F}^*_j \omega / p_j$.
\end{defn}
Separable operations map separable states to separable states.  The relative entanglement entropy \eqref{eq:ree} of a bipartite system $A \otimes B$  has the following properties \cite{hollands2018entanglement}.

\begin{itemize}
	\item[(e0)] (symmetry) $E_R(\omega)$ is independent of the order of the systems $A$ and $B$.\footnote{More precisely, we should say that $E_R(\omega) = E_R(\omega \cdot \pi)$, with $ E_R(\omega \cdot \pi)$ the relative entanglement entropy on $B \otimes A$ and $\pi$ the natural permutation isomorphism.}
	\item[(e1)] (non-negative) $E_R(\omega) \in [0, \infty]$, with $E_R(\omega)=0$  if $\omega$ is separable and $E_R(\omega)= \infty$ when $\omega$ is not normal. Furthermore, if $E_R(\omega)=0$ then $\omega$ is norm limit of separable states.
	\item[(e2)] (continuity) Let $(\mathfrak{A}_i
	)_i$ and $(\mathfrak{B}_i
	)_i$ be two  increasing nets of subalgebras of $A$ and $B$ respectively, with $\mathfrak{A}_i \cong \mathfrak{B}_i  \cong M_{n_i}(\C)$. Let $\omega_i$ and $\omega_i'$ be normal states on $\mathfrak{A}_i \otimes  \mathfrak{B}_i $ such that $\lim_i \Vert \omega_i - \omega_i' \Vert = 0$. Then 
	\[
	\lim_{i \to \infty} \frac{E_R(\omega_i') - E_R(\omega_i)}{\ln n_i} =0 \,.
	\]
	\item[(e3)] (convexity) $E_R$ is convex. 
	\item[(e4)] (monotonicity under separable definitions) Consider a separable operation described by cp maps $\mathcal{F}_j $ with $\sum_j \mathcal{F}_j (1) =1$. Then
	\[
	\sum_j p_j E_R(\mathcal{F}^*_j \omega / p_j) \leq E_R(\omega) \,,
	\]
	where the sum is over all $j$ with $p_j = \omega(\mathcal{F}_j(1))>0$. 
	\item[(e5)] (tensor products) Let ${A}_i \otimes {B}_i$ with $i=1,2$ be two bipartite systems, and let $\omega_i$ be states on ${A}_i \otimes {B}_i $. Then
	\[
	E_R(\omega_1 \otimes \omega_2) \leq E_R(\omega_1 ) + E_R( \omega_2) \,.
	\]
\end{itemize}

The mutual information \eqref{eq:mi} clearly satisfies (e0) and (e5), and it is shown in \cite{hollands2018entanglement} that it also  satisfies properties (e2) and (e4). Property (e1) does not follow in a straightforward way from the definitions, since we can state inequality \eqref{eq:e1} at most. Property  (e3) does not hold in general.

.

	\section{Modular nuclearity conditions} \label{sec:4}

We will say that a split pair $(A, B)$ is {\em standard} if $A$, $B$ and $A \vee B$ are in standard form with respect to some vector $\Omega$. We  set $J_{ A} = J_{A, \Omega}$, $J_B = J_{B, \Omega}$, and similarly $\Delta_A = \Delta_{ A , \Omega}$, $\Delta_B = \Delta_{ B , \Omega}$. As $A \otimes B$ is in standard form with respect to  $\Omega \otimes \Omega$,  the isomorphism $\phi \colon A \vee B \to A \otimes B$ has a {\em standard implementation}, namely is uniquely implemented by some unitary $U$ which maps the natural cone of $A \vee B$ onto the natural cone of $A \otimes B$  \cite{doplicher1984standard}. It can also be shown that $J_{ A} \otimes J_{B} = U J U^{-1} $, with $J = J_{ A \vee B, \Omega}$. The {\em canonical intermediate type I factors} are $F= U^{-1}(B(\hil) \otimes \mathbbm{1})U$ and $F' = U^{-1}( \mathbbm{1} \otimes B(\hil))U$. By construction, $F $ is the unique $J$-invariant type I factor $A \subseteq {F} \subseteq B'$, and similarly for $F'$. If $A$ and $B$ are both factors then ${F} = A \vee J A J = B' \cap J B' J$, and therefore ${F'} = B \vee J B J = A' \cap J A' J$ \cite{doplicher1984standard}.  \\ 

 An inclusion   $N \subseteq M$  of von Neumann algebras  is said to be {\em split} if the pair $(N, M')$ is split.  We shall often pass from a split inclusion to a
 split pair and back. The trivial inclusion $N=M$ is split if and only if $N$ is a type I factor 	\cite{longo2020lectures}. The inclusion  $N \subseteq M$ is said to be {\em standard} if there is a vector $\Omega$ which is standard for $N$, $M$ and the relative commutant $N' \cap M$. If $N \subseteq M$ is a standard split inclusion then each intermediate type I factor $R$ is $\sigma$-finite and hence separable, therefore the Hilbert space $\hil$ has to be separable as $R \Omega$ is dense in $\hil$. If  $N \vee M'$ has a cyclic and separating vector, then the pair $(N, M')$ is split if and only if  there is an intermediate type I factor  $N  \subseteq R \subseteq M$ \cite{longo2020lectures}.

\begin{defn}
	Consider an inclusion $N \subseteq M$ of von Neumann algebras on a Hilbert space $\hil$. Assume the existence of a standard vector $\Omega$ for $M$ and denote by $\Delta$ the corresponding modular operator. We will say that the inclusion $N \subseteq M$  satisfies the {\em modular nuclearity condition} if the map
	\begin{equation} \label{eq:xi}
		\Xi \colon N \to \hil \,, \quad \Xi(x) = \Delta^{1/4} x \Omega \,,
	\end{equation}
	is nuclear. 
\end{defn}

A modular nuclear  inclusion of factors is split, and a split  inclusion of factors implies the compactness of the map \eqref{eq:xi} \cite{buchholz1990nuclear}. This motivates the interest in the split property in local quantum field theory contexts, where the split property amounts to some form of statistical independence between causally disjoint spacetime regions \cite{hollands2018entanglement, lechner2008construction}. \\

The previous nuclearity condition can be easily generalized as follows. Consider a linear map $\Theta \colon \mathcal{E} \to \mathcal{F}$ between Banach spaces. The map $\Theta$ is said to be of {\em type $l^p$}, $p>0$, if there exists a sequence of linear mappings $\Theta_i \colon \mathcal{E} \to \mathcal{F}$ of rank $i$ such that   \cite{buchholz1990nuclear2}
\begin{center}
	$	\sum_i \Vert  \Theta - \Theta_i \Vert^p < + \infty \,.$
\end{center}
The map $\Theta $ will be said to be of {\em type $s$} if it is of type $l^p$ for any $p>0$. Each mapping $\Theta$ of type $l^p$ for some $0 < p \leq 1$ is nuclear. Indeed, there are sequences of linear functionals $e_i \in \mathcal{E}^*$ and of elements $f_i$ in $\mathcal{F}$ such that 
\begin{center}\label{eq:pnc1}
	$		\Theta(x) = \sum_i e_i(x)f_i \,, \quad x \in \mathcal{E} \,, $ 
\end{center}
is an absolutely convergent series for each $x$ in $\mathcal{E}$, with
\begin{center}\label{eq:pnc2}
	$		\Theta(x) = \sum_i e_i(x)f_i \,, \quad \Theta(x) = \sum_i \Vert e_i \Vert^p \Vert f_i \Vert^p < + \infty \,. $ 
\end{center}
The induced quasi-norm, also called {\em $p$-norm}, is given by
\begin{center}
	$  \Vert \Theta \Vert_p = \inf  \Big( \sum_i \Vert e_i \Vert^p \Vert f_i \Vert^p \Big)^{1/p}  \,,$
\end{center}
where the infimum is taken over all  possible representations of $\Theta$ of the form \eqref{eq:pnc1}. The above nuclearity condition  can be then rephrased as {\em modular $p$-nuclearity condition} if the map \eqref{eq:xi} is of type $l^p$. 

\begin{defn}
	Let $(A, B)$ be a standard split pair with standard vector  $\Omega$ representing a  state $\omega$. Denote by $\Delta_A$ and $\Delta_B$ the corresponding modular operators. We define 
	\begin{equation} \label{eq:sn}
	\Xi_A(a) = \Delta_{B'}^{1/4} a \Omega  	\,, \quad \Xi_B(b) = \Delta_{A'}^{1/4} b \Omega   \,.
	\end{equation}
	with $a$ in $A$ and $b$ in $B$. Given $p>0$ we define the {\em $p$-partition function} as 	
	\begin{equation} \label{eq:nu}
		z_p = \min \{ \Vert \Xi_A \Vert_p , \Vert \Xi_B \Vert_p \}  \,.
	\end{equation}
	We will say that the pair $(A, B)$ satisfies the {\em $p$-modular nuclearity condition} if the $p$-partition function is finite. 	In the case $p=1$ we will simply talk of {\em partition function} and  of {\em modular nuclearity condition}.
\end{defn}

The  $p$-modular nuclearity condition implies the modular nuclearity condition if $p \leq 1$. In order to motivate our definition, we notice that if $(A, B)$ satisfies the  modular nuclearity condition then it is split.

\section{Results in chiral CFT} \label{sec:5}

We begin with a few definitions. Let $\mathcal{K}$ be the family of all the open, nonempty and non dense intervals of the circle.  For $I$ in $\mathcal{K}$, $I'$ denotes the interior of the complement. The M\"obius group $\text{M\"ob} $ acts on the circle by linear fractional transformations. A {\em M\"obius covariant net} $(\mathcal{A}, U, \Omega)$ consists of a family $\{\mathcal{A}(I) \}_{I \in \mathcal{K}}$ of von Neumann algebras acting on a separable Hilbert space $\hil$, a strongly continuous unitary representation $U$ of M\"ob and a vector $\Omega$ in $\hil$, called the {\em vacuum vector}, satisfying the following properties  \cite{d2001conformal}:
\newline 

(i) $\al(I_1) \subseteq \al(I_2)$ if $I_1 \subseteq I_2$ (isotony),

(ii) $U(g) \al(I) U(g)^* = \al(g.I)$ for every $g $ in M\"ob and $I$ in $\mathcal{K}$ (M\"obius covariance),

(iii)  the representation $U$ has {\em positive energy}, namely the generator of rotations has non-negative spectrum (positivity of the energy),

(iv) $\Omega$ is cyclic for the von Neumann algebra $ \bigvee_{I \in \mathcal{K}} \al(I)$, and up to a scalar $\Omega$ is the  unique M\"ob-invariant vector of $\hil$ (vacuum). \\

A M\"obius covariant net is said to be {\em twisted-local} if the following axiom is satisfied:  \\

(v) there exists a unitary $Z$ commuting with the representation $U$ and such that $Z\Omega = \Omega $ and  $Z \al(I')Z^* \subseteq \al(I)'$   (twisted-locality).\\

In the following, twisted-local M\"obius covariant nets will be briefly referred to as twisted-local nets. A few consequences of the axioms (i)-(iv) are  \cite{d2001conformal} \\

(vi) $\Omega$ is cyclic and separating for each $\al(I) $  (Reeh Schlieder property),

(vii) $\al(I) \subseteq \bigvee_{\alpha} \al(I_\alpha)$ if $I \subseteq \bigcup_\alpha I_\alpha$ (additivity), \\

while if we also assume (v) then we have   \cite{d2001conformal} \\

(viii) $\al(I') = Z\al(I)'Z^*$ for every $I$ in $\mathcal{K}$ (twisted-duality),

(ix) if $I_+$ is the upper half of the circle and $\Delta$ is the modular operator associated to $\al(I_+)$ and $\Omega$, then for every $t$ in $\R$ we have 
\begin{equation} \label{eq:BW}
	\Delta^{it} = U (\delta_{-2 \pi t}) \,,
\end{equation}
where $\delta$ is the one parameter group of dilations (Bisognano-Wichmann),

(x) each local algebra $\al(I)$ is a type III factor and $\bigvee_{I \in \mathcal{I}_\R} \al(I) = B(\hil)$, with $\mathcal{I}_\R$ the set of all the open, nonempty and non dense intervals of $ S^1 \setminus \{-1 \}$ (irreducibility). \\

A twisted-local net is said to be {\em local} if $Z$ is the identity. A particular class of local nets is certainly that of {\em conformal nets} \cite{panebianco2021loop}. In a conformal net the following additional property is automatic \cite{morinelli2018conformal}: \\

(xi) if $\bar{I} \subset J$ then there is a type I factor $\mathcal{R}$ such that $\al(I) \subset \mathcal{R} \subset \al(J)$ (split property). \\

As noticed above, every intermediate type I factor $\mathcal{R}$ is separable, hence the split property ensures the separability of the Hilbert space on which the local algebras $ \al(I)$ act. \\

Let $(\mathcal{A}, U, \Omega)$ be a twisted-local net on a Hilbert space $\hil$. We call a family $\mathcal{B} = \{  \mathcal{B}(I) \}_{I \in \mathcal{K}}$ of von Neumann subalgebras $\mathcal{B}(I) \subseteq \mathcal{A}(I) $ a {\em subnet of $\mathcal{A}$} if  it satisfies isotony and M\"obius covariance with respect to $U$. We will use the notation $\mathcal{B} \subseteq \mathcal{A}$ to denote the subnets  $\mathcal{B} $ of $ \mathcal{A}$. If $\mathcal{A}(I)' \cap \mathcal{B}(I) = \C $ for one (and hence for all) interval $I$ in $\mathcal{K}$, then the inclusion $\mathcal{B} \subseteq \mathcal{A}$ is said to be {\em irreducible}. If  we  denote by $e= [\hil_\mathcal{B} ]$ the orthogonal projection onto  $\hil_\mathcal{B} = \overline{\bigvee_{I \in \mathcal{K}} \mathcal{B}(I) \Omega}$, then it is easy to notice that $e$ is in the commutant of all the von Neumann algebras $\mathcal{B}(I)$ and that it commutes with $U$. Then $\mathcal{B}$ is itself a M\"obius covariant net on $e \hil$ with unitary representation the restriction of $U$ on $e \hil$. The projection $e$ does not depend on the choice of the interval $I$ and is the Jones projection of  the inclusion $\mathcal{B}(I) \subseteq \mathcal{A}(I)$. Therefore, how noticed in Theorem \ref{thm:jones},   $\mathcal{B}(I)$  is naturally isomorphic to the reduced von Neumann algebra $e\mathcal{B}(I)e$. If we have an inclusion of nets $\mathcal{B} \subseteq \mathcal{A}$, then for each interval $I$ there is a canonical faithful normal conditional expectation $\varepsilon_I \colon \mathcal{A}(I) \to \mathcal{B}(I)$ which preserves the vacuum state $\omega$ by Bisognano-Wichmann, M\"obius covariance for the subnet $\mathcal{B}$ and   Takesaki's theorem (\cite{takesaki2013theory}, Theorem IX.4.2.).

\begin{prop} \label{prop:conditional}
	Let $\mathcal{B} \subseteq \mathcal{A}$ be an inclusion  of  M\"obius covariant  nets. If $\mathcal{A}$ is twisted-local, then all the conditional expectations $\varepsilon_I \colon \mathcal{A}(I) \to \mathcal{B}(I)$ extend to a unique vacuum-preserving conditional expectation $\varepsilon \colon \mathfrak{A} \to \mathfrak{B}$. Here  the $C^*$-algebra $\mathfrak{A} $ is the  norm closure of the union of the local algebras  $\mathcal{A}(I)  $, and similarly for $ \mathfrak{B}$. The projection $e=[\mathfrak{B} \Omega]$ satisfies $e(x \Omega) = \varepsilon(x) \Omega$ and $exe = \varepsilon(x)e$ for all $x$ in $\mathfrak{A}$. 
\end{prop}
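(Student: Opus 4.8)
The plan is to first show that the local conditional expectations $\{\varepsilon_I\}_I$ are \emph{consistent} on nested intervals, then glue them into a single contractive map on a dense $*$-subalgebra and extend it to $\mathfrak{A}$ by continuity, and finally read off the conditional-expectation property, vacuum preservation and the formula for $e$ from the corresponding local statements. The single fact that makes everything run is that the Jones projection is the same for every interval. Indeed, since $\mathcal{B}$ is itself a M\"obius covariant net on $e\hil = \hil_\mathcal{B}$, the Reeh--Schlieder property (vi) applied to the subnet gives $\overline{\mathcal{B}(I)\Omega} = \hil_\mathcal{B}$ for every $I$, so the projection $e = [\mathfrak{B}\Omega] = [\hil_\mathcal{B}]$ coincides with each local Jones projection $[\mathcal{B}(I)\Omega]$. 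As $\Omega$ is faithful on each $\mathcal{A}(I)$ by Reeh--Schlieder, Theorem~\ref{thm:jones}(i) then yields the local identities
\[
e\,x\Omega = \varepsilon_I(x)\,\Omega\,, \qquad e\,x\,e = \varepsilon_I(x)\,e\,, \qquad x \in \mathcal{A}(I)\,,
\]
with one and the same $e$.

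For consistency, I would take $I_1 \subseteq I_2$ and $x \in \mathcal{A}(I_1) \subseteq \mathcal{A}(I_2)$. Applying the first identity for both intervals (with the common $e$) gives $\varepsilon_{I_1}(x)\Omega = e\,x\Omega = \varepsilon_{I_2}(x)\Omega$, so $\big(\varepsilon_{I_1}(x) - \varepsilon_{I_2}(x)\big)\Omega = 0$. Since $\varepsilon_{I_1}(x) \in \mathcal{B}(I_1) \subseteq \mathcal{B}(I_2)$ and $\varepsilon_{I_2}(x) \in \mathcal{B}(I_2)$, their difference lies in $\mathcal{B}(I_2)$, for which $\Omega$ is separating; hence $\varepsilon_{I_1}(x) = \varepsilon_{I_2}(x)$. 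One should keep in mind that the very existence and geometric behaviour of the $\varepsilon_I$ rest on twisted-locality through Bisognano--Wichmann (ix): the modular group of $\mathcal{A}(I)$ is the geometric flow of dilations, which leaves $\mathcal{B}(I)$ globally invariant by M\"obius covariance of the subnet, so Takesaki's theorem furnishes the vacuum-preserving $\varepsilon_I$. Passing to the cofinal directed family of intervals whose closure avoids a fixed point of $S^1$ — on which $\bigcup_I \mathcal{A}(I)$ is a genuine norm-dense $*$-subalgebra $\mathfrak{A}_0 \subseteq \mathfrak{A}$ — consistency shows that $\varepsilon_0 := \bigcup_I \varepsilon_I$ is a well-defined linear map $\mathfrak{A}_0 \to \mathfrak{B}$.

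Each $\varepsilon_I$ is unital completely positive and therefore a contraction, so $\varepsilon_0$ is contractive and extends uniquely to a contraction $\varepsilon \colon \mathfrak{A} \to \mathfrak{B}$, uniqueness being automatic from density. On $\mathfrak{A}_0$ one has $\varepsilon_0^2 = \varepsilon_0$, $\varepsilon_0|_{\mathfrak{B}_0} = \mathrm{id}$ and $\varepsilon_0(\mathfrak{A}_0) = \mathfrak{B}_0$, so by continuity $\varepsilon$ is a norm-one projection of $\mathfrak{A}$ onto $\mathfrak{B}$; by Tomiyama's theorem it is then a conditional expectation. Vacuum preservation descends the same way, as $\omega\circ\varepsilon_I = \omega$ on each $\mathcal{A}(I)$ forces $\omega\circ\varepsilon_0 = \omega$ on $\mathfrak{A}_0$ and hence $\omega\circ\varepsilon = \omega$ on $\mathfrak{A}$. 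Finally, the identities $e\,x\Omega = \varepsilon(x)\Omega$ and $e\,x\,e = \varepsilon(x)e$ hold on $\mathfrak{A}_0$ by the local identities above and extend to all $x \in \mathfrak{A}$, since the four maps $x\mapsto e\,x\Omega$, $x\mapsto\varepsilon(x)\Omega$, $x\mapsto e\,x\,e$ and $x\mapsto\varepsilon(x)e$ are norm-continuous and agree on the dense subalgebra.

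The main obstacle is the consistency step, and more precisely the observation that the Jones projection does not depend on the interval; once this is available, the separating property of $\Omega$ closes the argument in one line. A secondary, bookkeeping-type point is to make sense of $\mathfrak{A}$ as a $C^*$-algebra at all: the union over all of $\mathcal{K}$ need not be directed, so one should pass to the cofinal directed subfamily of intervals avoiding a fixed point in order to obtain a genuine dense $*$-subalgebra before extending by continuity.
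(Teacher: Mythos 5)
Your proof follows the paper's own argument essentially step for step: compatibility of the $\varepsilon_I$ is derived from the interval-independence of the Jones projection $e$ together with the separating property of $\Omega$ (the paper phrases this as ``the vacuum is locally faithful by Reeh--Schlieder''); the global map is then obtained by norm-continuous extension of the glued local expectations (the paper's ``BLT'' step, with your appeal to Tomiyama making explicit the inference from norm-one projection to conditional expectation); vacuum preservation passes through the same limit; and the identities $ex\Omega = \varepsilon(x)\Omega$ and $exe = \varepsilon(x)e$ extend from the dense union by continuity, exactly as in the paper. The facts you rederive along the way --- that $e=[\mathcal{B}(I)\Omega]$ for every $I$ by Reeh--Schlieder applied to the subnet on $e\hil$, and that the $\varepsilon_I$ exist by Bisognano--Wichmann plus Takesaki's theorem --- are stated in the paper in the paragraph preceding the proposition, so including them is sound rather than a deviation.

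One claim in your bookkeeping remark is inaccurate, though repairable. The family of intervals whose closure avoids a fixed point $\zeta\in S^1$ is indeed directed, but it is \emph{not} cofinal in $\mathcal{K}$: an interval containing $\zeta$ is contained in no member of that family, and the union of the corresponding local algebras is in general only strongly dense (by additivity), not norm dense, in the algebra generated by all the $\mathcal{A}(I)$, $I\in\mathcal{K}$. So, as written, your $\varepsilon$ is constructed only on a proper $C^*$-subalgebra of the $\mathfrak{A}$ of the statement. You have in fact put your finger on a point the paper itself glosses over --- $\mathcal{K}$ is not directed, so the union over all of $\mathcal{K}$ is neither a $*$-algebra nor even a linear space, and the paper's ``norm closure of the union'' together with its BLT step is loose in the same way --- but the honest fix is to \emph{define} $\mathfrak{A}$ over a directed subfamily such as yours (as the paper effectively does elsewhere, e.g.\ with $\mathcal{I}_\R$ in property (x)), or to note that your consistency argument works verbatim for any pair of intervals whose algebras both contain a given element, making $\varepsilon$ well defined on each directed piece; one should not claim cofinality or norm density. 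Since only nested intervals are used in the later applications (Theorem \ref{thm:conditional}), this mislabeling does not affect the substance of the result.
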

\begin{proof}
	Denote by $\omega(\cdot) = (\Omega | \cdot \Omega)$ the vacuum  state and by $e$ the orthogonal projection onto  $\hil_\mathcal{B} = \overline{\bigvee_{I \in \mathcal{K}} \mathcal{B}(I) \Omega}$. As mentioned above, for each interval $I$ in $\mathcal{K}$ we have that $e(x \Omega) = \varepsilon_I(x) \Omega$ for  $x$ in $\mathcal{A}(I) $. As the vacuum is locally faithful by Reeh-Schlieder, this implies that the conditional expectations are compatible, namely $\varepsilon_J$ is an extension of $\varepsilon_I$ whenever $I \subseteq J$. Therefore, for $x$ in the union $\bigcup_{I \in \mathcal{K}} \mathcal{A}(I)$ we can define $\varepsilon(x)  $ by setting $\varepsilon(x) = \varepsilon_I(x)  $ whenever $x$ is in  $\mathcal{A}(I)$. The map $\varepsilon$ is bounded since every $\varepsilon_I  $ has unital norm, hence  we can continuously extend $\varepsilon$ to $\mathfrak{A}$ (this procedure is sometimes known as the BLT theorem). Finally, $\varepsilon$ is a conditional expectation since by  continuity $\varepsilon$ is a positive $ \mathfrak{B}$-linear projection, and the identity $\omega = \omega \cdot \varepsilon$ follows as well. To prove the last statement, one first shows that $x \Omega \mapsto \varepsilon(x) \Omega$, with $x$ in $\mathfrak{A}$, is a well defined projection onto $e \hil$. The identity $exe = \varepsilon(x)e$ follows \cite{jones1983index}. 
\end{proof}

\begin{lem} \label{lem:twist}
	Let $\mathcal{B} \subseteq \mathcal{A}$ be an inclusion  of  M\"obius covariant  nets with Jones projection $e=[\mathfrak{B} \Omega]$. Assume $(\mathcal{A}, U, \Omega)$ to be twisted-local with twist operator $Z_A$. If $eZ_A=Z_Ae$, then  $\mathcal{B}$  is twisted-local. 
\end{lem}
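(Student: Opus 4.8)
The plan is to exhibit the twist operator for $\mathcal{B}$ explicitly as the reduction of $Z_A$ to the subspace $e\hil$ on which $\mathcal{B}$ lives, and then to deduce twisted-locality of $\mathcal{B}$ from that of $\mathcal{A}$ by a reduction argument. I would set $Z_B := Z_A e = e Z_A$, regarded as an operator on $e\hil$. Since by hypothesis $e$ commutes with $Z_A$, the unitary $Z_A$ leaves $e\hil$ invariant, so $Z_B$ is a unitary on $e\hil$. Recall from the discussion preceding the lemma that $e = [\hil_\mathcal{B}]$ commutes with $U$; as $Z_A$ also commutes with $U$, the operator $Z_B$ commutes with the restricted representation $U|_{e\hil}$, which is exactly the representation carried by the net $\mathcal{B}$. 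Finally $\Omega \in \hil_\mathcal{B}$ (since $\mathbbm{1} \in \mathcal{B}(I)$) forces $e\Omega = \Omega$, whence $Z_B \Omega = Z_A e \Omega = Z_A \Omega = \Omega$. Thus $Z_B$ already satisfies every requirement of axiom (v) except the twisted-locality inclusion itself.

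Next I would establish that inclusion at the level of $B(\hil)$. Fix $I \in \mathcal{K}$. Because $\mathcal{B}$ is a subnet we have $\mathcal{B}(I') \subseteq \mathcal{A}(I')$ and $\mathcal{B}(I) \subseteq \mathcal{A}(I)$, the latter giving $\mathcal{A}(I)' \subseteq \mathcal{B}(I)'$. Applying the twisted-locality of $\mathcal{A}$, namely $Z_A \mathcal{A}(I') Z_A^* \subseteq \mathcal{A}(I)'$, one obtains $Z_A \mathcal{B}(I') Z_A^* \subseteq Z_A \mathcal{A}(I') Z_A^* \subseteq \mathcal{A}(I)' \subseteq \mathcal{B}(I)'$, where all commutants here are taken in $B(\hil)$.

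The remaining, and most delicate, step is to descend this inclusion to $e\hil$. As recalled before the lemma, the local algebras of $\mathcal{B}$ regarded on $e\hil$ are the reduced algebras $e\mathcal{B}(I)e$, so the inclusion to be verified is $Z_B \big( e\mathcal{B}(I')e \big) Z_B^* \subseteq \big( e\mathcal{B}(I)e \big)'$, with the commutant now taken in $B(e\hil)$. For $b \in \mathcal{B}(I')$ the hypothesis $eZ_A = Z_A e$ gives $Z_B (ebe) Z_B^* = e\,(Z_A b Z_A^*)\,e$, and by the previous paragraph $d := Z_A b Z_A^*$ belongs to $\mathcal{B}(I)'$. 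The crucial structural input is that $e = [\hil_\mathcal{B}]$ lies in $\mathcal{B}(I)'$ for every interval, as noted in the text. Hence for any $c \in \mathcal{B}(I)$ both $e$ and $d$ commute with $c$, and a direct computation shows that $ede$ and $ce$ commute as operators on $\hil$ (both products reduce to $cede$); restricting to $e\hil$ yields $ede|_{e\hil} \in (e\mathcal{B}(I)e)'$. This is precisely the standard reduction-of-commutants statement, that a projection $e \in \mathcal{N}'$ sends $\mathcal{N}'$ into the commutant of $e\mathcal{N}e$ on $e\hil$, applied to $\mathcal{N} = \mathcal{B}(I)$. Since $I \in \mathcal{K}$ was arbitrary, $Z_B$ implements twisted-locality for $\mathcal{B}$.

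I expect the main obstacle to be exactly this last reduction step: one must keep careful track of which commutants are computed in $B(\hil)$ and which in $B(e\hil)$, and verify that reducing an element of $\mathcal{B}(I)'$ by the projection $e \in \mathcal{B}(I)'$ lands in the relative commutant of the reduced algebra on $e\hil$. Everything else, namely unitarity, vacuum invariance and commutation with $U$, follows immediately from the standing facts that $e$ commutes with both $U$ and $Z_A$ and fixes $\Omega$.
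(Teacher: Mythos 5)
Your proof is correct and follows essentially the same route as the paper: take $Z_B = eZ_Ae$ (equivalently $Z_Ae$ on $e\hil$), check vacuum invariance and commutation with $U$, and push the twisted-locality inclusion of $\mathcal{A}$ through the chain $Z_A\mathcal{B}(I')Z_A^* \subseteq Z_A\mathcal{A}(I')Z_A^* \subseteq \mathcal{A}(I)' \subseteq \mathcal{B}(I)'$ before compressing by $e$. The only differences are cosmetic: where the paper invokes Theorem \ref{thm:jones} to identify $e\mathcal{B}(I')e$ with $e\mathcal{A}(I')e$, you use the simpler subnet inclusion $\mathcal{B}(I')\subseteq\mathcal{A}(I')$, and you spell out the reduction-of-commutants step ($e\in\mathcal{B}(I)'$ implies $e\mathcal{B}(I)'e\vert_{e\hil}\subseteq (e\mathcal{B}(I)e)'$ on $e\hil$) that the paper compresses into ``the thesis follows.''
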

\begin{proof}
	We show $Z_B = eZ_Ae$ to be a twist operator for $\mathcal{B} $. Clearly $Z_B$ fixes the vacuum vector and commutes with $U$. Then, by Theorem \ref{thm:jones} and twisted locality we have  the chain of  inclusions $Z_B \mathcal{B}(I')Z_B^* =  Z_B \mathcal{A}(I')Z_B^* \subseteq e \mathcal{A}(I)' e \subseteq e \mathcal{B}(I)' e $. The thesis follows.
\end{proof}

\begin{defn}
	\cite{longo2021neumann} Let $(\mathcal{A}, U, \Omega)$ be a twisted-local net satisfying the split property. Given a couple of distant open intervals $I$ and $J$ of the circle, denote by $\mathcal{F} $ the canonical intermediate type I factor associated to the bipartite system $\mathcal{A}(I) \vee \mathcal{A}(J)$. We define the {\em canonical entanglement entropy} of $\omega$ with respect to $(I,J)$ the von Neumann entropy
	\[
	E_C(\omega) = S_\mathcal{F}(\omega) =  S_{\mathcal{F}'}(\omega) \,.
	\]
\end{defn}

\begin{remark} \label{rem:ec}
In the notation of the previous definition, assume $(\mathcal{A}, U, \Omega)$  to be local. Since $\omega$ is a pure state on $B(\hil) = \mathcal{F} \vee \mathcal{F}'$, then by the monotonicity of the relative entropy and \eqref{eq:mi} we find that 
\[
E_I(\omega) \leq 2 E_C(\omega) \,,
\]
with $E_I(\omega) $ the mutual information of the bipartite system $\mathcal{A}(I) \vee \mathcal{A}(J)$. 
\end{remark}

\begin{thm} \label{thm:conditional}
	Let $(\mathcal{A}, U, \Omega)$ be a twisted-local net on some Hilbert space $\hil$ with twist operator $Z_A$. Consider a twisted-local subnet  $\mathcal{B}$ of $\mathcal{A}$ satisfying the assumptions of Lemma \ref{lem:twist}. Assume also $\mathcal{A}$ and $\mathcal{B}$ to both  satisfy the split property. Denote by  $\mathcal{F}_A $ and $ \mathcal{F}_B$   the   canonical intermediate type I factors corresponding to the inclusions $\al(I) \subseteq   \al(\tilde{I})$  and $\mathcal{B}(I) \subseteq   \mathcal{B}(\tilde{I})$  respectively, with $\overline{I} \subseteq \tilde{I}$.  If $\varepsilon$ and $e$ are as in Proposition \ref{prop:conditional}, then $\varepsilon(\mathcal{F}_A)e = \mathcal{F}_B$. In particular, 
	\begin{equation} \label{eq:inequality}
		S_{\mathcal{F}_B}(\omega) \leq S_{\mathcal{F}_A}(\omega) \,.
	\end{equation}
\end{thm}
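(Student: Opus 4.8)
The plan is to reduce the operator identity to a statement on the subspace $e\hil$, identify the compressed algebra with $\mathcal{F}_B$ using the Doplicher-Longo description of canonical type I factors together with the compatibility of $e$ with the modular data, and then read off \eqref{eq:inequality} from monotonicity of entropy under a conditional expectation.

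First I would use \autoref{thm:jones}(i), as recorded in \autoref{prop:conditional}: since $\mathcal{F}_A\subseteq\mathcal{A}(\tilde I)$ and $\varepsilon(x)e = exe$ for every $x\in\mathcal{A}(\tilde I)$, one has $\varepsilon(\mathcal{F}_A)e = e\mathcal{F}_A e$, so the claim becomes the equality $e\mathcal{F}_A e = \mathcal{F}_B$ of von Neumann algebras on $e\hil = \hil_{\mathcal{B}}$. I then invoke the Doplicher-Longo form of the canonical factors: writing $\M = \mathcal{A}(I)\vee\mathcal{A}(\tilde I)'$ and $J = J_{\M,\Omega}$ one has $\mathcal{F}_A = \mathcal{A}(I)\vee J\mathcal{A}(I)J$, while on $e\hil$, with $\M_{\mathcal{B}} = \mathcal{B}(I)\vee\mathcal{B}(\tilde I)'$ and $J_{\mathcal{B}} = J_{\M_{\mathcal{B}},\Omega}$, one has $\mathcal{F}_B = \mathcal{B}(I)\vee J_{\mathcal{B}}\mathcal{B}(I)J_{\mathcal{B}}$.

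The heart of the argument is to show that $e$ is compatible with the two constructions, namely $eJ = Je$ and $eJe = J_{\mathcal{B}}$ as a map on $e\hil$; granting this, a direct computation using $e\mathcal{A}(I)e = \mathcal{B}(I)e$ gives $e\mathcal{F}_A e = \mathcal{B}(I)\vee J_{\mathcal{B}}\mathcal{B}(I)J_{\mathcal{B}} = \mathcal{F}_B$. To obtain the compatibility I would lean on the theory of standard representations. By \autoref{cor:natural}, applied to the vacuum-preserving expectations $\varepsilon_I\colon\mathcal{A}(I)\to\mathcal{B}(I)$ and $\varepsilon_{\tilde I}\colon\mathcal{A}(\tilde I)\to\mathcal{B}(\tilde I)$, the Jones projection $e$ already intertwines the modular conjugations and operators of the single algebras $\mathcal{A}(I),\mathcal{A}(\tilde I)$ with those of $\mathcal{B}(I),\mathcal{B}(\tilde I)$; moreover $e$ commutes with $Z_A$ by the hypothesis of \autoref{lem:twist}, so by twisted-duality it intertwines $\mathcal{A}(\tilde I)' = Z_A\mathcal{A}(\tilde I')Z_A^*$ with $\mathcal{B}(\tilde I)'$ on $e\hil$. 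The split property of both nets lets me pass to the standard implementation $U$ of $\mathcal{A}(I)\vee\mathcal{A}(\tilde I)'\cong\mathcal{A}(I)\otimes\mathcal{A}(\tilde I)'$, through which $J = U^*(J_{\mathcal{A}(I)}\otimes J_{\mathcal{A}(\tilde I)'})U$; the single-algebra compatibilities above, promoted to the product via the standard implementation of the subnet's split, are what I would use to force $e$ to intertwine $U$ with its subnet analogue and hence to yield $eJ = Je$ and $eJe = J_{\mathcal{B}}$. Equivalently, this step is exactly the assertion that $\mathcal{F}_A$ is $\varepsilon_{\tilde I}$-invariant with $\varepsilon_{\tilde I}(\mathcal{F}_A)$ equal to the copy of $\mathcal{F}_B$ inside $\mathcal{B}(\tilde I)$, so that $\varepsilon_{\tilde I}$ restricts to a vacuum-preserving conditional expectation $\mathcal{F}_A\to\mathcal{F}_B$. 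From this the inequality \eqref{eq:inequality} follows by monotonicity of the von Neumann entropy under a vacuum-preserving expectation: lifting any convex decomposition of $\omega$ on $\mathcal{F}_B$ through this expectation and applying property (v) of the relative entropy, exactly as in \autoref{lem:inequality}, gives $S_{\mathcal{F}_B}(\omega)\leq S_{\mathcal{F}_A}(\omega)$.

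I expect the main obstacle to be precisely the compatibility step $eJ = Je$. The difficulty is that $\Omega$ is not a product vector for the split $\mathcal{A}(I)\vee\mathcal{A}(\tilde I)'\cong\mathcal{A}(I)\otimes\mathcal{A}(\tilde I)'$, so the tensor product $\varepsilon_I\otimes\mathrm{Ad}(Z_A)\varepsilon_{\tilde I'}\mathrm{Ad}(Z_A^*)$ of the single-algebra expectations is not vacuum-preserving and cannot be fed into \autoref{cor:natural} directly; equivalently, $UeU^*$ need not equal $e\otimes e$. Transferring the compatibility of $e$ from the individual algebras to the two-algebra system $\mathcal{A}(I)\vee\mathcal{A}(\tilde I)'$ — that is, controlling how the natural-cone standard implementation $U$ interacts with $e$ — is the genuinely delicate point, and is where the standard-representation machinery for standard split inclusions has to carry the argument rather than an elementary tensor-product construction.
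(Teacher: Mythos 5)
You have correctly identified the skeleton of the argument: reduce the claim $\varepsilon(\mathcal{F}_A)e=\mathcal{F}_B$ to the equality $e\mathcal{F}_Ae=\mathcal{F}_B$ via \autoref{thm:jones}, isolate the modular compatibility $eJ_A=J_Ae$ and $J_Ae=J_Be$ (with $J_A$ the modular conjugation of $\mathcal{A}(I)\vee\mathcal{A}(\tilde I)'$) as the crux, and conclude the entropy inequality from \autoref{lem:inequality} exactly as the paper does. But the crux itself is not proved in your proposal: the route you sketch --- factorizing $J_A$ through the standard implementation $U$ of the split isomorphism and then ``forcing'' $e$ to intertwine $U$ with its subnet analogue --- is left as an unestablished assertion, and your own closing paragraph concedes the difficulty (since $\Omega$ is not a product vector for the split, $UeU^*$ need not equal $e\otimes e$, and the single-interval compatibilities from \autoref{cor:natural} do not obviously promote to the two-interval system). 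As it stands, the proposal has a genuine gap precisely at its load-bearing step.

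The paper closes this gap by a different and simpler mechanism that never touches the standard implementation. It first notes that the vacuum-preserving expectation $\varepsilon\colon\mathcal{A}(\tilde I)\to\mathcal{B}(\tilde I)$ of \autoref{prop:conditional} restricts to a vacuum-preserving conditional expectation on the relative commutant, $\varepsilon\colon\mathcal{A}(I)'\cap\mathcal{A}(\tilde I)\to Z_A\mathcal{B}(I')Z_A^*\cap\mathcal{B}(\tilde I)$; this is where twisted duality and the hypothesis $eZ_A=Z_Ae$ of \autoref{lem:twist} enter, through the identity $e\mathcal{A}(I)'e=Z_A\mathcal{B}(I')Z_A^*e$. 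Since $\mathcal{A}(I)'\cap\mathcal{A}(\tilde I)$ is exactly the commutant of $\mathcal{A}(I)\vee\mathcal{A}(\tilde I)'$, and a von Neumann algebra shares its modular conjugation with its commutant, one can now run the computation from the proof of \autoref{cor:natural} ($eS\subseteq Se$ for the Tomita operator of this single algebra, hence $e$ commutes with $\Delta$ and $J$) on this one expectation and obtain $eJ_A=J_Ae=J_Be$ in one stroke --- no control of $U$ or of product vectors is ever needed. With this in hand, $\varepsilon$ maps $\mathcal{F}_A=\mathcal{A}(\tilde I)\cap J_A\mathcal{A}(\tilde I)J_A$ onto $\mathcal{B}(\tilde I)\cap J_A\mathcal{B}(\tilde I)J_A$, which compresses to $\mathcal{F}_B$, and the inequality \eqref{eq:inequality} follows as in your final step. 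If you replace your standard-implementation step by this restriction-to-the-relative-commutant argument, the rest of your proposal goes through essentially as written.
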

\begin{proof}
		We denote by $J_A$ the modular conjugation of $\mathcal{A}(I) \vee \mathcal{A}(\tilde{I})'$ and by $J_B$ the modular conjugation of $\mathcal{B}(I) \vee \mathcal{B}(\tilde{I})'$. By Proposition  \ref{prop:conditional} we have a conditional expectation $\varepsilon \colon \mathcal{A}(\tilde{I}) \to  \mathcal{B}(\tilde{I}) $, but by Lemma \ref{lem:twist} we also have that $e\mathcal{A}({I})'e = Z_A \mathcal{B}({I}') Z_A^*e $. Therefore, by Theorem \ref{thm:jones} the map $\varepsilon$ restricts to a conditional expectation $\varepsilon \colon \mathcal{A}({I})' \cap \mathcal{A}(\tilde{I}) \to  Z_A \mathcal{B}({I}') Z_A^* \cap \mathcal{B}(\tilde{I}) $, and thanks to the  considerations described in the proof of Corollary \ref{cor:natural}  we can claim that $eJ_A=J_Ae = J_Be$. We can now state that the conditional expectation $\varepsilon$ maps $  \mathcal{A}(\tilde{I}) \cap J_A \mathcal{A}(\tilde{I})J_A  $ onto $  \mathcal{B}(\tilde{I}) \cap J_A \mathcal{B}(\tilde{I})J_A  $:  clearly $\varepsilon(\mathcal{A}(\tilde{I})) = \mathcal{B}(\tilde{I})$, but thanks to Theorem \ref{thm:jones} and the  previous remark we also have  $\varepsilon(J_A \mathcal{A}(\tilde{I})J_A) = J_A\mathcal{B}(\tilde{I})J_A$. It follows that  $\varepsilon(\mathcal{F}_A)e = \mathcal{F}_B$. Finally, inequality \eqref{eq:inequality} is a consequence of   Theorem \ref{thm:conditional} and Lemma \ref{lem:inequality}.
\end{proof}

We think Theorem \ref{thm:conditional} to be one of the main results of this work. We point out that, even though we focused on twisted-local nets on the circle, Proposition \ref{prop:conditional} and Theorem \ref{thm:conditional} easily extend to any inclusion on twisted-local Haag-Kastler nets, provided that the Bisognano-Wichmann property is satisfied.



\begin{remark} \label{rem:law1}
	Let $(\mathcal{A}, U, \Omega)$ be a twisted-local net satisfying the split property. Given a couple of disjoint and distant open intervals $I$ and $J$ of $S^1 \setminus \{ -1\}$, by stereographic projection we can identify them with open intervals $\tilde{I}$ and $\tilde{J}$ of the real line \cite{panebianco2021loop}. Define $s = \text{dist}(\tilde{I}, \tilde{J})$ and denote by $E_C(s)$ the corresponding canonical entanglement entropy.  By monotonicity of the relative entropy, one can apply 	Theorem 17 of \cite{hollands2018entanglement} to provide a lower bound for  $E_C(s)$ in the limit $s \to 0$. Similarly, always by assuming dilation covariance, in higher dimension we can claim that the canonical entanglement entropy satisfies lower bounds of area law type.
\end{remark}

\begin{thm} \label{thm:main}
	Let $(\mathcal{A}, U, \Omega)$  be one of the following twisted-local conformal nets:
	
	1. the free Fermi net, 
	
	2. some $LSU(n)$-conformal net of level $\ell \geq 1$, 
	
	3. the $U(1)$-current, 
	
	4. the Virasoro net with  central charge  given by 
	\begin{equation} \label{eq:central}
		c = \frac{\ell (n^2 -1)}{\ell + n}  \,, 
	\end{equation}
with $\ell \geq 1$ and $n \geq 2$ integers. 

 If $ \mathcal{F} $  is  the  canonical intermediate type I factor given by the inclusion  $\mathcal{A}(I) \subseteq   \mathcal{A}(\tilde{I})$   with $\overline{I} \subseteq \tilde{I}$, then 
	\begin{equation}  \label{eq:finite}
		S_{\mathcal{F}}(\omega) < + \infty \,.
	\end{equation}
\end{thm}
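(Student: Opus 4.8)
The plan is to isolate a single analytic input — finiteness for the free Fermi net — and then obtain the other three families purely by monotonicity under subnet inclusions via Theorem \ref{thm:conditional}. Concretely, I would first establish \eqref{eq:finite} directly for the free Fermi net, and then realise each of the remaining models as a twisted-local subnet of a (tensor power of the) free Fermi net, so that the inequality \eqref{eq:inequality} transports finiteness downward, $S_{\mathcal{F}_B}(\omega) \le S_{\mathcal{F}_A}(\omega) < +\infty$.

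For the base case I would exploit the explicit second-quantised structure of the free Fermi net. Here the modular operator of $\al(\tilde I)$, the modular conjugation $J$ of $\al(I) \vee \al(\tilde I)'$, and hence the canonical factor $\mathcal{F} = \al(I) \vee J\al(I)J$ are all second quantisations of one-particle operators on $L^2(S^1)$. The vacuum restriction $\omega|_{\mathcal{F}}$ is then quasi-free, so $S_{\mathcal{F}}(\omega)$ is computed from a single one-particle operator $T$ through $\mathrm{tr}\big[\eta(T)+\eta(\id-T)\big]$; the inclusion $\overline I \subseteq \tilde I$ leaves a collar between the two boundaries, and the resulting smoothness and separation of the symbol make $\eta(T)+\eta(\id-T)$ trace class. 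This is precisely the point at which the argument of \cite{longo2021neumann} is invoked and extended.

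Next I would realise the remaining three models as subnets and check the hypotheses of Theorem \ref{thm:conditional}. The $U(1)$-current is the even subnet generated by $j = {:}\psi^{*}\psi{:}$ inside the complex free Fermi net; $LSU(n)_\ell$ embeds, via level–rank duality $SU(n)_\ell \subseteq SU(n\ell)_1$, as a subnet of the $SU(n\ell)_1$-net and hence of $n\ell$ free complex fermions, i.e.\ of a (graded) tensor power of the free Fermi net, which is again twisted-local; and the Virasoro net with central charge \eqref{eq:central} is exactly the Sugawara stress-energy subnet of $LSU(n)_\ell$, since $\tfrac{\ell(n^2-1)}{\ell+n}$ is its central charge. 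In each case both nets satisfy the split property (they are conformal nets, for which property (xi) holds), and the Jones projection $e=[\mathfrak B \Omega]$ commutes with the parity twist $Z_A$ of the fermionic net: the subnets are generated by even (bilinear) elements, so $Z_A$ preserves $\hil_{\mathcal B}$ and $eZ_A = Z_A e$, which is the hypothesis of Lemma \ref{lem:twist}. Applying Theorem \ref{thm:conditional} — twice, composing the inclusions $\mathrm{Vir}\subseteq LSU(n)_\ell\subseteq$ free Fermi net, in the Virasoro case — then yields \eqref{eq:finite} for all four families.

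The hard part will be the base case: producing the trace-class estimate that makes the free-Fermi quasi-free entropy converge, where the separation provided by $\overline I \subseteq \tilde I$ must be used quantitatively. A secondary, more bookkeeping obstacle is tracking the level–rank embedding and verifying in each instance the twist-compatibility $eZ_A = Z_A e$ of Lemma \ref{lem:twist}, together with the fact that the relevant subnets remain split, so that Theorem \ref{thm:conditional} genuinely applies.
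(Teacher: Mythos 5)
Your proposal is correct in substance and shares the paper's overall architecture --- a single analytic input (the free Fermi net, from \cite{longo2021neumann}) propagated downward through Theorem \ref{thm:conditional} via the level--rank inclusion $LSU(n)_\ell \subseteq LSU(n\ell)_1$, the Sugawara inclusion $\mathrm{Vir}_c \subseteq LSU(n)_\ell$ with $c$ as in \eqref{eq:central}, and the $U(1)$-current realised as an even subnet of the complex fermion net --- but you handle the level-$1$ loop group step by a genuinely different route. The paper deliberately does \emph{not} treat $LSU(n\ell)_1$ as a subnet of the fermion net: it warns that Theorem \ref{thm:conditional} cannot be applied to conformal subnets of the free Fermi net in general, because there only graded locality holds and the hypothesis $eZ_A = Z_A e$ of Lemma \ref{lem:twist} must be verified case by case (the paper checks it only for the $U(1)$-current). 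Instead, the paper obtains \eqref{eq:finite} for the level-$1$ nets by replicating the estimates of \cite{longo2021neumann} directly, exploiting the identity of the first-quantization Hilbert spaces in Wassermann's construction \cite{wassermann1998operator}; only then, with a \emph{local} ambient net in hand, does it invoke Theorem \ref{thm:conditional} for level--rank and for Virasoro. Your alternative --- verifying $eZ_A = Z_A e$ uniformly for even subnets via the grading unitary $\Gamma$ (since $Z_A$ is a function of $\Gamma$, $\Gamma\Omega = \Omega$, and $\Gamma$ commutes with even elements, so $\Gamma$ preserves $\hil_{\mathcal B}$) --- is sound and arguably cleaner, as it makes the twist-compatibility check explicit rather than an assertion, and it is essentially the argument the paper leaves implicit in its ``easily checked'' claim for the $U(1)$-current; the price is that your base case must be the multiplicity-$n\ell$ fermion net rather than a single fermion, though this extension is the same ``similarity of first-quantization spaces'' the paper itself relies on. Two further points match the paper exactly: both you and the paper ultimately cite \cite{longo2021neumann} for the quasi-free trace-class estimate rather than re-deriving it, and your hypothesis checks (split property for both nets via conformal covariance \cite{morinelli2018conformal}, locality of the loop group nets, the central charge identification in \eqref{eq:central}) are precisely those the paper needs for \eqref{eq:inequality} to apply.
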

\begin{proof}
	The finiteness property \eqref{eq:finite} has been proved on the free Fermi net  in \cite{longo2021neumann}, where explicit estimates can be found. However, by similarity of the first quantization Hilbert spaces, the same proof  can be replicated on  any $LSU(n)$-conformal net of level $\ell =1$  \cite{panebianco2021loop, wassermann1998operator}. The explicit estimates provided in \cite{longo2021neumann} still apply in this setting.   Therefore, since the embedding $LSU(n) \subseteq LSU(n \ell) $ gives rise to all the  $LSU(n)$-conformal nets of level $\ell \geq 1$  \cite{wassermann1998operator}, the estimates of \cite{longo2021neumann} apply to any $LSU(n)$-conformal net. 	Since  $LSU(n)$-conformal nets are local, we can apply Theorem \ref{thm:conditional} to any conformal subnet of these models like  the Virasoro net with central charge given by \eqref{eq:central} \cite{panebianco2021loop}. Theorem \ref{thm:conditional} cannot be applied to any conformal subnet of the free Fermi net, since in this case graded locality rather than locality holds \cite{longo2021neumann}. However, it can be easily checked that such a requirement holds at least for the    $U(1)$-current model \cite{longo2021neumann}. 
\end{proof}


Theorem \ref{thm:main}, which can be summarized as  a generalization of \cite{longo2021neumann}  by using Theorem \ref{thm:conditional},  is the main result of this work. We point out that the proof exhibited in  \cite{longo2021neumann} heavily depends on the structure of the free Fermi net. However, more  in general the finiteness property \eqref{eq:finite} is expected to rely on some nuclearity condition of the system such as the trace-class property \cite{hollands2018entanglement, otani2018toward}. To support this conjecture,  in the next section we provide a few results motivated by Remark~\ref{rem:ec}.

	\section{Modular nuclearity and entanglement} \label{sec:6}
	
	
	

In the previous section we proved the finiteness of some entanglement entropy to be finite on some twisted-local nets on the circle. Works on this topic suggest such an entanglement measure  to be  finite by assuming some modular nuclearity condition \cite{hollands2018entanglement, longo2021neumann}. Even though a general proof on a model independent ground is still lacking, in this section we provide a few  results in this direction.

	\begin{lem}\label{lem:HS-lem3}
		Let $(A, B)$ be a standard split pair with standard vector $\Omega$ inducing a state $\omega$. We  assume modular $p$-nuclearity to hold for some $0 < p \leq 1$, namely the $p$-partition function  \eqref{eq:nu} is finite for some $0 < p \leq 1$. Given $\epsilon >0$, there are sequences of normal linear functionals $\phi_j$ on $A$ and $\psi_j$ on $B$ such that 
		\begin{equation} \label{eq:sep}
			\omega(ab) = \sum_j \phi_j(a) \psi_j(b) \,, \quad a \in A \,, \; \;  b \in B \,,
		\end{equation}
		and $\sum_j \Vert \phi_j \Vert^p \Vert \psi_j \Vert^p < z_p^p + \epsilon $.
	\end{lem}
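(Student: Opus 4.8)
The plan is to build the factorisation \eqref{eq:sep} directly out of a fixed $l^p$-representation of one of the two maps in \eqref{eq:sn}. Since the assertion is symmetric in $A$ and $B$, I would assume without loss of generality that the minimum in \eqref{eq:nu} is attained by $A$, so that $z_p = \Vert \Xi_A \Vert_p$. Recalling $\Delta_{B'} = \Delta_B^{-1}$, the map $\Xi_A(a) = \Delta_{B'}^{1/4} a \Omega$ obeys the standard modular bound $\Vert \Delta_{B'}^{1/4} a \Omega \Vert \leq \Vert a \Vert$, which follows from $\Vert \Delta^{1/4} x \Omega \Vert^2 = (x\Omega \vert J x^* \Omega) \leq \Vert x\Omega \Vert \, \Vert x^*\Omega \Vert$; testing against $\eta \in \mathcal{D}(\Delta_{B'}^{1/4})$ for a $\sigma$-weakly convergent bounded net then shows that $\Xi_A$ is normal (weak${}^*$-to-weak continuous on bounded sets). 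Because $\Xi_A$ is normal, its $l^p$-representation may be taken with \emph{normal} functionals: replacing each $e_i$ in an arbitrary representation by its normal part $e_i^n$ does not increase the $p$-norm (as $\Vert e_i^n \Vert \leq \Vert e_i \Vert$), and the singular contributions cancel precisely because $\Xi_A$ is normal. Hence, given $\epsilon > 0$, I would fix $e_i \in A_*$ and $f_i \in \hi$ with $\Xi_A(a) = \sum_i e_i(a) f_i$ and $\sum_i \Vert e_i \Vert^p \Vert f_i \Vert^p < \Vert \Xi_A \Vert_p^p + \epsilon = z_p^p + \epsilon$.

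The heart of the argument is the modular identity $\omega(ab) = (\Delta_B^{1/4} b^* \Omega \vert \Xi_A(a))$ for $a \in A$ and $b \in B$. To obtain it I would start from
\[
\omega(ab) = (\Omega \vert b a \Omega) = (b^* \Omega \vert a \Omega) ,
\]
using $[a,b]=0$, then insert $1 = \Delta_B^{1/4}\Delta_B^{-1/4}$ in the second slot and carry the self-adjoint $\Delta_B^{1/4}$ to the first; since $a \in A \subseteq B'$ gives $a\Omega \in \mathcal{D}(\Delta_B^{-1/4}) = \mathcal{D}(\Delta_{B'}^{1/4})$ and $b^* \in B$ gives $b^*\Omega \in \mathcal{D}(\Delta_B^{1/4})$, the spectral calculus yields $(b^*\Omega \vert a\Omega) = (\Delta_B^{1/4} b^*\Omega \vert \Delta_{B'}^{1/4} a \Omega) = (\Delta_B^{1/4} b^*\Omega \vert \Xi_A(a))$. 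Substituting the chosen representation and using absolute convergence of $\sum_i e_i(a) f_i$ produces \eqref{eq:sep} with
\[
\phi_i(a) = e_i(a) , \qquad \psi_i(b) = (\Delta_B^{1/4} b^* \Omega \vert f_i) .
\]
The $\phi_i$ are normal by construction; for $\psi_i$ I would route the expression through the normal map $b \mapsto \Delta_B^{1/4} b \Omega$, observing that the involution $b \mapsto b^*$ and the conjugate-linearity of the inner product in its first slot combine to make $b \mapsto \psi_i(b)$ genuinely linear and normal.

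For the norm estimate, $\Vert \phi_i \Vert = \Vert e_i \Vert$, while the same modular bound applied to $B$, namely $\Vert \Delta_B^{1/4} b^* \Omega \Vert \leq \Vert b^* \Vert = \Vert b \Vert$, gives $\vert \psi_i(b) \vert \leq \Vert b \Vert \, \Vert f_i \Vert$ and hence $\Vert \psi_i \Vert \leq \Vert f_i \Vert$. Therefore $\sum_i \Vert \phi_i \Vert^p \Vert \psi_i \Vert^p \leq \sum_i \Vert e_i \Vert^p \Vert f_i \Vert^p < z_p^p + \epsilon$, which is the claim. I expect the only real obstacles to be technical rather than conceptual: justifying that the abstract $l^p$-representation can be taken with normal $e_i$ (resting on the normality of $\Xi_A$), and the domain bookkeeping needed to move $\Delta_B^{1/4}$ legitimately across the inner product in the key identity. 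Everything else is a direct consequence of Tomita--Takesaki theory and the definition of $\Vert \cdot \Vert_p$ as an infimum.
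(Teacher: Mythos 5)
Your proof is correct and takes essentially the same route as the paper's (which follows Lemma 3 of Hollands--Sanders): your key identity $\omega(ab) = (\Delta_B^{1/4} b^* \Omega \,\vert\, \Xi_A(a))$ is precisely the paper's expression $((\Delta^{1/4}+\Delta^{-1/4})^{-1}(b^*+JbJ)\Omega \,\vert\, \Xi_A(a))$ with the bounded factor collapsed via $(\Delta^{1/4}+\Delta^{-1/4})^{-1}(1+\Delta^{-1/2}) = \Delta^{-1/4}$, and you then define the $\psi_j$ from the vectors of an $l^p$-representation of $\Xi_A$ and bound $\Vert \psi_j \Vert \leq \Vert f_j \Vert$ exactly as the paper bounds $\Vert \psi_j \Vert \leq \Vert \xi_j \Vert$. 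The only difference is packaging: the paper's resolvent trick makes normality of the $\psi_j$ and the norm estimate immediate through the bound $\Vert (\Delta^{1/4}+\Delta^{-1/4})^{-1} \Vert \leq 1/2$, whereas you do the domain bookkeeping by hand and, usefully, make explicit the normal-parts argument for choosing the functionals $e_i$ normal --- a step the paper asserts without comment.
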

	
	\begin{proof}
	 We assume $z_p = \Vert \Xi_A \Vert_p$ and we follow Lemma 3 of \cite{hollands2018entanglement}.  Given $a$ in $A$ and $b$ in $B$, we note that 
		\begin{equation*}
			\begin{split}
				\omega(ab) & = (\Omega | ab \Omega) = ((\Delta^{1/4} + \Delta^{-1/4})^{-1}(1+\Delta^{-1/2})b^* \Omega | \Delta^{1/4} a \Omega) \\
				& = ((\Delta^{1/4} + \Delta^{-1/4})^{-1}(b^* + JbJ)\Omega | \Xi_A(a)) \,,
			\end{split}
		\end{equation*}
		where  $\Delta = \Delta_{B', \Omega}$ and $J=J_{B, \Omega}$. If $z_p$ is finite and $\epsilon >0$, then there are sequences of positive normal functionals $\phi_j$ on $A$ and vectors $\xi_j$ in $\hil$ such that
		\[
		\Xi_B(a) = \sum_j \phi(a) \xi_j \,, \quad a \in A \,,
		\]
		and $\sum_j \Vert \phi_j \Vert^p \Vert \xi_j \Vert^p < z_p^p + \epsilon$. Define now normal functionals $\psi_j$ on $B$ by
		\begin{center}
			$  			\psi_j(b) = ((\Delta^{1/4} + \Delta^{-1/4})^{-1}(b^* + JbJ)\Omega| \xi_j) \,,  $
		\end{center}
		and note that $\Vert \psi_j \Vert \leq \Vert \xi_j \Vert$ because of the estimate $\Vert (\Delta^{1/4} + \Delta^{-1/4})^{-1} \Vert \leq 1/2$ and the spectral calculus. Putting both paragraphs together we find the conclusion.
	\end{proof}

	Before providing a corollary of the previous lemma, we describe a general procedure known as {\em polarization} of a functional. Let $\omega$ be a continuous functional.  We will say that $\omega$ is {\em self-adjoint} if $\omega = \omega^*$, with $\omega^*(x) = \overline{\omega(x^*)}$ the {\em conjugate} of $\omega$.  By use of   $\omega^*$ one can write $\omega = \phi + i \psi$, with $\phi$ and $\psi$ self-adjoint. Then, after applying a Jordan decomposition on both $\phi$ and $\psi$, we can write $\omega= \sum_{k=0}^3 (i)^\alpha \omega_\alpha$, with $\omega_\alpha$ positive. The inequality $\Vert \omega_\alpha \Vert \leq \Vert \omega \Vert$ can also be proved, and $\omega_\alpha$ are all normal if $\omega$ is.

\begin{cor} \label{cor:four}
	With the notation of the previous lemma, for every $\epsilon > 0$ we can write $\omega = (1+\lambda) \omega_+ - \lambda \omega_-$, where $\omega_\pm$ are separable states and $(1+\lambda)^p \leq 4 (z_p^p + \epsilon)$. 
\end{cor}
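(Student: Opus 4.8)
The plan is to take the decomposition $\omega(ab) = \sum_j \phi_j(a)\psi_j(b)$ with $\sum_j \Vert\phi_j\Vert^p\Vert\psi_j\Vert^p < z_p^p + \epsilon$ supplied by Lemma~\ref{lem:HS-lem3} and to convert it into a difference of two rescaled separable states by polarizing the (generally non-positive) functionals $\phi_j$ and $\psi_j$. A preliminary observation is that $p$-summability upgrades to ordinary summability $\sum_j \Vert\phi_j\Vert\Vert\psi_j\Vert < \infty$: since $0<p\leq 1$, only finitely many indices have $\Vert\phi_j\Vert\Vert\psi_j\Vert > 1$, and for all the others $\Vert\phi_j\Vert\Vert\psi_j\Vert \leq \Vert\phi_j\Vert^p\Vert\psi_j\Vert^p$. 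This guarantees norm convergence of every rearranged series below, so that each object I build is a genuine separable functional on $A\otimes B$.

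Next I would apply the polarization recalled just before the statement to write $\phi_j = \sum_{\alpha=0}^{3} i^\alpha \phi_{j,\alpha}$ and $\psi_j = \sum_{\beta=0}^{3} i^\beta \psi_{j,\beta}$ with $\phi_{j,\alpha},\psi_{j,\beta}$ positive normal and $\Vert\phi_{j,\alpha}\Vert \leq \Vert\phi_j\Vert$, $\Vert\psi_{j,\beta}\Vert \leq \Vert\psi_j\Vert$. Expanding the products and collecting terms by the residue $m \equiv \alpha+\beta \pmod 4$ yields $\omega = \Phi_0 + i\Phi_1 - \Phi_2 - i\Phi_3$, where each $\Phi_m = \sum_j \sum_{\alpha+\beta\equiv m} \phi_{j,\alpha}\otimes\psi_{j,\beta}$ is a norm-convergent sum of positive product functionals, hence positive and separable. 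Since $\omega$ is a state it equals its conjugate $\omega^*$, and each $\Phi_m$ is self-adjoint; comparing $\omega$ and $\omega^*$ forces the imaginary part to vanish, i.e. $\Phi_1 = \Phi_3$ and $\omega = \Phi_0 - \Phi_2$.

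I would then normalize, setting the separable states $\omega_+ = \Phi_0/\Vert\Phi_0\Vert$ and $\omega_- = \Phi_2/\Vert\Phi_2\Vert$, and $\lambda = \Vert\Phi_2\Vert = \Phi_2(1)$. Evaluating $\omega = \Phi_0 - \Phi_2$ at the identity gives $\Vert\Phi_0\Vert - \Vert\Phi_2\Vert = 1$, whence $\Vert\Phi_0\Vert = 1+\lambda$ and $\omega = (1+\lambda)\omega_+ - \lambda\omega_-$ (the degenerate case $\Phi_2=0$, $\lambda=0$, simply exhibiting a separable $\omega$). For the constant, the key counting remark is that exactly four ordered pairs in $\{0,1,2,3\}^2$ satisfy $\alpha+\beta\equiv 0 \pmod 4$, so $1+\lambda = \Phi_0(1) = \sum_j\sum_{\alpha+\beta\equiv 0}\Vert\phi_{j,\alpha}\Vert\Vert\psi_{j,\beta}\Vert \leq 4\sum_j\Vert\phi_j\Vert\Vert\psi_j\Vert$. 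Raising to the power $p$ and invoking subadditivity of $t\mapsto t^p$ for $0<p\leq1$ (so that $(\sum_j a_j)^p \leq \sum_j a_j^p$) together with $4^p\leq 4$ gives $(1+\lambda)^p \leq 4^p\sum_j\Vert\phi_j\Vert^p\Vert\psi_j\Vert^p < 4(z_p^p+\epsilon)$.

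The polarization bookkeeping and the normalization are routine. The points that need care — and the only places where the hypotheses $0<p\leq1$ and self-adjointness of a state are genuinely used — are the upgrade from $p$-summability to norm summability, which legitimizes all the rearrangements; the cancellation of the $\Phi_1,\Phi_3$ contributions via $\omega=\omega^*$; and the final estimate, where the combinatorial factor $4$ and the inequalities $(\sum_j a_j)^p\leq\sum_j a_j^p$ and $4^p\le 4$ must line up to produce the stated constant.
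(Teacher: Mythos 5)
Your proposal is correct and follows essentially the same route as the paper's own proof: polarize each $\phi_j$ and $\psi_j$ into four positive normal parts, use positivity (self-adjointness) of $\omega$ to cancel the imaginary contributions and write $\omega$ as a difference of two separable positive functionals indexed by $\alpha+\beta\equiv 0$ and $\alpha+\beta\equiv 2 \pmod 4$, then normalize and count. The paper compresses the normalization and the estimate into ``the thesis follows,'' whereas you correctly supply the details it leaves implicit — the upgrade from $p$-summability to norm summability, the factor-of-four count, and the inequalities $(\sum_j a_j)^p\leq\sum_j a_j^p$ and $4^p\leq 4$ — all of which are sound.
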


\begin{proof}
	By polarization, we can decompose $\phi_j = \sum_{\alpha=0}^3 (i)^\alpha \phi_j^\alpha$ and  $\psi_j = \sum_{\alpha=0}^3 (i)^\alpha  \psi_j^\alpha$ in four positive normal functionals. One also has that  $ \Vert \phi_j^\alpha \Vert \leq \phi_j$ holds, and similarly for $\psi_j$. Since $\omega$ is positive, then after the identification $A \vee B \cong A \otimes B$ we find 
	\[
	\omega = \sum_j \sum_{\alpha=0}^{3} \phi_j^\alpha \otimes \psi_j^{4 - \alpha}- \sum_j \sum_{\alpha=0}^{3} \phi_j^\alpha  \otimes \psi_j^{2 - \alpha} \,,
	\]
	namely $\omega$ is difference of two separable functionals. The thesis follows.
\end{proof}

	\begin{lem}\label{lem:HS-lem4}
		With the hypotheses of  Lemma \ref{lem:HS-lem3}, assume $\omega$  to have an expression like in \eqref{eq:sep} and assume $\mu_p = \sum_j \Vert \phi_j \Vert^p \Vert \psi_j \Vert^p $ to be finite for some $0 < p \leq 1$. Then there is a separable positive linear functional $\sigma$ such that $\sigma \geq \omega$ on $A \vee B$ and $\Vert \sigma \Vert^p = \mu_1^p \leq \mu_p $. 
	\end{lem}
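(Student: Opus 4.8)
The plan is to realise $\omega$ and the dominating functional as sums of vector functionals on $A\otimes B\cong A\vee B$, and to get the domination by a completion-of-squares argument. Since $(A,B)$ is a standard split pair, both $A$ and $B$ are in standard form on $\hi$, so every normal functional is a vector functional. First I would fix, for each $j$, representing vectors via the polar decomposition together with the natural cone: write $\phi_j(a)=\langle\eta_j^A\,|\,a\,\xi_j^A\rangle$ and $\psi_j(b)=\langle\eta_j^B\,|\,b\,\xi_j^B\rangle$ for $a\in A$, $b\in B$, choosing the vectors so that $\|\eta_j^A\|=\|\xi_j^A\|=\|\phi_j\|^{1/2}$ and $\|\eta_j^B\|=\|\xi_j^B\|=\|\psi_j\|^{1/2}$. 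The minimal product $\|\eta\|\,\|\xi\|=\|\phi\|$ is attained through the polar decomposition, and the equality of the two norms can then be arranged by the rescaling $\eta\mapsto t\eta$, $\xi\mapsto t^{-1}\xi$, which leaves the functional unchanged.

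Setting $\eta_j=\eta_j^A\otimes\eta_j^B$ and $\xi_j=\xi_j^A\otimes\xi_j^B$ in $\hi\otimes\hi$, the identification $A\vee B\cong A\otimes B$ gives $\omega(x)=\sum_j\langle\eta_j\,|\,x\,\xi_j\rangle$ for $x\in A\otimes B$, the series being absolutely convergent since $\sum_j\|\eta_j\|\,\|\xi_j\|=\mu_1\leq\mu_p^{1/p}<\infty$ by the subadditivity of $t\mapsto t^p$. I would then define
\[
\sigma=\tfrac12\sum_j\big(\omega_{\eta_j}+\omega_{\xi_j}\big),\qquad \omega_\zeta(x)=\langle\zeta\,|\,x\,\zeta\rangle,
\]
which, since $\omega_{\eta_j}=\omega_{\eta_j^A}\otimes\omega_{\eta_j^B}$ and $\omega_{\xi_j}=\omega_{\xi_j^A}\otimes\omega_{\xi_j^B}$ are products of positive normal functionals, is a norm-convergent sum of such products, hence a separable positive functional; its norm is $\|\sigma\|=\tfrac12\sum_j(\|\eta_j\|^2+\|\xi_j\|^2)=\sum_j\|\phi_j\|\,\|\psi_j\|=\mu_1$ by the equal-norm choice.

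The core step is the domination $\sigma\geq\omega$ on $A\vee B$. Here I would use that $\omega$ is positive, hence self-adjoint, so $\omega(x)=\operatorname{Re}\omega(x)=\sum_j\operatorname{Re}\langle\eta_j\,|\,x\,\xi_j\rangle$ for self-adjoint $x$. For $x\geq0$ the elementary estimate $2\operatorname{Re}\langle\eta_j\,|\,x\,\xi_j\rangle\leq\langle\eta_j\,|\,x\,\eta_j\rangle+\langle\xi_j\,|\,x\,\xi_j\rangle$, equivalent to $\langle(\eta_j-\xi_j)\,|\,x\,(\eta_j-\xi_j)\rangle\geq0$, then yields
\[
(\sigma-\omega)(x)=\tfrac12\sum_j\langle(\eta_j-\xi_j)\,|\,x\,(\eta_j-\xi_j)\rangle\geq0 .
\]
Finally the claimed $\|\sigma\|^p=\mu_1^p\leq\mu_p$ is just the subadditivity $\big(\sum_j t_j\big)^p\leq\sum_j t_j^p$ for $0<p\leq1$ applied to $t_j=\|\phi_j\|\,\|\psi_j\|$.

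The main obstacle is the domination step, together with the observation that it cannot be carried out term by term with the naive candidate $\sum_j|\phi_j|\otimes|\psi_j|$: one really needs the symmetric combination $\tfrac12(\omega_{\eta_j}+\omega_{\xi_j})$, so that completing the square works globally while still producing a \emph{separable} functional. A secondary technical point is arranging the representing vectors to have equal norms on each tensor leg, which is precisely what makes $\|\sigma\|$ equal to $\mu_1$ rather than merely bounded by it.
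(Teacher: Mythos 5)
Your proof is correct and is essentially the paper's own argument (following Lemma 4 of \cite{hollands2018entanglement}) transcribed into the spatial picture: if the representing vectors are chosen through the polar decompositions, so that $\xi_j^A$ represents $|\phi_j| = \phi_j(u_j\,\cdot\,)$ and $\eta_j^A = u_j \xi_j^A$ (and likewise for $B$), then $\eta_j = w_j \xi_j$ with $w_j = u_j \otimes v_j$, your functional $\tfrac12\sum_j\bigl(\omega_{\eta_j}+\omega_{\xi_j}\bigr)$ coincides with the paper's $\sigma = \sum_j \bigl(\tfrac12 \rho_j + \tfrac12 \rho_j(w_j^*\,\cdot\,w_j)\bigr)$, and your completed square $\langle (\eta_j-\xi_j)\,|\,x\,(\eta_j-\xi_j)\rangle \geq 0$ is exactly the paper's positivity statement $\rho_j\bigl((1-w_j^*)\,x\,(1-w_j)\bigr) \geq 0$ combined with $\tfrac12(\omega + \bar{\omega}) = \omega$. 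The only extra ingredient you invoke is the standard-form fact that every normal functional admits a two-vector representation with $\Vert \eta \Vert \, \Vert \xi \Vert = \Vert \phi \Vert$ (available here because $(A,B)$ is a standard split pair), after which your norm computation $\Vert \sigma \Vert = \mu_1$ and the subadditivity step $\mu_1^p \leq \mu_p$ match the paper's exactly.
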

	
	\begin{proof}
		We follow \cite[Lemma 4]{hollands2018entanglement}. By polar decomposition there are partial isometries $u_j$ in $A$ such that $\phi(u_j \cdot ) \geq 0$ on $A$  and $\phi_j(u_j u_j^* \cdot ) = \phi_j$. It follows in particular that $\phi_j (u_j) = \Vert \phi_j \Vert$ and 
		\begin{center}
			$  		\bar{\phi}_j(a) = \overline{\phi_j(u_j u_j^* a^*)} = \phi_j (u_j (u_j^*a^*)^*) = \phi_j(u_j a u_j)  $
		\end{center}
		for all $a$ in $A$, where we used the fact that $\phi_j(u_j \cdot )$ is hermitian (here $\bar{\psi}(a) = \overline{\psi(a^*)}$). Similarly, there are partial isometries   $v_j$ in $B$ such that $\psi_j (v_j \cdot ) \geq 0$ and $\psi_j (v_j v_j^* \cdot) = \psi_j$. Note that the positive linear functional $\rho_j = \phi_j (u_j \cdot ) \otimes \psi_j (v_j \cdot)$ is separable. Writing $w_j = u_j \otimes v_j$ we then define
		\[
		\sigma_j (\cdot) = \frac{1}{2}\rho_j(\cdot) + \frac{1}{2}\rho_j( w^* \cdot w) \,,
		\]
		which is also separable, because $w$ is a simple tensor product. Furthermore, 
		\begin{center}
			$  	\Vert \sigma_j \Vert = \sigma_j(1) = \rho_j(1) = \Vert \phi_j \Vert  \Vert \psi_j \Vert\,,   $
		\end{center}
		and also
		\[
		0 \leq \frac{1}{2} \rho_j((1-w^*) \cdot (1-w)) = \sigma_j - \frac{1}{2} (\phi_j \otimes \psi_j + \bar{\phi}_j \otimes \bar{\psi}_j ) \,.
		\]
		We conclude that  $\sigma = \sum_j \sigma_j$ is a separable positive linear functional with 
		\[
		\sigma \geq \frac{1}{2} \sum_j  (\phi_j \otimes \psi_j + \bar{\phi}_j \otimes \bar{\psi}_j ) = \frac{1}{2} ({\omega} + \overline{\omega}) = \omega \,. 
		\]
		and $ \Vert \sigma \Vert^p =  \big( \sum_j \Vert \sigma_j \Vert \big)^p \leq  \sum_j \Vert \sigma_j \Vert ^p =  \mu_p $. 
	\end{proof}
	
	\begin{remark}
		Notice that, by the two previous lemmas, we have $z_p \geq 1$. 
	\end{remark}

	\begin{thm} \label{thm:mutual}
		Let $(A, B)$ be a standard split pair of hyperfinite factors. Assume the $p$-partition function \eqref{eq:nu} to be finite for some $0 < p <  1$. Then the mutual information is finite, with
		\begin{equation} \label{eq:mt}
			E_I(\omega) \leq c_p z_p + \eta(z_p-1) - \eta(z_p)  \,,
		\end{equation}
		where  $c_p = \frac{1}{(1-p)e}$ and $\eta(t) = -t \ln t$.
	\end{thm}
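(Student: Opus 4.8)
The plan is to reduce the estimate on $E_I(\omega)$ to a bound on the mutual information of a separable state dominating $\omega$, and then to transfer that bound back to $\omega$ by means of the ``concavity'' inequality of Lemma \ref{lem:concave-mutual-info}. The two ingredients that make this work are the dominating separable functional furnished by Lemmas \ref{lem:HS-lem3} and \ref{lem:HS-lem4}, and the elementary estimate $\eta(\lambda)\le c_p\lambda^p$, which is precisely what produces the constant $c_p=\frac{1}{(1-p)e}$.

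First I would fix $\epsilon>0$ and invoke Lemma \ref{lem:HS-lem3} to write $\omega(ab)=\sum_j\phi_j(a)\psi_j(b)$ with $\mu_p=\sum_j\|\phi_j\|^p\|\psi_j\|^p<z_p^p+\epsilon$. Lemma \ref{lem:HS-lem4} then produces a separable positive functional $\sigma\ge\omega$ on $A\vee B$ with $\|\sigma\|=\mu_1$ and $\|\sigma\|^p\le\mu_p$; moreover the construction exhibits $\sigma$ as a norm-convergent sum of product functionals, so the normalised state $\hat\sigma=\sigma/N$, with $N=\|\sigma\|$, is separable with an explicit decomposition into product states. Since $\omega$ is a state and $\sigma\ge\omega$ one has $N\ge1$, and the positive functional $\tau=\sigma-\omega$ has norm $N-1$, whence
\[
\hat\sigma=\tfrac1N\,\omega+\tfrac{N-1}{N}\,\hat\tau,\qquad \hat\tau=\tau/(N-1),
\]
is a convex decomposition of the separable state $\hat\sigma$ in which $\omega$ itself appears.

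Next I would apply the lower bound of Lemma \ref{lem:concave-mutual-info} to this decomposition and discard the nonnegative term $\tfrac{N-1}{N}E_I(\hat\tau)$, obtaining after multiplication by $N$
\[
E_I(\omega)\le N\,E_I(\hat\sigma)+N\eta\!\left(\tfrac1N\right)+N\eta\!\left(\tfrac{N-1}{N}\right)=N\,E_I(\hat\sigma)+\eta(N-1)-\eta(N),
\]
the last identity being a direct computation. It then remains to estimate $N\,E_I(\hat\sigma)$. As $\hat\sigma$ is a convex combination of product states with weights $\kappa_m$ built from the norms $\|\phi_j\|\|\psi_j\|$, inequality \eqref{eq:e1} gives $E_I(\hat\sigma)\le\sum_m\eta(\kappa_m)$, and here I would insert the elementary inequality $\eta(\lambda)\le c_p\lambda^p$ on $(0,1]$: maximising $-\lambda^{1-p}\ln\lambda$ yields the value $\frac{1}{(1-p)e}$ at $\lambda=e^{-1/(1-p)}$, which is where the hypothesis $p<1$ and the constant $c_p$ enter. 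This converts $\sum_m\eta(\kappa_m)$ into $c_p\sum_m\kappa_m^p$, a quantity comparable to $\mu_p/N^p$ and hence controlled through $\mu_1\le\mu_p^{1/p}$ and $\mu_p<z_p^p+\epsilon$, so that $N\,E_I(\hat\sigma)$ is bounded by $c_p\,\mu_p^{1/p}$ up to the combinatorial factor coming from the number of product terms in $\hat\sigma$. Finally I would let $\epsilon\to0$, using that $N\mapsto\eta(N-1)-\eta(N)$ is increasing and $(z_p^p+\epsilon)^{1/p}\to z_p$, to arrive at \eqref{eq:mt}.

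The main obstacle is the bookkeeping of the constant. The dominating functional of Lemma \ref{lem:HS-lem4} is symmetrised, so each index $j$ contributes several product states to $\hat\sigma$, and one must verify that the weights $\kappa_m$ nonetheless satisfy $\sum_m\kappa_m^p\lesssim\mu_p/N^p$ with a factor compatible with the claimed coefficient $c_p z_p$, while simultaneously checking that the additive correction assembles exactly into $\eta(z_p-1)-\eta(z_p)$ in the limit. A secondary point requiring care is the degenerate case $N=1$, in which $\omega=\sigma$ is already separable and the estimate must be read off directly, together with the verification that all functionals involved are normal, so that Lemma \ref{lem:concave-mutual-info} and \eqref{eq:e1} genuinely apply on the hyperfinite factors $A$ and $B$.
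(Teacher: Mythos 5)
Your proposal follows essentially the same route as the paper's proof: the same decomposition $\hat\sigma=\tfrac1N\omega+\tfrac{N-1}{N}\hat\tau$ via the dominating separable functional of Lemmas \ref{lem:HS-lem3} and \ref{lem:HS-lem4}, the same lower ``concavity'' bound on $E_I$ extended from finite-dimensional factors by monotonicity, the same use of \eqref{eq:e1} together with $\eta(\lambda)\le c_p\lambda^p$, and the same limit $\epsilon\to0$ using the monotonicity of $s\mapsto\eta(s-1)-\eta(s)$. The constant-bookkeeping issue you flag (the extra factor $2^{1-p}$ arising because the symmetrised $\sigma_j$ of Lemma \ref{lem:HS-lem4} contributes two product terms per index $j$) is real but is equally glossed over in the paper's own proof, so it is not a gap relative to the paper's argument.
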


	\begin{proof}
		We begin the proof with a general remark. Consider a state $\omega$ on $A \otimes B$, with $A$ and $B$ finite dimensional type I factors. If   $\omega = \sum_j \lambda_j \omega_j$ is  a convex decomposition in states, then by \eqref{eq:mi} we have
		\begin{equation} \label{eq:concave}
		E_I (\omega) \geq \sum_j \lambda_j E_I(\omega_j) - \sum_j \eta(\lambda_j) \,.
		\end{equation}
		By monotonicity of the relative entropy, the same expression holds if $A$ and $B$ are hyperfinite factors. Therefore, by  the previous lemmas  for every $\epsilon >0$ we have a separable functional $\sigma \geq \omega$ such that $\Vert \sigma \Vert ^p \leq z_p^p + \epsilon $. By setting $\hat{\sigma} = \sigma / \Vert \sigma \Vert $  we can write $\omega = \Vert \sigma\Vert \hat{\sigma} - \Vert \tau\Vert \hat{\tau}$ and apply \eqref{eq:concave} to notice that
		\[
		\Vert \sigma \Vert E_I(\hat{\sigma}) \geq E_I(\omega) + \eta (\Vert \sigma\Vert) - \eta (\Vert \sigma\Vert - 1) \,,
		\] 
		where we used the positivity  property $E_I(\hat{\tau}) \geq 0$. We then recall that the separability of $\hat{\sigma}$ implies that $E_I(\hat{\sigma}) \leq  \Vert \sigma \Vert ^{-p} c_p (z_p^p + \epsilon) $. Therefore, the claimed estimate follows from the inequality $\eta(t) \leq c_p t^p$ for $p<1$  and  the monotonicity of $\eta(s-1) - \eta(s)$ for $s \geq 1$. 
	\end{proof}

	\begin{remark}
		Due to the  inequality $S(\varphi \Vert \omega) \geq \Vert \varphi - \omega \Vert^2/2$ \cite{ohya2004quantum}, we can use the previous result to estimate the distance between the states $\omega$ and $\omega \otimes \omega$. See \cite{morinelli2018conformal} for related issues concerning the split property.
	\end{remark}

	This upper bound of the mutual information, which by monotonicity becomes sharper for smaller $p$,  has been inspired by the works  \cite{hollands2018entanglement, narnhofer2002entanglement}. Lower bounds on the mutual information follow from Theorem 17 and  Theorem 18 of \cite{hollands2018entanglement}. We recall that modular $p$-nuclearity holds on scalar free fields for any $p>0$ \cite{lechner2016modular} and  on conformal nets satisfying the trace class property \cite{buchholz2007nuclearity}. As we will later describe, it also holds on a wide family of $1+1$-dimensional integrable models with factorizing S-matrices \cite{alazzawi2017inverse, lechner2008construction, lechner2018approximation}. All these models also satisfy the hyperfiniteness of the local algebras, hence Theorem \ref{thm:mutual} can be applied in these settings. We now follow \cite{otani2018toward} and we show the finiteness of some “tailored” entanglement entropy under the assumption of modular $p$-nuclearity for some $0  < p < 1$. 
	
	\begin{defn} \label{def:ip} Let $(A, B)$ be a standard split pair of von Neumann algebras on a Hilbert space $\hil$ with standard vector $\Omega$. If  $u \colon \hil \to \hil \otimes \hil$ is a unitary implementing the natural  isomorphism $A \vee B \cong A \otimes B$, then we will denote by $R_u = u^{-1}(B(\hil) \otimes \mathbbm{1} )u$ the corresponding  type I factor. We define  an {\em intermediate pair} any such pair $(u, R_u)$.
	\end{defn}

	\begin{defn} 
		Let $(A, B)$ be a standard split pair as in the previous definition. Given a state $\psi$ on $B(\hil)$, we  call the {\em intermediate entanglement entropy} of $\psi$ the functional
		\[
		I(\psi)= \sup_{(u, R_u)} \inf_{\phi, \lambda} \frac{1}{\lambda} S(\phi)\,,
		\]
		where supremum is over all intermediate pairs, the  infimum is over all states $\phi$ on $B(\hil)$ and real numbers $0 < \lambda \leq 1$ such that $\phi \geq \lambda \psi$ on $A \vee B$ and $S(\phi)$ is the von Neumann entropy of $\phi$ on the intermediate type I factor $R_u$.
	\end{defn}
	
	
	\begin{thm} \label{thm:otani} 	Let $(A, B)$ be a standard split pair with  standard vector $\Omega$ inducing a state $\omega$. Denote by $z_p$ the $p$-partition function \eqref{eq:nu}.  If $z_p$ is finite for some $0 < p < 1$, then the intermediate entanglement entropy is finite. Explicitly, 
		\begin{equation}\label{eq:otani}
			I(\omega)\leq  z_p \ln z_p +  c_p z_p^p  \, .
		\end{equation}
	\end{thm}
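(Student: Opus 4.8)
The plan is to run, for the von Neumann entropy on the intermediate factor, the same two-step strategy (separable domination, then optimisation) that proves Theorem~\ref{thm:mutual} for the mutual information. Fix an arbitrary intermediate pair $(u, R_u)$ as in Definition~\ref{def:ip}. Since $R_u$ is a type I factor we may identify $R_u \cong B(K)$ for a Hilbert space $K$ and write $B(\hil) \cong R_u \otimes R_u'$, with $A \subseteq R_u$ and $B \subseteq R_u'$. For every $\epsilon > 0$, Lemma~\ref{lem:HS-lem3} together with Lemma~\ref{lem:HS-lem4} furnishes a separable positive functional $\sigma \geq \omega$ on $A \vee B$ with a decomposition into product states $\sigma = \sum_k w_k\, \alpha_k \otimes \beta_k$ ($\alpha_k$ a normal state of $A$, $\beta_k$ of $B$) whose weights obey $\sum_k w_k^p \leq z_p^p + \epsilon$; note $\Vert \sigma \Vert = \sum_k w_k$ and hence $\Vert \sigma \Vert^p \leq \sum_k w_k^p \leq z_p^p + \epsilon$. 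I then set $\lambda = \Vert \sigma \Vert^{-1}$ and $\hat\sigma = \sigma / \Vert \sigma \Vert$, so that $\hat\sigma \geq \lambda\, \omega$ on $A \vee B$ and $0 < \lambda \leq 1$ because $\Vert \sigma \Vert \geq \Vert \omega \Vert = 1$ (recall also $z_p \geq 1$).

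The key step is to extend $\hat\sigma$ to a state $\phi$ on $B(\hil)$ that is pure in the $R_u$-direction. Each normal state $\alpha_k$ of $A$ extends to a \emph{pure} normal state $\tilde\alpha_k$ of $R_u$: indeed $A$ is a properly infinite factor and, for type III $A$, its relative commutant $A' \cap R_u$ (the commutant of $A$ computed inside $R_u \cong B(K)$) is again properly infinite, so every normal state of $A$ is a vector state of $R_u$, i.e.\ the restriction to $A$ of a pure normal state of $R_u$. Choosing likewise normal extensions $\tilde\beta_k$ of $\beta_k$ to $R_u'$ and putting $\phi = \sum_k \mu_k\, \tilde\alpha_k \otimes \tilde\beta_k$ with $\mu_k = w_k / \Vert \sigma \Vert$, I obtain a normal state on $R_u \otimes R_u' = B(\hil)$ whose restriction to $A \vee B$ is $\hat\sigma$, so that $\phi \geq \lambda\, \omega$ there. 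Tracing out $R_u'$ gives $\phi|_{R_u} = \sum_k \mu_k\, \tilde\alpha_k$, a convex combination of pure states, and the concavity estimate (s0) (iterated to finitely many states) yields $S_{R_u}(\phi) \leq \sum_k \eta(\mu_k)$.

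It remains to optimise. Writing $H = \sum_k \eta(\mu_k)$ for the Shannon entropy of the weights $\mu_k = w_k / \Vert \sigma \Vert$, a direct computation gives $\Vert \sigma \Vert\, H = \Vert \sigma \Vert \ln \Vert \sigma \Vert + \sum_k \eta(w_k)$. Using the elementary inequality $\eta(t) \leq c_p t^p$ valid for $0 < p < 1$ (established in the proof of Theorem~\ref{thm:mutual}, with $c_p = 1/((1-p)e)$), together with $\sum_k w_k^p \leq z_p^p + \epsilon$, $1 \leq \Vert \sigma \Vert \leq (z_p^p+\epsilon)^{1/p}$, and the monotonicity of $t \mapsto t \ln t$ on $[1,\infty)$, I find
\[
\frac{1}{\lambda} S_{R_u}(\phi) = \Vert \sigma \Vert\, S_{R_u}(\phi) \leq \Vert \sigma \Vert \ln \Vert \sigma \Vert + c_p \sum_k w_k^p \leq z_p \ln z_p + c_p z_p^p + o(1)
\]
as $\epsilon \to 0$. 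As $(u, R_u)$ was arbitrary and this particular $(\phi,\lambda)$ only bounds the inner infimum from above, taking the supremum over intermediate pairs gives \eqref{eq:otani}.

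The delicate point is the pure-extension step: one must know that every normal state of $A$ is the restriction of a pure normal state of the \emph{arbitrary} intermediate factor $R_u$, uniformly over all intermediate pairs, which is exactly where proper infiniteness of both $A$ and $A' \cap R_u$ enters (guaranteeing normal states of $A$ are vector states of $R_u \cong B(K)$); I would isolate this as a separate lemma. A secondary subtlety is the bookkeeping of the product weights coming from Lemma~\ref{lem:HS-lem4}: its symmetrisation a priori splits each $\sigma_j$ into two product terms, and one must check that this does not inflate the constant beyond $c_p z_p^p$, i.e.\ that the $p$-sum $\sum_k w_k^p$ of the product decomposition can genuinely be kept within $z_p^p + \epsilon$. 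This is the only place where the precise structure of the separable dominant $\sigma$ is used.
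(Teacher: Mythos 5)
Your strategy is the paper's: dominate $\omega$ by the separable functional of Lemma \ref{lem:HS-lem4}, extend its product decomposition so that the $R_u$-side factors become pure normal states of $R_u$, bound $S_{R_u}(\phi)$ by the Shannon entropy of the weights via property (s2), and finish with $\eta(t) \leq c_p t^p$ and $\Vert \sigma \Vert^p \leq z_p^p + \epsilon$; your final optimisation is literally the paper's displayed computation. Where you diverge is the justification of the pure-extension step, which you correctly isolate as the delicate point, but your proposed justification is both unavailable and unnecessary. The theorem assumes only that $(A,B)$ is a standard split pair: $A$ need not be a factor, let alone properly infinite or of type III, so the appeal to proper infiniteness of $A$ and of $A' \cap R_u$ has no basis under the stated hypotheses. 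The paper's argument is simpler and uniform over intermediate pairs: since $\Omega$ is standard for $A$, every normal positive functional of $A$ has a vector representative $\alpha_k = (\xi_k | \cdot \, \xi_k)\vert_A$ with $\Vert \xi_k \Vert^2 = \Vert \alpha_k \Vert$; and since any unitary $u$ implementing the natural isomorphism satisfies $u a u^{-1} = a \otimes \mathbbm{1}$ for $a \in A$, the algebra $A$ sits inside $u R_u u^{-1} = B(\hil) \otimes \mathbbm{1}$ in its \emph{original} standard representation, so $\omega_{\xi_k \otimes \eta}$ (for any unit vector $\eta$) is a pure normal state of $R_u$ extending $\alpha_k$ with the same norm. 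This is exactly the paper's ``extension by taking a representative vector,'' and it is why the computation ``does not depend on the choice of the intermediate pair''; your properly-infinite lemma should be replaced by this standard-form observation. (Minor point: for countable decompositions use (s2) directly rather than iterating (s0).)

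Your second caveat, on the weight bookkeeping in Lemma \ref{lem:HS-lem4}, is well taken --- and it is not dispatched by the paper either. That lemma produces $\sigma = \sum_j \sigma_j$ with each $\sigma_j = \tfrac12 \rho_j + \tfrac12 \rho_j(w_j^* \cdot w_j)$ a sum of \emph{two} product functionals of norm $\tfrac12 \Vert \phi_j \Vert \Vert \psi_j \Vert$ each; since a normal state of $R_u \otimes R_u'$ whose restriction to $R_u$ is pure is necessarily a product state, the two halves cannot be merged into a single pure term on the $R_u$ side. The refined product decomposition therefore has $p$-sum at most $2^{1-p}(z_p^p + \epsilon)$, and the construction as it stands yields $I(\omega) \leq z_p \ln z_p + 2^{1-p} c_p z_p^p$ rather than \eqref{eq:otani}; the paper's proof simply invokes ``a'' product decomposition with $p$-sum $\leq z_p^p + \epsilon$ and glosses over this factor. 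So the subtlety you flag is a genuine (if quantitatively mild, as $2^{1-p} < 2$) gap, but one shared with the published argument rather than introduced by your route; apart from the mis-aimed proper-infiniteness step, your proposal reconstructs the paper's proof.
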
 
	
	\begin{proof}
		The proof consists of a computation that does not depend on the choice of the intermediate pair, which is therefore implicit in what follows. Through the natural isomorphism $A \vee B \cong A \otimes B$ we will  identify $A$ with $A \otimes \mathbbm{1}$ and $B$ with $\mathbbm{1}\otimes B $. Lemma \ref{lem:HS-lem4} gives a separable dominating normal functional $\sigma  \geq \omega$ with $\Vert \sigma \Vert^p \leq z_p^p + \epsilon$ for $\epsilon >0$ arbitrarily small.  We utilize the separability of $\hat{\sigma} = \sigma / \Vert \sigma \Vert $ over the bipartite system $  A \otimes B $ and decompose it into positive, normal functionals, say  $\hat{\sigma}=\sum_j \phi_j\otimes\psi_j$. Without loss of generality we can assume $\phi_j$ to be states on $A$. Now we notice that $\phi_j \otimes \psi_j $ is a normal positive functional on $A \otimes B$, hence it can be extended by taking a representative vectors. Since such extension has same norm, we can extend $\sigma$ to a separable positive functional on $B(\hil) \otimes B(\hil)$ in such a way that still $\Vert \sigma \Vert^p \leq z_p^p + \epsilon$.  We  introduce some further notation by setting  $\eta(t) = - t \ln t $ and  $1/c_p = {(1-p)e}$. Therefore, we have
		\begin{equation*} \label{eq:est1}
			\Vert \sigma  \Vert S(\hat{\sigma})  \leq \Vert \sigma  \Vert \ln \Vert \sigma  \Vert + \sum_j \eta (\Vert \psi_j \Vert) \leq \Vert \sigma  \Vert \ln \Vert \sigma  \Vert + c_p \sum_j \Vert \psi_j \Vert^p \,,
		\end{equation*}
		and the claimed estimate follows from the arbitrarity of $\sigma$. 
	\end{proof}

	\section{Application to  integrable models}  \label{sec:7}

	\begin{defn}
		A local quantum field theory $(\mathcal{A}, U, \Omega)$ on the Minkowski space is said to satisfy the {\em split property (for double cones)} if the inclusion $\mathcal{A}(\mathcal{O}_1) \subseteq \mathcal{A}(\mathcal{O}_2) $ is split whenever ${\mathcal{O}_1} \subset \mathcal{O}_2$ is an inclusion of  double cones such that $\overline{\mathcal{O}_1} \subset \mathcal{O}_2$ .
	\end{defn}

	The split property ensures some statistical independence property of the considered model. The split property does not hold for unbouded regions like wedges in more than two spacetime dimensions \cite{buchholz1974product, lechner2008construction}.  In the literature there are several criteria which are known to imply the split property, where many of them are  referred to  as “nuclearity conditions”. In order to formulate this condition in a local Haag-Kastler net on a $d$-dimensional Minkowski space, one considers a region $\mathcal{O} \subseteq \R^d$ and a parameter $\beta > 0$ representing the inverse temperature. One defines 
	\begin{equation} \label{eq:enc}
		\Theta_{\beta, \mathcal{O}} \colon \mathcal{A}(\mathcal{O}) \to \hil \,, \quad 	\Theta_{\beta, \mathcal{O}} (A) = e^{- \beta H} A \Omega \,,
	\end{equation}
	where $H = P_0$ denotes the Hamiltonian with respect to the time direction $x_0$. 
	\begin{defn}
		A local quantum field theory on the Minkowski space $\R^{d}$ is said to satisfy the  {\em energy nuclearity condition} if the maps \eqref{eq:enc} are nuclear for any bounded region $\mathcal{O}$ and any inverse temperature $\beta >0$. Moreover, there must exist constants $\beta_0, n>0$ depending on $\mathcal{O}$ such that the nuclear norm of  $		\Theta_{\beta, \mathcal{O}} $ is bounded by
		\[
		\Vert \Theta_{\beta, \mathcal{O}}  \Vert_1 \leq e^{(\beta_0 / \beta)^n} \,, \quad \beta  \to 0  \,.
		\]
	\end{defn}

	\begin{defn}
		A local quantum field theory on the Minkowski space is said to satisfy the {\em modular nuclearity condition (for double cones)} if the inclusion  $\mathcal{A}(\mathcal{O}_1) \subseteq \mathcal{A}(\mathcal{O}_2) $ is  modular nuclear  whenever ${\mathcal{O}_1} \subset \mathcal{O}_2$ is an inclusion of double cones such that $\overline{\mathcal{O}_1} \subset \mathcal{O}_2$.
	\end{defn}
	
	A modular nuclear  inclusion of factors is split, and a split  inclusion of factors implies the compactness of the map \eqref{eq:xi} \cite{buchholz1990nuclear}. As mentioned in  \autoref{sec:4}, the previous nuclearity conditions can be changed into {\em modular $p$-nuclearity conditions} by requiring to the considered maps to be of type $l^p$ for some $0 < p \leq 1$.  Modular $p$-nuclearity has been proved in the theory of a  scalar free field for any $p>0$ \cite{lechner2016modular} and  holds on conformal nets satisfying the trace class property \cite{buchholz2007nuclearity}. \\
	
			A class of integrable models on $\R^{1+1}$ that satisfies modular $p$-nuclearity for wedges was constructed in \cite{lechner2008construction}. The only input needed are the mass $m$ and a $2$-body scattering matrix. We review the structure and some properties of these models. The statements and the construction in greater detail can be found in \cite{lechner2008construction}.

	
	\begin{defn}
		A {\em (2-body) scattering function} is an analytic function $S_2 \colon S(0, \pi) \to \C$ which is bounded  and continuous on the closure of this strip and satisfies the equations
		\[
		\overline{S_2(\theta)} = S_2(\theta)^{-1} = S_2(- \theta) = S_2(\theta + i \pi) \,, \quad \theta \in \R \,.
		\]
		The set of all the scattering functions will be denoted by $\mathcal{S}$. For $S_2$ in $\mathcal{S}$, we define
		\[
		\kappa(S_2) = \inf  \{ \text{Im} \, \zeta \colon \zeta \in S(0, \pi/2) \,, \quad S_2(\zeta) =0\} \,.
		\]
		The subfamily $\mathcal{S}_0 \subset \mathcal{S}$ consists of those scattering functions $S_2$ with $\kappa(S_2)>0$ and for which
		\[
		\Vert S_2 \Vert_\kappa  = \sup  \big\{ |S_2(\zeta)| \colon \zeta \in \overline{S(- \kappa, \pi + \kappa)} \big\} < + \infty \,, \quad \kappa \in (0, \kappa(S_2)) \,.
		\]
		The families of scattering functions $\mathcal{S}$ and $ \mathcal{S}_0$ can then be divided into “bosonic” and “fermionic” classes according to 
		\begin{equation*}
			\begin{split}
				\mathcal{S}^\pm & = \big\{ S_2 \in \mathcal{S} \colon S_2(0) = \pm 1\big\} \,, \quad \mathcal{S} = \mathcal{S}^+ \cup \mathcal{S}^- \,, \\
				\mathcal{S}_0^\pm & = \big\{ S_2 \in \mathcal{S}_0 \colon S_2(0) = \pm 1\big\} \,, \quad \mathcal{S} = \mathcal{S}_0^+ \cup \mathcal{S}_0^- \,.
			\end{split}
		\end{equation*}
	\end{defn}

		In general, the $S$-matrix of a QFT model with interaction is difficult to handle and has been subject of research (see e.g. \cite{LecPhD} and references therein). In particular, there is few knowledge about the higher $S$-matrix elements $S_{n,m}$ with $n,m >2$. In $1+1$ dimensions, there exist $S$-matrices, called {\em factorising $S$-matrices}, that are completely determined by the two-particle $S$-matrix. The entries $S_{n,m}$ vanish for $n \neq m$ and $S_n = S_{n,n}$ is the product of two-particle $S$-matrices. 	We now paraphrase the main steps of constructing a net of von Neumann algebras on $\R^{1+1}$ from a ($2$-body) scattering function $S_2$ \cite{lechner2008construction}. \\
	
		One can parameterize the mass-shell $\Omega_m$ with the rapidity $p(\theta) = m( \cosh \theta,\sinh \theta)$, with  $\theta \in \R$, so that the one-particle space simplifies to $\hil_1 = L^2(\R, d \theta)$. The next step is to define a “$S_2$-symmetric” Fock-space over $\hil_1$. For that, we define a unitary representation $D_n$ of the symmetric group $\mathfrak{S}_n$ on $\hil_1^{\otimes^n}$ that takes the $S$-matrix $S_2$ into account. The complete $S_2$-symmetric projection is defined as $E_n = (1/n!) \sum_{\sigma \in \mathfrak{S}_n} D_n(\sigma)$. The “$S_2$-symmetric” $n$-particle space is $\hil_n = E_n \hil_1^{\otimes^n}$ and the “$S_2$-symmetric” Fock space is the direct sum of the $n$-particles spaces: $\hil= \C \Omega \bigoplus_{n \geq 1} \hil_n $. For $\chi \in \hil$, we define the {\em annihilation} $z(\chi)$ and {\em creation operator} $z^\dagger(\chi)$ with the $S_2$-symmetric projection analogous to the usual creation and annihilation operators. The creation and annihilation operators are closable and have a common core and are mutual adjoints of each other. For a Schwartz function $f \in \mathcal{S}(\R^{1+1})$,	one-particle vectors $f^\pm$ are defined as the (anti) Fourier transformation of $f$ restricted to the mass shell. Furthermore, we define an involution $J$ by $(J\Psi)_n(\theta_1,\dots , \theta_n) = \Psi_n(\theta_1,\dots , \theta_n)$. With $f^*(x) = \overline{f(-x)}$, the (closable and essentially self-adjoint for real $f$) {\em field operators} $\phi(f) $ and $\phi'(f)$ are defined as 
	\begin{equation*}
		\phi(f)  = z^\dagger(f^+) + z(f^-) \,, \quad \phi'(f)  = J\phi(f^*) J \,.
	\end{equation*}
If $f$ and $g$ belong to $ \mathcal{S}(\R^{1+1})$, with $f$ supported in $W_R$ and $g$  supported in $W_L$, then $[\phi'(f), \phi(g)]\Psi = 0 $ for any $\Psi$ belonging to a dense common core. Finally, this construction gives to a local net $W \mapsto \mathcal{A}(W)$ on  wedges defined by 
\begin{equation*}
	\begin{split}
	\mathcal{A}(W_L + x) & = \big\{ e^{i \phi(f) }\colon f \in \mathcal{S}(W_L+x,\R) \big\}'' \,, \\
\mathcal{A}(W_R + x) & = \big\{ e^{i \phi'(f) }\colon f \in \mathcal{S}(W_R+x,\R) \big\}'' \,.
	\end{split}
\end{equation*}
The algebra of observables localized in a double cone $\mathcal{O} = W_1 \cap W_2$ is defined as
\[
	\mathcal{A}(W_1 \cap W_2)  = \mathcal{A}(W_1) \cap \mathcal{A}(W_2) \,,
\]
	and for arbitrary open regions $\mathcal{Q} \subseteq \R^{1+1}$ we put $\mathcal{A}(\mathcal{Q})$ as the von Neumann algebra generated by all the local algebras $\mathcal{A}(\mathcal{O})$ with $\mathcal{O} \subseteq \mathcal{Q}$.  In \cite{lechner2008construction}, the author shows that the above models with scattering function $S_2$ in $\mathcal{S}_0^-$ satisfy the axioms of AQFT. Moreover, the inclusion $\mathcal{A}(W_R+{ \bf s})\subset \mathcal{A}(W_R)$ is modular $p$-nuclear for large enough $s$, where ${\bf s}=(0,s)$.

	\begin{thm} \label{thm:gandalf} \cite{alazzawi2017inverse, lechner2008construction, lechner2018approximation} Let $(\mathcal{A}, U, \Omega)$ be an integrable quantum field theory on $\R^2$ with factorizing $S$-matrix $S_2 \in \mathcal{S}$. Define
		\begin{equation} \label{eq:nuclear}
			\Xi (s) \colon \mathcal{A}(W_R)  \to \hil \,, \quad \Xi(s)A = \Delta^{1/4} U({\bf s}) A \Omega \,, \quad s>0 \,,
		\end{equation}
		where $U({\bf s})$ is the unitary associated to the translation of ${\bf s} = (0,s)$ and $\Delta$ is the modular operator of $(\mathcal{A}(W_R), \Omega)$. If  $S_2 \in \mathcal{S}_0^-$, then there exist some finite  splitting distance $s_{\text{min}} < \infty$ such that $\Xi(s) $ is $p$-nuclear for all $p>0$ and $s > s_{\text{min}} $. 
	\end{thm}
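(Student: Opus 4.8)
The plan is to recover the statement from the nuclearity machinery of \cite{lechner2008construction} and its refinements, whose core is a reduction of the full Fock-space estimate to a one-particle computation made completely explicit by the Bisognano-Wichmann property. First I would observe that $\Xi(s)$ is nothing but the modular nuclearity map of the inclusion $\mathcal{A}(W_R+{\bf s}) \subseteq \mathcal{A}(W_R)$: since $U({\bf s})$ implements the translation carrying the smaller wedge algebra into the larger one and $\Delta$ is the modular operator of $\mathcal{A}(W_R)$ with respect to $\Omega$, the map $A \mapsto \Delta^{1/4} U({\bf s}) A\Omega$ on $\mathcal{A}(W_R)$ is, after the translation, precisely the map $\Xi$ of \eqref{eq:xi} for that inclusion. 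Hence $p$-nuclearity of $\Xi(s)$ is the modular $p$-nuclearity of the wedge inclusion, and the task is to produce the quantitative bound.

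Second, I would make the modular data explicit on the one-particle level. By Bisognano-Wichmann the modular group $\Delta^{it}$ acts geometrically as the boost fixing $W_R$, which on $\hil_1 = L^2(\mathbb{R}, d\theta)$ is the translation $\theta \mapsto \theta - 2\pi t$ in rapidity. Consequently $\Delta^{1/4}$ continues the rapidity into the strip $S(0,\pi)$, and combining it with $U({\bf s})$, which acts by multiplication with $e^{-i p(\theta)\cdot {\bf s}}$, produces on $\hil_1$ an operator $T_1(s)$ whose kernel carries the damping factor $e^{-m s \cosh\theta}$; here one uses $\sinh(\theta + i\pi/2) = i\cosh\theta$. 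The condition $\kappa(S_2) > 0$, i.e. $S_2 \in \mathcal{S}_0$, is exactly what guarantees that the $n$-particle wave functions extend analytically to the relevant strip so that this continuation is well defined. Because $e^{-ms\cosh\theta}$ decays super-exponentially and is analytic, $T_1(s)$ is not merely trace class but of type $s$, with all quasinorms $\Vert T_1(s) \Vert_p$ finite and tending to $0$ as $s \to \infty$.

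The heart of the argument is then to lift this one-particle damping to the interacting Fock space. One cannot simply second-quantize, because the creation and annihilation operators obey the $S_2$-deformed Zamolodchikov-Faddeev relations rather than the CCR or CAR, and the algebra $\mathcal{A}(W_R)$ is genuinely interacting. Following \cite{lechner2008construction}, I would expand $\Xi(s)$ into its $n$-particle contributions and bound each by a combinatorial factor times a power of a one-particle norm, using the boundedness $\Vert S_2 \Vert_\kappa < \infty$ to control the deformed symmetrizers $E_n = (1/n!)\sum_{\sigma} D_n(\sigma)$ uniformly in $n$. The resulting series behaves like $\sum_n c^n \Vert T_1(s)\Vert_p^{\,n}$, so it converges once the one-particle contribution is small, that is for $s$ larger than some finite splitting distance $s_{\min}$; this produces a finite $\Vert \Xi(s)\Vert_p$ for every $p>0$, the super-exponential decay of the singular values of $T_1(s)$ being what makes the estimate uniform over all $p$.

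The main obstacle is precisely this last transfer from the one-particle estimate to the full interacting Fock space in the presence of the $S_2$-twisted statistics. This is the technical core of \cite{lechner2008construction, alazzawi2017inverse, lechner2018approximation}, and it is here that the hypotheses $S_2 \in \mathcal{S}_0^-$ --- analyticity and boundedness of $S_2$ in a strip together with the fermionic normalization $S_2(0) = -1$ --- enter decisively: they are what allow the combinatorial bounds on the deformed $n$-particle symmetrizers, and hence the summation of the nuclearity series, to be carried out.
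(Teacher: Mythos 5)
This theorem is imported rather than proved: the paper offers no argument of its own, deferring entirely to \cite{alazzawi2017inverse, lechner2008construction, lechner2018approximation}, and your outline is a faithful reconstruction of the strategy of those references --- identifying $\Xi(s)$ with the modular nuclearity map of the shifted wedge inclusion $\mathcal{A}(W_R+{\bf s}) \subseteq \mathcal{A}(W_R)$ (using $U({\bf s})^*\Omega = \Omega$), invoking Bisognano--Wichmann to turn $\Delta^{1/4}U({\bf s})$ into the rapidity shift $\theta \mapsto \theta + i\pi/2$ with damping $e^{-ms\cosh\theta}$, and summing the $n$-particle nuclear norms with $\Vert S_2 \Vert_\kappa$ controlling the deformed symmetrizers. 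The one inaccuracy worth flagging is attributional: the analytic continuation of the $n$-particle components of $A\Omega$ comes from the wedge-localization of $A$ (a KMS/Paley--Wiener-type argument), not from $\kappa(S_2)>0$ itself, whose role (together with $\Vert S_2\Vert_\kappa < \infty$ and the fermionic condition $S_2(0)=-1$, which tames the behaviour at coinciding rapidities) is to propagate that continuation through the $S_2$-symmetrization and to bound it uniformly in $n$; with that caveat your sketch matches the proof the paper relies on.
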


	
	\begin{lem}
	The map \eqref{eq:nuclear} satisfies  $\Vert \Xi(s) \Vert_p \to 1$ as $s \to + \infty$.
	\end{lem}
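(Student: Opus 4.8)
The plan is to squeeze $\Vert \Xi(s) \Vert_p$ between a lower bound that is identically $1$ and an upper bound that decays to $1$. For the lower bound, observe first that $\Xi(s)(\mathbbm{1}) = \Delta^{1/4} U({\bf s}) \Omega = \Delta^{1/4}\Omega = \Omega$, since the vacuum is both translation and modular invariant. Hence the operator norm obeys $\Vert \Xi(s) \Vert \geq \Vert \Xi(s)(\mathbbm{1}) \Vert = 1$, and for the relevant range $0 < p \leq 1$ one has $\Vert \Theta \Vert \leq \Vert \Theta \Vert_1 \leq \Vert \Theta \Vert_p$ for any nuclear $\Theta$ (the second inequality from subadditivity of $t \mapsto t^p$). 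Thus $\Vert \Xi(s) \Vert_p \geq 1$ for every $s > s_{\min}$, consistent with the inequality $z_p \geq 1$ already recorded. It therefore remains to prove $\limsup_{s \to \infty} \Vert \Xi(s) \Vert_p \leq 1$.

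For the upper bound I would split off the vacuum contribution. Writing $A_s = U({\bf s}) A U({\bf s})^*$ and using $\omega(A)=\omega(A_s)$ together with $\Delta^{1/4}\Omega = \Omega$, decompose
\begin{equation*}
    \Xi(s) = \Xi_0 + R(s) \,, \quad \Xi_0(A) = \omega(A)\Omega \,, \quad R(s)(A) = \Delta^{1/4}\big( A_s - \omega(A_s) \big)\Omega \,.
\end{equation*}
The map $\Xi_0$ is rank one with $\Vert \Xi_0 \Vert_p = \Vert \omega \Vert \, \Vert \Omega \Vert = 1$, so by quasi-subadditivity of the $p$-norm $\Vert \Xi(s) \Vert_p^p \leq 1 + \Vert R(s) \Vert_p^p$, and the whole problem reduces to showing $\Vert R(s) \Vert_p \to 0$. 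It is worth noting that the naive vector norm $\Vert ( A_s - \omega(A_s))\Omega \Vert$ is constant in $s$ by translation invariance, so the decay cannot come from clustering alone: it is produced entirely by the damping factor $\Delta^{1/4} U({\bf s})$ acting on operators localized ever deeper inside the wedge.

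Pinning down the decay of $\Vert R(s) \Vert_p$ is the heart of the matter and the main obstacle. Using the $S_2$-symmetric Fock-space expansion of \cite{lechner2008construction}, the map $\Xi(s)$ splits according to particle number: the $0$-particle term is precisely $\Xi_0$, while each $n$-particle contribution with $n \geq 1$ is controlled by Schatten norms of a one-particle operator carrying exponential damping $\sim e^{-ms}$, coming from the mass gap together with the width $\kappa(S_2)$ of the analyticity strip (via Bisognano-Wichmann, $\Delta^{1/4}U({\bf s})$ is an imaginary boost composed with a spacelike translation). The explicit $p$-nuclearity estimates underlying Theorem \ref{thm:gandalf} bound the total by an expression of the form $F_p(\Vert T(s) \Vert)$ with $F_p$ continuous, $F_p(0) = 1$, and $\Vert T(s) \Vert \to 0$ as $s \to \infty$ (a determinant- or product-type bound for $S_2 \in \mathcal{S}_0^-$). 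The genuinely technical step is to extract from those estimates a summable majorant for the $n$-particle contributions that is uniform for $s \geq s_{\min}$, so that the limit may be passed through the series term by term; granting this, $\Vert R(s) \Vert_p \to 0$, hence $\limsup_s \Vert \Xi(s) \Vert_p \leq 1$, and together with the lower bound this gives $\Vert \Xi(s) \Vert_p \to 1$.
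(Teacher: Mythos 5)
Your proposal is correct in outline and is in fact considerably more explicit than the paper's own proof, which consists of a single sentence deferring the whole statement to \cite{lechner2008construction} and the follow-up works \cite{alazzawi2017inverse, lechner2018approximation}. The two ingredients you supply that can be checked without the literature are both sound. The lower bound $\Vert \Xi(s) \Vert_p \geq \Vert \Xi(s) \Vert \geq \Vert \Xi(s)\mathbbm{1} \Vert = \Vert \Omega \Vert = 1$, using $U({\bf s})\Omega = \Omega = \Delta^{1/4}\Omega$ together with the chain $\Vert \cdot \Vert \leq \Vert \cdot \Vert_1 \leq \Vert \cdot \Vert_p$ for $0 < p \leq 1$, is a more elementary route to the inequality $z_p \geq 1$ than the one the paper records, which passes through the dominating separable functional of Lemma \ref{lem:HS-lem3} and Lemma \ref{lem:HS-lem4}. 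The rank-one subtraction $\Xi(s) = \Xi_0 + R(s)$ with $\Vert \Xi_0 \Vert_p = 1$, combined with the $p$-quasi-norm inequality $\Vert \Xi(s) \Vert_p^p \leq 1 + \Vert R(s) \Vert_p^p$, is a legitimate reduction; and your remark that $\Vert (A_s - \omega(A))\Omega \Vert$ is independent of $s$, so that the decay must be produced by $\Delta^{1/4}U({\bf s})$ rather than by clustering, correctly identifies the mechanism: on the one-particle space this operator acts by analytic continuation of the boost and yields damping of order $e^{-ms\cosh\theta}$.

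The only gap is the one you name yourself: you do not prove $\Vert R(s) \Vert_p \to 0$, but ``grant'' a summable majorant for the $n$-particle contributions, uniform in $s$. Since the paper's proof contains nothing beyond the citation, you cannot be faulted relative to it; still, as a standalone argument the lemma is not yet established at this point. In the estimates of \cite{lechner2008construction, alazzawi2017inverse, lechner2018approximation}, the ($p$-)nuclear norm of $\Xi(s)$ is bounded by an explicit function of Schatten-type norms of a one-particle operator $T(s)$ decaying exponentially for $s > s_{\text{min}}$ when $S_2 \in \mathcal{S}_0^-$, and the $n=0$ term of the particle-number expansion is exactly your $\Xi_0$, so extracting the majorant is routine but must actually be carried out. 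Note finally that your decomposition, while structurally illuminating in explaining why the limit is exactly $1$, is a slight detour for the statement itself: once the literature supplies an upper bound $\Vert \Xi(s) \Vert_p \leq G_p(s)$ with $G_p(s) \to 1$, your lower bound already closes the squeeze with no subtraction needed.
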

\begin{proof}
	This fact can be noticed by reading the thesis \cite{ lechner2008construction}   and the subsequent works \cite{alazzawi2017inverse,  lechner2018approximation}.
\end{proof}

	\begin{remark}
		By studying carefully  \cite{alazzawi2017inverse}, \cite{lechner2008construction} and \cite{lechner2018approximation}, it should also follow  that the maps
\[
\Sigma (s) \colon \mathcal{A}(W_R)_{\text{sa}} \Omega \to \hil \,, \quad \Sigma(s)A\Omega = \Delta^{1/4} U({\bf s}) A \Omega \,, \quad s>0 \,,
\]
are $p$-nuclear for all $p>0$ and $s >s_{\text{min}}  $.
	\end{remark}


			We now compare the asymptotic behaviour of two different entanglement measures in the setting of these integrable models.  If we denote by $E_R(s)$ the vacuum relative entropy of entanglement corresponding to the wedge inclusion mentioned in Theorem \ref{thm:gandalf}, then by Lemma~\ref{lem:HS-lem3} and Lemma~\ref{lem:HS-lem4} it is easy to notice that \cite{hollands2018entanglement}
		\[
		E_R(s) \leq \ln \Vert  \Xi(s) \Vert_1 \to 0 \,, \quad s \to + \infty \,.
		\]
		However, if we denote by $E_I(s)$ the associated mutual information and we  apply the estimate \eqref{eq:mt} for any $0 < p < 1$ then we can state at most that
		\[
		\limsup_{s \to + \infty} E_I(s) \leq 1/e \,.
		\]
		Analogously, we can apply \eqref{eq:otani} to estimate the intermediate entanglement entropy in the limit $s \to + \infty$. With the same arguments and similar notation we find
				\[
		\limsup_{s \to + \infty} I(s) \leq 1/e \,.
		\]

	
	

	\section{Conclusions} \label{sec:8}
	
	We close this work with additional remarks that might be useful for future research in this area. In particular, we list a few conditions which are equivalent to the finiteness of the canonical entanglement entropy. \\

	The techniques of  \autoref{sec:5} rely on the presence of a  separable state $\sigma$ on the bipartite system $A\otimes B $ that dominates $\omega$. If $F$ is an intermediate type I factor $A\subseteq F \subseteq B'$ arising from the natural isomorphism $A \vee B \cong A \otimes B $ as in Definition \ref{def:ip}, then it  is possible to construct a separable functional on $  B(\hil) \otimes B(\hil) \cong F \vee F' $  that dominates $\omega$ on $F $ and on $F'$ by use of generalized conditional expectations \cite{accardi1982conditional}. \\

	More specifically, let $(A,B)$ be a standard split pair with finite $p$-partition function for some $0<p < 1$. As discussed in  \cite{accardi1982conditional, ohya2004quantum}, one has two $\omega$-preserving cpu maps, say $\varepsilon$  and $\varepsilon'$, induced by the inclusions  $A\subseteq F$ and $B \subseteq  F'$ respectively.  By the isomorphism $B(\hil) \otimes B(\hil) \cong  F \vee F' = B(\hil) $ we can then define a map $\varepsilon\otimes \varepsilon'  $ on $ B(\hil) $ extending both $\varepsilon $ and $ \varepsilon' $. If $\sigma$ is the dominating separable functional from Lemma \ref{lem:HS-lem4}, then ${\sigma}_0=\sigma \cdot (\varepsilon\otimes \varepsilon')$ dominates $ {\omega}_0= \omega \cdot  (\varepsilon\otimes \varepsilon')$. Notice that $\omega = {\omega_0} $ on $F$ and on $F'$, but in general not on $B(\hil)$. The functional ${\sigma_0} = \sum_j \phi_j\cdot\varepsilon \otimes  \psi_j\cdot\varepsilon'$   is separable with $\sum_j \lVert\phi_j\cdot\varepsilon\rVert^p \lVert\psi_j\cdot\varepsilon'\rVert^p=\mu_p  $ finite  (cf. Lemma \ref{lem:HS-lem4} for notation), and	in the notation of Theorem~\ref{thm:mutual} we have 
	\[
	E_I({\omega}_0) = S({\omega_0}\Vert \omega_F \otimes \omega_{F'}) \leq c_p z_p + \eta(z_p-1) - \eta(z_p)  \,,
	\]
	where the r.h.s. is finite by assumption. Unfortunately, this does not imply the finiteness of the canonical entanglement entropy since ${\omega_0}$ is not a pure state on $B(\hil)$. But we can make use of generalized conditional expectations to give an equivalent description of the canonical entanglement entropy. In particular, by use of  equation \eqref{eq:mi} and Lemma \ref{lem:entropy}   we can claim  that 
	\[
	2 E_C(\omega) = S_{B(\hil)}(\omega \Vert \omega_F \otimes \omega_{F'})=2 S_{B(\hil)}({\omega_0} )   = 2 H_{{\omega_0}} (F)  \,,
	\]
	with $H_{{\omega}_0} (F) = H_{{\omega}_0}^{B(\hil)} (F) $ the conditional entropy. The authors of \cite{dutta2021canonical} argued on grounds of physical arguments that
	\[
	E_C(\omega) \approx E_I^{F \vee B'}(\omega) = S(\omega \Vert \omega_F \otimes \omega_{B'}) \,,
	\] 
	and indeed it is reasonable to expect that the results of this work can be properly strengthened. For example, Theorem \ref{thm:mutual} implies that $E_I^{F \vee B'}(\omega)$ is finite if the $\omega$-preserving generalized conditional expectation from $F \vee  B' $ onto $ A\vee B'$ is a separable operation in the terminology of  Definition \ref{def:sep}. Another  strategy could be that of estimating the entanglement entropy of some energy cutoff of the vacuum state like  in \cite{otani2018toward} and then to operate some limit procedure.  A different approach is the one of \cite{longo2021neumann}, in which the authors use the language of standard subspaces. Unfortunately, even if completely rigorous, this last work heavily depends on the structure of the free Fermi nets, and a generalization of it seems quite challenging up to now. In the context of conformal nets, the authors expect the trace-class property to be a good assumption to start with \cite{longo2020lectures, otani2018toward}.

	\section*{Acknowledgements} 
	We thank Roberto Longo for suggesting us the problem. We also thank Yoh Tanimoto and Gandalf Lechner for explanations concerning the last section and Yoh Tanimoto, Gandalf Lechner and Ko Sanders for comments on an earlier version.
	
	BW is part of the INdAM Doctoral programme in Mathematics and/or Applications cofunded by  Marie  Sklodowska-Curie actions (INdAM-DP-COFUND 2015) and has received funding from the European Union's 2020 research and innovation programme under the Marie Sklodowska-Curie grant agreement No 713485. On behalf of all authors, the corresponding author states that there is no conflict of interest.


\begin{thebibliography}{40}
		
		
		
		
	\bibitem{accardi1982conditional}
		Accardi, Luigi and  Cecchini, Carlo.
		\newblock “Conditional expectations in von Neumann algebras and a theorem of Takesaki.”
		\newblock {\em Journal of Functional Analysis} 45.2 (1982): 245--273.
		
		\bibitem{alazzawi2017inverse}
		Alazzawi, Sabina and Lechner, Gandalf.
		\newblock “Inverse scattering and local observable algebras in integrable
		quantum field theories.”
		\newblock {\em Communications in Mathematical Physics} 354 (2017): 913--956.
		
			\bibitem{bratteli2012operator1}
		Bratteli, Ola and William Robinson, Derek.
		\newblock “Operator Algebras and Quantum Statistical Mechanics 1: $C^*$-and $W^*$-Algebras. Symmetry Groups. Decomposition of States.”
		\newblock {\em Springer Science \& Business Media} 2.2 (2002).
		
		
		\bibitem{buchholz1974product}
		Buchholz, Detlev.
		\newblock “Product states for local algebras.”
		\newblock {\em Communications in Mathematical Physics}  36.4 (1974): 287--304. 
		
		
		\bibitem{buchholz1986causal}
		Buchholz, Detlev  and Wichmann, Eyvind~H.
		\newblock “Causal independence and the energy-level density of states in local
		quantum field theory.”
		\newblock {\em Communications in Mathematical Physics}, 106.2 (1986): 321--344.
		
		
		
		\bibitem{buchholz1990nuclear}
		Buchholz, Detlev et al.
		\newblock “Nuclear maps and modular structures. I. General properties.”
		\newblock {\em Journal of Functional Analysis} 88.2 (1990): 233--250.
		
		\bibitem{buchholz1990nuclear2}
		Buchholz, Detlev et al.
		\newblock “Nuclear maps and modular structures II: Applications to quantum field
		theory.”
		\newblock {\em Communications in Mathematical Physics} 129.1 (1990): 115--138.
		
		\bibitem{buchholz2007nuclearity}
		Buchholz, Detlev et al. 
		\newblock “Nuclearity and thermal states in conformal field theory.”
		\newblock {\em Communications in Mathematical Physics} 270.1 (2007): 267--293.
		
		
		
		
		
		
	\bibitem{casini2007c}
		Casini, Horacio and Huerta, Marina.
		\newblock “A c-theorem for entanglement entropy.”
		\newblock {\em Journal of Physics A: Mathematical and Theoretical}
		40.25 (2007): 7031--7036.
		
		
		
		
		\bibitem{ciolli2020information}
		Ciolli, Fabio et al.
		\newblock “The information in a wave.”
		\newblock {\em Communications in Mathematical Physics} 379.3 (2020): 979--1000.
		
		
		
		
		\bibitem{connes1987dynamical}
		Connes, Alain et al. 
		\newblock “Dynamical entropy of C*-algebras and von Neumann algebras.”
		\newblock {\em Communications in Mathematical Physics} 112.4 (1987): 691--719.
		
		
		\bibitem{d2001conformal}
		D'antoni, Claudio et al.
		\newblock “Conformal nets, maximal temperature and models from free probability.”
		\newblock {\em Journal of Operator Theory} 45.1 (2001): 195--208.
		

		
		\bibitem{doplicher1984standard}
		Doplicher, Sergio and  Longo, Roberto.
		\newblock “Standard and split inclusions of von Neumann algebras.” 
		\newblock {\em Inventiones mathematicae} 75.3 (1984): 493--536.
		
		
		
		
		\bibitem{dutta2021canonical}
		Dutta, Souvik and  Faulkner, Thomas.
		\newblock “A canonical purification for the entanglement wedge cross-section.”, (2019).
		\newblock {\em Journal of High Energy Physics} 2021.3 (2021): 1--49.  
		
		
			\bibitem{fredenhagen1985modular}
			 Fredenhagen, Klaus.
			\newblock “On the modular structure of local algebras of observables.”
			\newblock {\em Communications in Mathematical Physics}, 97.1-2 (1985): 79--89.
		
		
		
		
		\bibitem{haag1965does}
		Haag, Rudolg and A. Swieca, Jorge.
		\newblock “When does a quantum field theory describe particles?”
		\newblock {\em Communications in Mathematical Physics}, 1 (1965): 308--320.
		\newblock \url{https://projecteuclid.org/euclid.cmp/1103758947}.
		
		
		\bibitem{hollands2018entanglement}
		Hollands, Stefan and Sanders, Ko.
		\newblock “Entanglement measures and their properties in quantum field
		theory.” 
		\newblock {\em Springer} 34 (2018).
		
		\bibitem{jones1983index}
		Jones, Vaughan F. R.
		\newblock “Index for subfactors.”
		\newblock {\em Inventiones Mathematicae} 72 (1983): 1--26. 
		
		
		\bibitem{LecPhD}
		Lechner, Gandalf.
		\newblock “On the construction of quantum field theories with factorizing
		S-matrices.”, (2006).
		\newblock \url{https://arxiv.org/abs/math-ph/0611050}.
		
		\bibitem{lechner2008construction}
		Lechner, Gandalf.
		\newblock “Construction of quantum field theories with factorizing S-matrices.”
		\newblock {\em Communications in Mathematical Physics}  277.3 (2008): 821--860.
		
		\bibitem{lechner2016modular}
		Lechner, Gandalf and Sanders, Ko. 
		\newblock “Modular nuclearity: A generally covariant perspective.”
		\newblock {\em Axioms}  5.1 (2016).
		
		\bibitem{lechner2018approximation}
		Lechner, Gandalf et al.
		\newblock “Approximation numbers of weighted composition operators.”
		\newblock {\em Journal of Functional Analysis}  274.7 (2018): 1928--1958.
		
		
		
		
		

		
			\bibitem{longo2020lectures}
		Longo, Roberto.
		\newblock “Lectures on conformal nets. Part II.”
		\newblock  \url{http://www.mat.uniroma2.it/longo/Lecture-Notes_files/LN-Part2.pdf} (2020). Accessed 26 October 2021.
		
		
		
		\bibitem{longo2020entropy}
		Longo, Roberto.
		\newblock “Entropy distribution of localised states.”
		\newblock {\em Communications in Mathematical Physics}  373.2 (2020): 473--505.
		
		\bibitem{longo2021neumann}
		Longo, Roberto and  Xu, Feng.
		\newblock “Von Neumann entropy in QFT.”
		\newblock {\em Communications in Mathematical Physics}  381.3 (2021): 1031--1054.
		
		
		
		
		\bibitem{morinelli2018conformal}
		Morinelli, Vincenzo et al.
		\newblock “Conformal covariance and the split property.”
		\newblock {\em Communications in Mathematical Physics} 357.1 (2018): 379--406.
		
		\bibitem{morinelli2021modular}
		Morinelli, Vincenzo et al.
		\newblock “Modular operator for null plane algebras in free fields.” (2021).
		\newblock \url{https://arxiv.org/abs/2107.00039}. Accessed 27 August 2021.
		
		\bibitem{narnhofer1994entropy}
		Narnhofer, Heide.
		\newblock “Entropy density for relativistic quantum field theory.”
		\newblock {\em Reviews in Mathematical Physics} 6 (1994): 1127--1146.
		
		\bibitem{narnhofer2002entanglement}
		Narnhofer, Heide.
		\newblock Entanglement, split and nuclearity in quantum field theory.
		\newblock {\em Reports on Mathematical Physics} 50.1  (2002): 111--123..
		
		
		

		
		\bibitem{ohya2004quantum}
		Ohya, Masanori and  Petz, D{\'e}nes.
		\newblock “Quantum entropy and its use.”
		\newblock {\em Springer Science \& Business Media} (2004).
		
		\bibitem{otani2018toward}
		Otani, Yul and  Tanimoto, Yoh.
		\newblock “Toward entanglement entropy with UV-cutoff in conformal nets.”
		\newblock {\em Annales Henri Poincar{\'e}} 19.6 (2018): 1817--1842.
		
		
		
		\bibitem{panebianco2021loop}
		Panebianco, Lorenzo.
		\newblock “Loop groups and QNEC.”
		\newblock {\em Communications in Mathematical Physics} 387.1 (2021): 397--426.
		
		\bibitem{shirokov2017tight} 
Shirokov, M. E.
\newblock “Tight uniform continuity bounds for the quantum conditional mutual information, for the Holevo quantity, and for capacities of quantum channels.”
\newblock {\em Journal of Mathematical Physics} 58, 102202 (2017)
		

			\bibitem{takesaki2013theory} 
Takesaki, Masamichi. 
\newblock  “Theory of operator algebras II.”
\newblock {\em Springer Science \& Business Media} (2013).

		\bibitem{vedral1998entanglement}
Vedral, Vlatko and Plenio, Martin B..
\newblock “Entanglement measures and purification procedures.”
\newblock {\em Physical Review A} 57.3 (1998): 16--19.

			\bibitem{wassermann1998operator}
     Wassermann, Antony.
		\newblock “Operator algebras and conformal field theory III. Fusion of positive energy representations of LSU(N) using bounded operators.”
		\newblock {\em arXiv preprint math/9806031} (1998).
		
	\end{thebibliography}
\end{document}